\documentclass[11pt,letterpaper]{article}
\usepackage[margin=1in]{geometry}

\usepackage{thmtools}
\usepackage{thm-restate}
\usepackage{bm}
\usepackage{mathrsfs}           %
\usepackage{tikz}\usetikzlibrary{positioning,chains,fit,shapes,calc}

\usepackage{subcaption}
\usepackage{algorithm}
\usepackage{algorithmic}
\usepackage[ruled,linesnumbered,algo2e]{algorithm2e}
\usepackage{enumerate}

\usepackage{amsmath,amssymb, amsthm}    %
\usepackage{verbatim}           %
\usepackage{xspace}             %
\usepackage{graphicx,float}     %
\usepackage{ifthen,calc}        %
\usepackage{textcomp}           %
\usepackage{fancybox}           %
\usepackage{hhline}             %
\usepackage{float}              %

\usepackage[vflt]{floatflt}     %
\usepackage[small,compact]{titlesec}
\usepackage{setspace}
\usepackage{color}
\definecolor{ForestGreen}{rgb}{0.1333,0.5451,0.1333}
\usepackage{thumbpdf}
\usepackage[letterpaper,
colorlinks,linkcolor=ForestGreen,citecolor=ForestGreen,
backref,
bookmarks,bookmarksopen,bookmarksnumbered]
{hyperref}

\newtheorem{theorem}{Theorem}
\newtheorem{claim}{Claim}
\newtheorem{proposition}{Proposition}

\newtheorem{definition}{Definition}

\newtheorem{remark}{Remark}
\newtheorem{lemma}{Lemma}
\newtheorem{fact}{Fact}

\usepackage{float}
\floatstyle{plain}\newfloat{myfig}{t}{figs}[section]
\floatname{myfig}{\textsc{Figure}}
\floatstyle{plain}\newfloat{myalg}{H}{algs}[section]
\floatname{myalg}{}
\setlength{\fboxrule}{0.8pt}    %

\newcommand{\defeq}{:=}
\newcommand{\norm}[1]{\left\lVert#1\right\rVert}
\newcommand{\inprod}[2]{\left\langle#1, #2\right\rangle}

\newcommand{\argmin}{\textup{argmin}} 
\newcommand{\R}{\mathbb{R}}

\newcommand{\N}{\mathbb{N}}
\newcommand{\diag}[1]{\textbf{\textup{diag}}\left(#1\right)}
\newcommand{\half}{\frac{1}{2}}

\newcommand{\E}{\mathbb{E}}

\newcommand{\Nor}{\mathcal{N}}
\newcommand{\smax}{\textup{smax}}
\newcommand{\sqmax}{\textup{sqmax}}

\newcommand{\xset}{\mathcal{X}}

\newcommand{\ma}{\mathbf{A}}

\newcommand{\ai}{\ma_{i:}}
\newcommand{\aj}{\ma_{:j}}
\newcommand{\id}{\mathbf{I}}
\newcommand{\tO}{\widetilde{O}}

\newcommand{\mw}{\mathbf{W}}
\newcommand{\Par}[1]{\left(#1\right)}
\newcommand{\Brack}[1]{\left[#1\right]}

\newcommand{\Abs}[1]{\left|#1\right|}

\newcommand{\mzero}{\mathbf{0}}

\newcommand{\bw}{\bar{w}}

\newcommand{\msig}{\boldsymbol{\Sigma}}
\newcommand{\Sym}{\mathbb{S}}
\newcommand{\PSD}{\Sym_{\succeq 0}}
\newcommand{\PD}{\Sym_{\succ 0}}
\newcommand{\sx}{x^\star}
\newcommand{\pstar}{p^\star}
\newcommand{\estar}{e^\star}
\newcommand{\mg}{\mathbf{G}}

\newcommand{\xin}{x_{\textup{in}}}
\newcommand{\xout}{x_{\textup{out}}}

\newcommand{\wt}{\widetilde}

\newcommand{\tp}{\tilde{p}}
\newcommand{\zalg}{\zeta_{\text{alg}}}
\newcommand{\NS}{\textup{NS}}
\newcommand{\Cp}{C_{\textup{prog}}}
\newcommand{\HRS}{\mathsf{HalfRadiusSparse}}
\newcommand{\HRSN}{\mathsf{HalfRadiusSparseNoisy}}
\newcommand{\OStep}{\mathsf{StepOracle}}
\newcommand{\ostep}{\mathcal{O}_{\textup{step}}}
\newcommand{\onstep}{\mathcal{O}_{\textup{nstep}}}
\newcommand{\ONStep}{\mathsf{StrongStepOracle}}
\newcommand{\tDelta}{\widetilde{\Delta}}

\newcommand{\trunc}{\textup{trunc}}
\newcommand{\mproj}{\boldsymbol{\Pi}}
\newcommand{\wsinf}{w^\star_\infty}

\newcommand{\Apot}{\Phi_{2}}
\newcommand{\Bpot}{\Phi_{\sqmax}}
\newcommand{\tApot}{\widetilde{\Phi}_{2}}
\newcommand{\tBpot}{\widetilde{\Phi}_{\sqmax}}

\newcommand{\codeInput}{\textbf{Input:} }
\newcommand{\codeOutput}{\textbf{Output:} }
\newcommand{\codeReturn}{\textbf{Return:} }
\newcommand{\codeBreak}{\textbf{Break:} }
\newcommand{\codeLineSace}{\BlankLine}

\definecolor{burntorange}{rgb}{0.8, 0.33, 0.0}

  \usepackage{nth}
  \usepackage{intcalc}

\usepackage{url}

\begin{document}

	\begin{titlepage}
		\def\thepage{}
		\thispagestyle{empty}
		
		\title{Semi-Random Sparse Recovery in Nearly-Linear Time} 
		
		\date{}
		\author{
			Jonathan A.\ Kelner\thanks{MIT, {\tt kelner@mit.edu}.  Supported in part by NSF awards CCF-1955217,  CCF-1565235, and DMS-2022448.}
			\and
			Jerry Li\thanks{Microsoft Research, {\tt jerrl@microsoft.com}. This work was partially done while visiting the Simons Institute for the Theory of Computing.}
			\and
			Allen Liu\thanks{MIT, {\tt cliu568@mit.edu}. This work was partially done while working as an intern at Microsoft Research, and was supported in part by an NSF Graduate Research Fellowship and a Fannie and John Hertz Foundation Fellowship.}
			\and
			Aaron Sidford\thanks{Stanford University, {\tt sidford@stanford.edu}. Supported in part by a Microsoft Research Faculty Fellowship, NSF CAREER Award CCF-1844855, NSF Grant CCF-1955039, a PayPal research award, and a Sloan Research Fellowship.}
			\and
			Kevin Tian\thanks{Stanford University, {\tt kjtian@stanford.edu}. This work was partially done while visiting the Simons Institute for the Theory of Computing, and was supported in part by a Google Ph.D.\ Fellowship, a Simons-Berkeley VMware Research Fellowship, a Microsoft Research Faculty Fellowship, NSF CAREER Award CCF-1844855, NSF Grant CCF-1955039, and a PayPal research award.}
		}
		
		\maketitle

\abstract{
Sparse recovery is one of the most fundamental and well-studied inverse problems.
Standard statistical formulations of the problem are provably solved by general convex programming techniques and more practical, fast (nearly-linear time) iterative methods. However, these latter ``fast algorithms'' have previously been observed to be brittle in various real-world settings.

We investigate the brittleness of fast sparse recovery algorithms to generative model changes through the lens of studying their robustness to a ``helpful'' semi-random adversary, a framework which tests whether an algorithm overfits to input assumptions. We consider the following basic model: let $\ma \in \R^{n \times d}$ be a measurement matrix which contains an \emph{unknown} subset of rows $\mg \in \R^{m \times d}$ which are bounded and satisfy the restricted isometry property (RIP), but is otherwise arbitrary. Letting $\sx \in \R^d$ be $s$-sparse, and given either exact measurements $b = \ma \sx$ or noisy measurements $b = \ma \sx + \xi$, we design algorithms recovering $\sx$ information-theoretically optimally in nearly-linear time. We extend our algorithm to hold for weaker generative models relaxing our planted RIP row subset assumption to a natural weighted variant, and show that our method's guarantees naturally interpolate the quality of the measurement matrix to, in some parameter regimes, run in sublinear time.

Our approach differs from that of prior fast iterative methods with provable guarantees under semi-random generative models \cite{cheng2018non, li2020wellconditioned}, which typically separate the problem of learning the planted instance from the estimation problem, i.e.\ they attempt to first learn the planted ``good'' instance (in our case, the matrix $\mg$). However, natural conditions on a submatrix which make sparse recovery tractable, such as RIP, are NP-hard to verify and hence first learning a sufficient row reweighting appears challenging. We eschew this approach and design a new iterative method, tailored to the geometry of sparse recovery, which is provably robust to our semi-random model. Our hope is that our approach opens the door to new robust, efficient algorithms for other natural statistical inverse problems.
}
 		
	\end{titlepage}
	\pagenumbering{gobble}
	\setcounter{tocdepth}{2}
	{
		\hypersetup{linkcolor=black}
		\tableofcontents
	}
	\newpage
	\pagenumbering{arabic}

\section{Introduction}
\label{sec:intro}

Sparse recovery is one of the most fundamental and well-studied inverse problems, with numerous applications in prevalent real-world settings \cite{eldar2012compressed}. In its most basic form, we are given an entrywise Gaussian measurement matrix $\mg \in \R^{m \times d}$ and measurements $b = \mg \sx$ for an unknown $s$-sparse $\sx \in \R^d$; the goal of the problem is to recover $\sx$. 
Seminal works by Cand\`{e}s, Romberg, and Tao~\cite{candes2005signal,candes2006near,candes2006stable} showed that even when the linear system in $\mg$ is extremely underconstrained, recovery is tractable so long as $m = \Omega (s \log d)$. Further they gave a polynomial-time algorithm known as \emph{basis pursuit} based on linear programming recovering $\sx$ in this regime.

Unfortunately, the runtime of linear programming solvers, while polynomial in the size of the input, can still be prohibitive in many high-dimensional real-world settings. Correspondingly, a number of alternative approaches which may broadly be considered first-order methods have been developed. These methods provably achieve similar recovery guarantees under standard generative models such as Gaussian measurements, with improved runtimes compared to the aforementioned convex programming methods. We refer to these first-order methods through as ``fast'' algorithms throughout and they may roughly be placed in the following (potentially non-disjoint) categories.
\begin{itemize}
	\item \textbf{Greedy algorithms}, e.g.~\cite{mallat1993matching,pati1993orthogonal,needell2010signal}, seek to greedily find elements in the support of the true $\sx$ using different combinatorial search criteria. 
	\item \textbf{Non-convex iterative algorithms}, e.g.~\cite{needell2009cosamp,blumensath2009iterative,blumensath2010normalized,maleki2010optimally,foucart2011hard}, directly optimize a (potentially non-convex) objective over a non-convex domain. 
	\item \textbf{Convex first-order methods}, e.g.~\cite{figueiredo2003algorithm,daubechies2004iterative,combettes2005signal,beck2009fast,becker2011nesta,negahban2012unified,agarwal2012fast} quickly solve the convex objective underlying basis pursuit using first-order methods. 
\end{itemize}

We also note that theoretically, when $n$ is sufficiently large, recent advances by \cite{BrandLSS20, BrandLLSS0W21}, also obtain fast runtimes for the relevant linear programming objective. The fastest IPM for the noiseless sparse recovery objective runs in time\footnote{We use $\tO$ to hide polylogarithmic factors in problem parameters for brevity of exposition throughout the paper.} $\tO(nd + n^{2.5})$ which is nearly-linear when $\ma$ is dense and $n \ll d^{2/3}$. For a range of (superlogarithmic, but sublinear) $n$, these runtimes are no longer nearly-linear; furthermore, these IPMs are second-order and our focus is on designing first-order sparse recovery methods, which are potentially more practical.\footnote{We also note that these IPM results do not immediately apply to natural (nonlinear) convex programs for sparse recovery under noisy observations, see Appendix~\ref{app:failure}.}

It has often been observed empirically that fast first-order methods can have large error, or fail to converge, in real-world settings~\cite{davenport2013signal,jain2014iterative,polania2014exploiting} where convex programming-based algorithms (while potentially computationally cumbersome) perform well statistically~\cite{zhang2016comparison,aich2017application}. This may be surprising, given that in theory, fast algorithms essentially match the statistical performance of the convex programming-based algorithms under standard generative assumptions.
While there have been many proposed explanations for this behavior, one compelling argument is that fast iterative methods used in practice are more brittle to changes in modeling assumptions.
We adopt this viewpoint in this paper, and develop fast sparse recovery algorithms which achieve optimal statistical rates under a \emph{semi-random adversarial model}~\cite{blum1995coloring,feige2001heuristics}, a popular framework for investigating the robustness of learning algorithms under changes to the data distribution.

\paragraph{Semi-random adversaries.} 
Semi-random adversaries are a framework for reasoning about algorithmic robustness to distributional shift. They are defined in statistical settings, and one common type of semi-random adversary is one which corresponds to generative models where data has been corrupted in a ``helpful'' or ``monotone'' way. Such a monotone semi-random adversary takes a dataset from which learning is information-theoretically tractable, and augments it with additional information; this additional information may not break the problem more challenging from an information-theoretic perspective,\footnote{There are notable exceptions, e.g.\ the semi-random stochastic block model of \cite{moitra2016robust}.} but may affect the performance of algorithms in other ways. In this paper, we consider a semi-random adversary which makes the \emph{computational problem} more difficult without affecting the problem information-theoretically, by returning a consistent superset of the unaugmented observations.
This contrasts with other adversarial models such as gross corruption~\cite{anscombe1960rejection,tukey1960survey,huber1964robust,tukey1975mathematics}, where corruptions may be arbitrary, and the corrupted measurements incorrect.
It may be surprising that a ``helpful'' adversary has any implications whatsoever on a learning problem, from either an information-theoretic or computational standpoint.

Typically, convex programming methods for statistical recovery problems are robust to these sorts of perturbations --- in brief, this is because constraints to a convex program that are met by an optimum point does not change the optimality of that point.
However, greedy and non-convex methods --- such as popular practical algorithms for sparse linear regression --- can be susceptible to semi-random adversaries.
Variants of this phenomenon have been reported in many common statistical estimation problems, such as stochastic block models and broadcast tree models~\cite{moitra2016robust}, PAC learning~\cite{blum2003machine}, matrix completion~\cite{moitra2017robustness,cheng2018non}, and principal component regression~\cite{bhaskara2021principal}.
This can be quite troubling, as semi-random noise can be thought of as a relatively mild form of generative model misspecification: in practice, the true distribution is almost always different from the models considered in theory.
Consequently, an algorithm's non-robustness to semi-random noise is suggestive that the algorithm may be more unreliable in real-world settings.

We consider a natural semi-random adversarial model for sparse recovery (see e.g.\ page 284 of \cite{AwasthiV18}), which extends the standard restricted isometry property (RIP) assumption, which states that applying matrix $\ma$ approximately preserves the $\ell_2$ norm of sparse vectors. Concretely, throughout the paper we say matrix $\ma$ satisfies the \emph{$(s, c)$-restricted isometry (RIP) property} if for all $s$-sparse vectors $v$,
\[\frac 1 c \norm{v}_2^2 \le \norm{\ma v}_2^2 \le c \norm{v}_2^2.\]
We state a basic version of our adversarial model here, and defer the statement of the fully general version to Definition~\ref{def:wrip}.\footnote{When clear from context, as it will be throughout the main sections of the paper, $s$ will always refer to the sparsity of a vector $\sx \in \R^d$ in an exact or noisy recovery problem through $\ma \in \R^{n \times d}$. For example, the parameter $s$ in Definition~\ref{def:prip} is the sparsity of the vector in an associated sparse recovery problem.} We defer the introduction of notation used in the paper to Section~\ref{sec:prelims}.

\begin{definition}[pRIP matrix]\label{def:prip}
Let $m, n, d \in \N$ be known with $n \geq m$. We say $\ma \in \R^{n \times d}$ is $\rho$-pRIP (planted RIP) if there is an (unknown) $\mg \in \R^{m \times d}$ such that each row of $\mg$ is also a row of $\ma$ and $\frac 1 {\sqrt m} \mg$ is $(\Theta(s), \Theta(1))$-RIP for appropriate constants, and $\norm{\mg}_{\max} \le \rho$. When $\rho = \tO(1)$ for brevity we say $\ma$ is pRIP.
\end{definition}

Under the problem parameterizations used in this paper, standard RIP matrix constructions satisfy $\rho = \tO(1)$ with high probability. For example, when $\mg$ is entrywise Gaussian and $m = \Theta(s\log d)$, a tail bound shows that with high probability we may set $\rho = O(\sqrt{\log d})$ to be compatible with the assumptions in Definition~\ref{def:prip}. 

pRIP matrices can naturally be thought of as arising from a semi-random adversarial model as follows. First, an RIP matrix $\mg \in \R^{m \times d}$ is generated, for example from a standard ensemble (e.g.\ Gaussian or subsampled Hadamard). An adversary inspects $\mg$, and forms $\ma \in \R^{n \times d}$ by reshuffling and arbitrarily augmenting rows of $\mg$. Whenever we refer to a ``semi-random adversary'' in the remainder of the introduction, we mean the adversary provides us a pRIP measurement matrix $\ma$.

The key recovery problem we consider in this paper is recovering an unknown $s$-sparse vector $\sx \in \R^d$ given measurements $b \in \R^n$ through $\ma$. We consider both the \emph{noiseless} or \emph{exact} setting where $b = \ma \sx$ and the \emph{noisy} setting where $b = \ma \sx + \xi$ for bounded $\xi$.
In the noiseless setting in particular, the semi-random adversary hence only gives the algorithm additional \emph{consistent} measurements of the unknown $s$-sparse vector $\sx$. In this sense, the adversary is only ``helpful,'' as it returns a superset of information which is sufficient for sparse recovery (formally, this adversary cannot break the standard restricted nullspace condition which underlies the successful performance of convex programming methods).
We note $n$ may be much larger than $m$, i.e.\ we impose no constraint on how many measurements the adversary adds. 

\paragraph{Semi-random sparse recovery in nearly-linear time.} 
We devise algorithms which match the nearly-linear runtimes and optimal recovery guarantees of faster algorithms on fully random data, but which retain both their runtime and the robust statistical performances of convex programming methods against semi-random adversaries. In this sense, our algorithms obtain the ``best of both worlds.'' We discuss and compare more extensively to existing sparse recovery algorithms under Definition~\ref{def:prip} in the following section.
We first state our result in the noiseless observation setting.
\begin{theorem}[informal, see Theorem~\ref{thm:exact}]\label{thm:exactintro}
	Let $\sx \in \R^d$ be an unknown $s$-sparse vector.
	Let $\ma \in \R^{n \times d}$ be pRIP.
	There is an algorithm, which given $\ma$ and $b = \ma \sx$, runs in time $\widetilde{O} (n d)$, and outputs $\sx$ with high probability.
\end{theorem}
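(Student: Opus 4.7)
The plan is to exploit that under the pRIP assumption, $\sx$ is the unique minimizer of the basis pursuit program $\min_x \|x\|_1$ subject to $\ma x = b$. Indeed, the scaled planted submatrix $\tfrac{1}{\sqrt{m}}\mg$ being $(\Theta(s),\Theta(1))$-RIP satisfies restricted nullspace, and since the rows of $\mg$ sit inside $\ma$, any $v$ with $\ma v = 0$ also has $\mg v = 0$, so restricted nullspace lifts from $\mg$ to $\ma$. Hence it suffices to solve this LP in $\tO(nd)$ time to a precision from which $\sx$ can be recovered by thresholding to identify its support and then solving a small, well-conditioned restricted least-squares problem.

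I would structure the algorithm as a \emph{radius-halving} outer loop around a step oracle: given $x_t$ with $\|x_t - \sx\|_2 \le R_t$, the oracle returns $x_{t+1}$ with $\|x_{t+1} - \sx\|_2 \le R_t/2$. Starting from $x_0 = 0$ and iterating $O(\log(d\rho/\eps))$ phases suffices to reach precision $\eps$, so the task reduces to implementing one halving step in $\tO(nd)$ time; the paper's \texttt{HalfRadiusSparse} and \texttt{StepOracle} macros strongly suggest this is the authors' organization as well.

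For the step oracle itself, I would apply an extragradient or mirror-prox style first-order method to the localized saddle-point formulation
\[
\min_{\|x - x_t\|_2 \le R_t} \; \max_{\pi \in \R^n} \; \|x\|_1 + \langle \pi, \ma x - b\rangle - \Phi(\pi),
\]
where $\Phi$ is a softmax-style potential on the dual smoothly approximating $\|\cdot\|_\infty$ (the natural dual-violation norm, since the planted rows satisfy $\|\mg\|_{\max}\le \rho = \tO(1)$). Each inner iteration reduces to two matrix--vector products with $\ma$ and $\ma^\top$ at cost $O(nd)$, and only $\tO(1)$ inner steps per phase should be needed for constant-factor radius contraction, yielding the target runtime.

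The main obstacle is proving the halving guarantee in the face of the adversary, since a standard primal--dual rate depends on an operator norm of $\ma$ on sparse vectors that the adversary can inflate arbitrarily. My plan is to combine two structural facts. First, restricted nullspace of $\mg$ supplies a dual optimality certificate \emph{supported entirely on planted rows}, so the primal--dual gap can be bounded in terms of how well the iterate $\pi$ matches a planted-supported certificate rather than by any uniform spectral bound. Second, the entrywise bound $\rho = \tO(1)$ together with the softmax potential caps the marginal contribution of any single row to the primal gradient, replacing per-row operator-norm control with a soft cap that applies uniformly to planted and adversarial rows alike. Together these should allow the softmax dual dynamics to implicitly concentrate their mass on the informative (planted) rows without the algorithm ever knowing which rows those are. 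Making this decoupling quantitative --- choosing the step size, localization radius, and $\Phi$ so that adversarial rows enter only through the soft cap while the planted rows deliver strong convexity in the primal error --- is where I expect the real technical content to live.
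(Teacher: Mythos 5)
Your outer structure (radius halving driven by a step oracle) does match the paper's, and you have correctly located the obstruction: the adversary destroys the restricted smoothness/strong convexity that standard first-order analyses need. But the proposal stops exactly where the paper's technical content begins, and the two ingredients you offer in its place do not close the gap. The paper's step oracle (Definition~\ref{def:ostep}) must return weights $w$ satisfying \emph{both} the progress condition \eqref{eq:prog} \emph{and} a short-flat decomposition \eqref{eq:shortflat} of $\ma^\top \diag{w}\Delta$ into an $\ell_2$-piece of size $O(1)$ plus an $\ell_\infty$-piece of size $O(1/\sqrt{s})$; it is this flatness, played against the $\ell_1$-ball constraint set $\xset$ via H\"older in Lemma~\ref{lem:combinealg}, that lets the projection filter out the huge motion orthogonal to $\sx - x_t$. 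Your ``soft cap on the marginal contribution of any single row'' coming from $\rho = \tO(1)$ and a softmax dual cannot substitute: a single adversarial row $a_i$ can have $|\Delta_i| = \Theta(\sqrt{s}\rho)$ and hence contribute up to $\Theta(\sqrt{s}\rho^2)$ in $\ell_\infty$ to the step, vastly above the $1/\sqrt{s}$ threshold, so entrywise boundedness gives nothing resembling flatness. Flatness is a global property of the aggregate weighted step, and the paper must prove separately that suitable weights exist at all (via the RIP equivalence, using the shelling Lemma~\ref{lem:shell} and Lemma~\ref{lem:atbound}) and that they can be found efficiently without knowing the planted rows (the potential argument of Lemma~\ref{lem:bgrowth} with the constrained $\sqmax$ potential $\Bpot$). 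Your proposal contains no mechanism that forces the dual/softmax dynamics to produce such weights, and indeed your localization is an $\ell_2$ ball, which discards the $\ell_1$-ball geometry (thin in generic directions, thick in numerically sparse ones) on which the whole filtering argument rests.

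The quantitative claims are also unsupported where they matter. The assertion that $\tO(1)$ extragradient steps per phase suffice requires converting a primal--dual gap into contraction of $\norm{x-\sx}_2$, which needs restricted strong convexity that only the \emph{unknown} planted rows supply; a mirror-prox rate over your localized saddle point is governed by the bilinear coupling constant (roughly $\rho\sqrt{s}R$ times the dual diameter), and saying the dual ``implicitly concentrates on the informative rows'' is precisely the statement that has to be proved. Note the paper does not run a few full-gradient iterations at all: Algorithm~\ref{alg:ostep} performs on the order of $n\wsinf s\rho^2\log d$ cheap stochastic single-row updates, each solving a one-dimensional concave subproblem, and only because $\wsinf \approx 1/m$ with $m = \Theta(s\log d)$ does this total $\tO(nd)$ per phase. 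So while your high-level plan points in the right direction, the step that ``is where the real technical content lives'' is exactly the missing proof, and the specific mechanism you propose in its place (per-row softmax capping plus a planted-supported dual certificate) would not yield the halving guarantee as stated.
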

Since our problem input is of size $nd$, our runtime in Theorem~\ref{thm:exactintro} is nearly-linear in the problem size. We also extend our algorithm to handle the noisy observation setting, where we are given perturbed linear measurements of $\sx$ from a pRIP matrix.
\begin{theorem}[informal, see Theorem~\ref{thm:noisy}]\label{thm:noisyintro}
	Let $\sx \in \R^d$ be an unknown $s$-sparse vector, and let $\xi \in \R^n$ be arbitrary.
	Let $\ma \in \R^{n \times d}$ be pRIP.
	There is an algorithm, which given $\ma$ and $b = \ma \sx + \xi$, runs in time $\widetilde{O} (n d)$, and with high probability outputs $x$ satisfying
	\[
	\norm{x - \sx}_2 \leq O \left( \frac{1}{\sqrt{m}} \norm{\xi}_{2, (m)} \right) \; ,
	\]
	where $\norm{\xi}_{2, (m)}$ denotes the $\ell_2$ norm of the largest $m$ entries of $\xi$ by absolute value.
\end{theorem}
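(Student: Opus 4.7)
My plan is to reduce Theorem~\ref{thm:noisyintro} to a noisy analogue of the half-radius primitive underlying Theorem~\ref{thm:exactintro}, and iterate it. Concretely, I aim to build a procedure $\HRSN$ which, given an $O(s)$-sparse iterate $x_t$ with $\|x_t - \sx\|_2 \le R$, produces an $O(s)$-sparse $x_{t+1}$ with
\[
\|x_{t+1} - \sx\|_2 \;\le\; \max\Par{\tfrac{1}{2} R,\; C\eta}, \qquad \eta \defeq \tfrac{1}{\sqrt m}\|\xi\|_{2,(m)},
\]
in $\tO(nd)$ time, where $C$ is an absolute constant. Starting from $x_0 = 0$, so $R_0 \le \|\sx\|_2$, and iterating $T = O(\log(\|\sx\|_2 / \eta))$ times yields the claimed bound; since each call is nearly-linear and $T$ is polylogarithmic, the total runtime is $\tO(nd)$.

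\textbf{Tightness of $\eta$.} The rate $\eta$ is information-theoretically unavoidable under the semi-random adversary: the adversary may plant $\mg$ to be the $m$ rows where $|\xi_i|$ is largest, so the effective noise on the RIP submatrix has $\ell_2$ mass $\|\xi\|_{2,(m)}$. Standard RIP lower bounds then force error $\Theta(\tfrac{1}{\sqrt m}\|\xi\|_{2,(m)})$ even for an algorithm that is handed $\mg$ for free. My target is therefore to match $\eta$ without ever identifying $\mg$.

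\textbf{Constructing $\HRSN$.} Write $\Delta \defeq \sx - x_t$, which is $O(s)$-sparse with $\|\Delta\|_2 \le R$, and let $r \defeq b - \ma x_t = \ma\Delta + \xi$. I will implement $\HRSN$ by clipping $r$ entrywise at a threshold $\tau = \Theta(\rho\sqrt{s}\, R)$ to obtain $\tilde r$, then feeding $\tilde r$ into the noiseless half-radius step underlying Theorem~\ref{thm:exactintro}, followed by a hard-thresholding projection $H_{O(s)}$ onto $O(s)$-sparse vectors. The choice of $\tau$ is forced by the row-boundedness assumption: on every good row $i$ (a row of $\mg$), $|(\mg\Delta)_i| \le \rho\|\Delta\|_1 \le \rho\sqrt{O(s)}\,R = \tau$ using $\|\mg\|_{\max}\le \rho$ and sparsity of $\Delta$. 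Hence clipping is a no-op on $\mg\Delta$ and only truncates noise entries, and the post-clipping noise on $\mg$ has $\ell_2$ mass at most $\|\xi_{\mg}\|_2 \le \|\xi\|_{2,(m)} = \sqrt m\,\eta$. Feeding the restriction of $\tilde r$ to the $\mg$-block into the noiseless RIP half-radius argument yields the desired halving bound with additive slack $O(\eta)$.

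\textbf{Main obstacle and runtime.} The crux is controlling the contribution of the adversarial rows $\ma \setminus \mg$: since the adversary chooses these rows freely, $\ma^\top \tilde r$ could in principle pick up spurious mass along arbitrary directions. However, after clipping, every adversarial-row coordinate of $\tilde r$ has magnitude at most $\tau$, and I expect the proof to combine this entrywise $\ell_\infty$ bound with the hard-thresholding step $H_{O(s)}$ via a uniform covering argument over $O(s)$-sparse directions, showing that the worst adversarial contribution to the update on any $O(s)$-sparse target is at most $\tfrac14 R$ and hence absorbs into the halving bound. Establishing this covering estimate uniformly over the unknown location of the $\mg$-rows is the main technical difficulty; once in hand, the per-iteration work is dominated by two matrix--vector products with $\ma$ plus an $O(n)$ clipping/thresholding pass, giving $\tO(nd)$ per call and $\tO(nd)$ overall.
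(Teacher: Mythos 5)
Your outer loop (halve the radius until you hit the noise floor $\tfrac{1}{\sqrt m}\norm{\xi}_{2,(m)}$, with a guess on that floor) matches the paper's structure, and your optimality discussion for $\eta$ is fine. But the core primitive you propose for $\HRSN$ --- clip the residual $r = b - \ma x_t$ at $\tau = \Theta(\rho\sqrt{s}R)$, take an (unweighted) step built from $\ma^\top \tilde r$, then hard-threshold to $O(s)$-sparse --- does not work, and the step you defer as ``the main technical difficulty'' is false as stated. The adversarial rows are chosen after seeing $\mg$ and can all be aligned with a single spurious sparse direction while each producing a residual entry well below $\tau$ (e.g.\ $n-m$ copies of one row $v$ consistent with $b$, with one large coordinate of $v$ outside the support of $\sx$); then $\ma^\top \tilde r$ acquires a contribution of order $(n-m)$ along that wrong coordinate, clipping never activates, and hard thresholding keeps exactly the wrong support. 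This is precisely the counterexample the paper gives in Appendix~\ref{app:failure} for iterative hard thresholding, and your scheme is a clipped variant of IHT, so no uniform covering bound of the form ``adversarial contribution to any $O(s)$-sparse target is at most $\tfrac14 R$'' can hold: the adversarial contribution is not oblivious noise to be union-bounded over a net, it is a deterministic vector that can be made large and sparse. Relatedly, ``feeding the restriction of $\tilde r$ to the $\mg$-block'' into the noiseless argument presumes access to $\mg$, which is exactly what the algorithm lacks; your proposal contains no mechanism that suppresses or downweights the bad rows.

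The paper's route is different in exactly the place your proposal is missing. Each iteration it \emph{learns} a nonnegative reweighting $w$ of the rows via a potential-based step oracle (Definitions~\ref{def:ostep} and~\ref{def:osstep}, Algorithms~\ref{alg:ostep} and~\ref{alg:onstep}): the oracle must certify both a progress condition $\sum_i w_i \tDelta_i \Delta_i \ge \Cp$ and that $\ma^\top \diag{w}\Delta$ admits a $(C_2, O(1/\sqrt s))$ \emph{short-flat decomposition}; the iterate is then updated by a projection onto an $\ell_1$ ball (not hard thresholding), whose geometry filters the flat part (Lemma~\ref{lem:combinealg}). Existence of good weights follows because wRIP implies Assumption~\ref{assume:detplus} (Lemmas~\ref{lem:assume==RIP} and~\ref{lem:detundergennoise}), and the oracle is implemented in $\tO(nd\cdot \wsinf s)$ time by uniformly sampling rows and greedily increasing a potential $\Apot - Cs\Bpot$ built from a softmax. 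The noise floor $O(\tfrac{1}{\sqrt m}\norm{\xi}_{2,(m)})$ is then obtained not by clipping but by a stochastic-domination property of the learned weights (Definition~\ref{def:osstep}, Lemma~\ref{lem:ostepisonstep}), which bounds $\sum_i w_i \xi_i^2$ against the mass of the top $m$ entries of $\xi$, together with a multi-trial aggregation step (Algorithm~\ref{alg:agg}) to boost the constant success probability. Without some analogue of this adaptive reweighting (or another mechanism that provably neutralizes adversarially aligned rows), your per-iteration primitive cannot be repaired, so the proposal has a genuine gap at its center.
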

The error scaling of Theorem~\ref{thm:noisyintro} is optimal in the semi-random setting.
Indeed, when there is no semi-random noise, the guarantees of Theorem~\ref{thm:noisyintro} exactly match the standard statistical guarantees in the fully-random setting for sparse recovery, up to constants; for example, when $\ma = \sqrt{m} \id$ (which is clearly RIP, in fact an exact isometry, after rescaling), it is information-theoretically impossible to obtain a better $\ell_2$ error.\footnote{In the literature it is often standard to scale down the sensing matrix $\ma$ by $\sqrt{m}$; this is why our error bound is similarly scaled. However, this scaling is more convenient for our analysis, especially when stating weighted results.} The error bound of Theorem~\ref{thm:noisyintro} is similarly optimal in the semi-random setting because in the worst case, the largest entries of $\xi$ may correspond to the rows of the RIP matrix from which recovery is information-theoretically possible.

\paragraph{Performance of existing algorithms.} To contextualize Theorems~\ref{thm:exactintro} and~\ref{thm:noisyintro}, we discuss the performance of existing algorithms for sparse recovery under the semi-random adversarial model of Definition~\ref{def:prip}. First, it can be easily verified that our semi-random adversary never changes the information-theoretic tractability of sparse recovery.
In the noiseless setting for example, the performance of the minimizer to the classical convex program based on $\ell_1$ minimization,
\[\min_{\ma x = b} \norm{x}_1,\] 
is unchanged in the presence of pRIP matrices (as $\sx$ is still consistent with the constraint set, and in particular a RIP constraint set), and hence the semi-random problem can be solved in polynomial time via convex programming. This suggests the main question we address: can we design a near-linear time algorithm obtaining optimal statistical guarantees under pRIP measurements?

As alluded to previously, standard greedy and non-convex methods we have discussed may fail to converge to the true solution against appropriate semi-random adversaries generating pRIP matrices. We give explicit counterexamples to several popular methods such as orthogonal matching pursuit and iterative hard thresholding in Appendix~\ref{app:failure}.
Further, it seems likely that similar counterexamples also break other, more complex methods commonly used in practice, such as matching pursuit~\cite{mallat1993matching} and CoSaMP \cite{needell2009cosamp}.

Additionally, while fast ``convex'' iterative algorithms (e.g.\ first-order methods for solving objectives underlying polynomial-time convex programming approaches) will never fail to converge to the correct solution given pRIP measurements, the analyses which yield fast runtimes for these algorithms~\cite{negahban2012unified,agarwal2012fast} rely on properties such as restricted smoothness and strong convexity (a specialization of standard conditioning assumptions to numerically sparse vectors). These hold under standard generative models but again can be broken by pRIP measurements; consequently, standard convergence analyses of ``convex'' first-order methods may yield arbitrarily poor rates.

One intuitive explanation for why faster methods fail is that they depend on conditions such as incoherence~\cite{donoho1989uncertainty} or the restricted isometry property~\cite{candes2006near}, which can be destroyed by a semi-random adversary.
For instance, RIP states that if $S$ is any subset of $m = \Theta (s)$ columns of $\ma$, and $\ma_S$ is the submatrix formed by taking those columns of $\ma$, then $\ma_S^\top \ma_S$ is an approximate isometry (i.e.\ it is well-conditioned).
While it is well-known that RIP is satisfied with high probability when $\ma$ consists of $\Theta(s \log d)$ Gaussian rows, it is not too hard to see that augmenting $\ma$ with additional rows can easily ruin the condition number of submatrices of this form. 
In contrast, convex methods work under weaker assumptions such as the restricted nullspace condition, which cannot be destroyed by the augmentation used by pRIP matrices. Though these weaker conditions (e.g.\ the restricted nullspace condition) suffice for algorithms based on convex programming primitives, known analyses of near-linear time ``fast'' algorithms require additional instance structure, such as incoherence or RIP.
Thus, it is plausible that fast algorithms for sparse recovery are less robust to the sorts of distributional changes that may occur in practice.

\paragraph{Beyond submatrices.} Our methods naturally extend to a more general setting (see Definition~\ref{def:wrip}, wherein we define ``weighted RIP'' (wRIP) matrices, a generalization of Definition~\ref{def:prip}).
Rather than assuming there is a RIP submatrix $\mg$, we only assume that there is a (nonnegative) reweighting of the rows of $\ma$ so that the reweighted matrix is ``nice,'' i.e.\ it satisfies RIP.
Definition~\ref{def:prip} corresponds to the special case of this assumption where the weights are constrained to be either $0$ or $1$ (and hence must indicate a subset of rows).
In our technical sections (Sections~\ref{sec:exact} and~\ref{sec:noisy}), our results are stated for this more general semi-random model, i.e.\ sparse recovery from wRIP measurements.
For simplicity of exposition, throughout the introduction, we mainly focus on the simpler pRIP sparse recovery setting described following Definition~\ref{def:prip}.

\paragraph{Towards instance-optimal guarantees.} While the performance of the algorithms in Theorems~\ref{thm:exactintro} and~\ref{thm:noisyintro} is already nearly-optimal in the worst case semi-random setting, one can still hope to improve our runtime and error bounds in certain scenarios. Our formal results, Theorems~\ref{thm:exact} and~\ref{thm:noisy}, provide these types of fine-grained instance-optimal guarantees in several senses.

In the noiseless setting (Theorem~\ref{thm:exact}), if it happens to be that the entire matrix $\ma$ is RIP (and not just $\mg$), then standard techniques based on subsampling the matrix can be used to solve the problem in time $\widetilde{O}(s d)$ with high probability. For example, if $\ma$ is pRIP where, following the notation of Definition~\ref{def:prip}, $\mg$ is entrywise Gaussian, and the adversary chose to give us additional Gaussian rows, one could hope for a runtime improvement (simply by ignoring the additional measurements given). 
Theorem~\ref{thm:exact} obtains a runtime which smoothly interpolates between the two regimes of a worst-case adversary and an adversary which gives us additional random measurements from an RIP ensemble. Roughly speaking, if there exists a (a priori unknown) submatrix of $\ma$ of $m \gg \widetilde{\Theta}(s)$ rows which is RIP, then we show that our algorithm runs in \emph{sublinear} time $\widetilde{O}(nd \cdot \frac s m)$, which is $\tO(sd)$ when $m\approx n$.
We show this holds in our weighted semi-random model (under wRIP measurements, Definition~\ref{def:wrip}) as well, where the runtime depends on the ratio of the $\ell_1$ norm of the (best) weight vector to its $\ell_\infty$ norm, a continuous proxy for the number of RIP rows under pRIP.

We show a similar interpolation holds in the noisy measurement setting, both in the runtime sense discussed previously, and also in a statistical sense. In particular, Theorem~\ref{thm:noisy} achieves (up to logarithmic factors) the same interpolating runtime guarantee of Theorem~\ref{thm:exact}, but further attains a squared $\ell_2$ error which is roughly the average of the $m$ largest elements of the squared noise vector $\xi$ (see the informal statement in Theorem~\ref{thm:noisyintro}). This bound thus improves as $m \gg \widetilde{\Theta}(s)$; we show it extends to weighted RIP matrices (Definition~\ref{def:wrip}, our generalization of Definition~\ref{def:prip}) in a natural way depending on the $\ell_\infty$-$\ell_1$ ratio of the weights.

\subsection{Our techniques}

Our overall approach for semi-random sparse recovery is fairly different from two recent works in the literature which designed fast iterative methods succeeding under a semi-random adversarial model \cite{cheng2018non, li2020wellconditioned}. In particular, these two algorithms were both based on the following natural framework, which separates the ``planted learning'' problem (e.g.\ identifying the planted benign matrix) from the ``estimation'' task (e.g.\ solving a linear system or regression problem).
\begin{enumerate}
    \item Compute a set of weights for the data (in linear regression for example, these are weights on each of the rows of a measurement matrix $\ma$), such that after re-weighting, the data fits the input assumptions of a fast iterative method which performs well on a fully random instance.
    \item Apply said fast iterative algorithm on the reweighted data in a black-box manner.
\end{enumerate} 
To give a concrete example, \cite{li2020wellconditioned} studied the standard problem of \emph{overdetermined} linear regression with a semi-random adversary, where a measurement matrix $\ma$ is received with the promise that $\ma$ contains a ``well-conditioned core'' $\mg$. The algorithm of \cite{li2020wellconditioned} first learned a re-weighting of the rows of $\ma$ by a diagonal matrix $\mw^{\half}$, such that the resulting system in $\ma^\top \mw \ma$ is well-conditioned and hence can be solved using standard first-order methods.

In the case of semi-random sparse recovery, there appear to be significant barriers to reweighting approaches (which we will shortly elaborate on).  We take a novel direction that involves designing a new nearly-linear time iterative method for sparse recovery tailored to the geometry of the problem. 

\paragraph{Why not reweight the rows?} There are several difficulties which are immediately encountered when one tries to use the aforementioned reweighting framework for sparse recovery. First of all, there is no effective analog of condition number for an underdetermined linear system. The standard assumption on the measurement matrix $\ma$ to make sparse recovery tractable for fast iterative methods is that $\ma$ satisfies RIP, i.e.\ $\ma$ is roughly an isometry when restricted to $O(s)$-sparse vectors. However, RIP is NP-hard to verify~\cite{bandeira2013certifying} and this may suggest that it is computationally hard to try, say, learning a reweighting of the rows of $\ma$ such that the resulting reweighted matrix is guaranteed to be RIP (though it would be very interesting if this were achievable). More broadly, almost all explicit conditions (e.g.\ RIP, incoherence etc.) which make sparse recovery tractable for fast algorithms are conditions about subsets of the \emph{columns} of $\ma$.  Thus, any approach which reweights rows of $\ma$ such that column subsets of the reweighted matrix satisfy an appropriate condition results in optimization problems that seems challenging to solve in nearly-linear time.

\paragraph{The geometry of sparse recovery.} We now explain our new approach, and how we derive deterministic conditions on the steps of an iterative method which certify progress by exploiting the geometry of sparse recovery. We focus on the clean observation setting in this technical overview. Suppose that we wish to solve a sparse regression problem $\ma \sx = b$ where $\sx$ is $s$-sparse, and we are given $\ma$ and $b$. To fix a scale, suppose for simplicity that we know $\norm{\sx}_1 = \sqrt s$ and $\norm{\sx}_2 = 1$. Also, assume for the purpose of conveying intuition that $\ma$ is pRIP, and that the planted matrix $\mg$ in Definition~\ref{def:prip} is an entrywise random Gaussian matrix.

We next consider a natural family of iterative algorithms for solving the system $\ma \sx = b$.  For simplicity, assume that our current iterate is $x_t = 0$.  Inspired by standard gradient methods for solving regression, we consider ``first-order'' algorithms of the following form:
\begin{equation}\label{eq:algfamily}
\begin{aligned}
    y_t &\gets x_t + \ma^\top u \\
    x_{t+1} &\gets \mproj(y_t)
\end{aligned}
\end{equation}
where $u \in \R^n$ are coefficients to be computed (for example, a natural attempt could be to set $u$ to be a multiple of $\ma x_t - b$, resulting in the step being a scaled gradient of $\norm{\ma x - b}_2^2$ at $x_t$) and $\mproj$ denotes projection onto the (convex) $\ell_1$ ball of radius $\sqrt{s}$. Our goal will be to make constant-factor progress in terms of $\norm{x_t - \sx}_2$ in each application of \eqref{eq:algfamily}, to yield a $\tO(1)$ iteration method.  Note that $x_{t + 1}$ must at minimum make constant factor progress in the direction $\sx - x_t$ (in terms of decreasing the projection onto this direction) if we hope to make constant factor progress in overall distance to $\sx$.  In other words, we must have 
\[
\langle x_{t+1}  - x_t,  \sx - x_t \rangle =\Omega(\norm{\sx - x_t}_2^2)  = \Omega(1)\,.
\]
First, observe that by the definition of our step and shifting so that $x_t = 0$, the point $y_t$ satisfies
\[
\langle y_t - x_t, \sx - x_t \rangle = \langle \ma^\top u , \sx - x_t \rangle = u^\top (b - \ma x_t) = u^\top b,
\]
so to obtain a corresponding progress lower bound for the move to $y_t$, we require
\begin{equation}\label{eq:progress-condition}
u^\top b = \Omega(1).
\end{equation}
Of course, this condition alone is not enough, for two reasons: the step also moves in directions orthogonal to $\sx - x_t$, and we have not accounted for the projection step $\mproj$. If all of the rows of $\ma$ were random, then standard Gaussian concentration implies that we expect $\norm{b}_2 = \sqrt{\frac{n}{d}}$, and thus to satisfy \eqref{eq:progress-condition} we need $\norm{w}_2 \geq \sqrt{\frac{d}{n}}$.  This also implies 
\[
\norm{y_t - x_t}_2 = \norm{\ma^\top w}_2 \approx \sqrt{\frac{d}{n}},
\]
since for Gaussian $\ma$, we expect that its rows are roughly orthogonal.  Moreover, since $\sqrt{\frac{d}{n}} \gg 1$ in the typical underconstrained setting, almost all of the step from $x_t$ to $y_t$ is actually orthogonal to the desired ``progress direction''  parallel to $\sx - x_t$.  This appears to be a serious problem, because in order to argue that our algorithm makes progress, we need to argue that the $\ell_1$ projection step $x_{t+1} = \mproj(y_t)$ ``undoes" this huge $\ell_2$ movement orthogonal to the progress direction (see Figure~1).

Our key geometric insight is that the $\ell_1$ ball is very thin in most directions, but is thick in directions that correspond to ``numerically sparse'' vectors, namely vectors with bounded $\ell_1$ to $\ell_2$ ratios.  Crucially, the movement of our step in the progress direction parallel to $\sx - x_t$ is numerically sparse because $\sx - x_t$ is itself $O(s)$-numerically sparse by assumption. However, for Gaussian $\ma$, the motion in the subspace orthogonal to the progress direction ends up being essentially random (and thus is both not sparse, and is $\ell_\infty$-bounded). Formally, we leverage this decomposition to show that the $\ell_1$ projection keeps most of the forward movement in the progress direction $\sx - x_t$ but effectively filters out much of the orthogonal motion, as demonstrated by the figure below.

\begin{figure}[t]\label{fig:projecting}
\centering
\begin{tikzpicture}[ scale = 0.4,point/.style={circle,fill=black, minimum size=0.02cm},extended dashed/.style={dashed, shorten >=-#1,shorten <=-#1},
 extended dashed/.default=2cm]
    \node[point, label = below:{$x_t$}] (xt) at (0,0)   {};
    \node[point,label = below:{$\sx$}] (x*) at  (10,0){} ;
    \node[point,label = below:{$x_{t+1}$}] (xt+1) at  (4, 1.5){};  
    \node[point,label = above:{$y_t$}] (yt) at  (5 , 15 ){} ;
    \draw[extended dashed] (xt+1) --node[yshift = 0.4cm , xshift = 0.4cm] {}  ++ (x*) ;
    \draw (xt) -- node[xshift = -0.4cm,yshift =  0.4cm] {$\sqrt{\frac{d}{n}}$ } ++(yt);
    \draw (xt+1) -- (yt);
    \draw (xt) -- node[yshift =  - 0.4cm] {$1$ } ++(x*);
  \end{tikzpicture}
     \caption{The effect of $\ell_1$ projection on iterate progress. The dashed line represents a facet of the $\ell_1$-ball around $x_t$ of radius $\norm{x_t - \sx}_1$.}
  \end{figure}

This geometric intuition is the basis for the deterministic conditions we require of the steps of our iterative method, to guarantee that it is making progress. More precisely, the main condition we require of our step in each iteration, parameterized by the coefficients $u$ used to induce \eqref{eq:algfamily}, is that $y_t - x_t$ has a ``short-flat'' decomposition into two vectors $p + e$ where
\[
\norm{p}_2 = O(1) \text{ is ``short'' and } \norm{e}_\infty = O\Par{\frac 1 {\sqrt s}} \text{ is ``flat''.}
\]
The above bounds are rescaled appropriately in our actual method. We state these requirements formally (in the clean observation case, for example) in Definition~\ref{def:ostep}, where we define a ``step oracle'' which is guaranteed to make constant-factor progress towards $\sx$. By combining the above short-flat decomposition requirement with a progress requirement such as \eqref{eq:progress-condition}, we can show that as long as we can repeatedly implement a satisfactory step, our method is guaranteed to converge rapidly.

This framework for sparse recovery effectively reduces the learning problem to the problem of implementing an appropriate step oracle.  Note that a valid step always exists by only looking at the Gaussian rows.  To complete our algorithm, we give an implementation of this step oracle (i.e.\ actually solving for a valid step) which runs in nearly-linear time even when the data is augmented by a semi-random adversary (that is, our measurement matrix is pRIP or wRIP rather than RIP), and demonstrate that our framework readily extends to the noisy observation setting. Our step oracle is motivated by stochastic gradient methods. In particular, we track potentials corresponding to the progress made by our iterative method and the short-flat decomposition, and show that uniformly sampling rows of a pRIP (or wRIP) matrix $\ma$ and taking steps in the direction of these rows which maximally improve our potentials rapidly implements a step oracle.

\subsection{Related work}

\paragraph{Sparse recovery.} Sparse recovery, and variants thereof, are fundamental statistical and algorithmic problems which have been studied in many of settings, including signal processing~\cite{levy1981reconstruction,santosa1986linear,donoho1989uncertainty,blumensath2009iterative,baraniuk2010model}, and compressed sensing~\cite{candes2005signal,candes2006near,candes2006stable,donoho2006compressed,rudelson2006sparse}.
A full review of the extensive literature on sparse recovery is out of the scope of the present paper; we refer the reader to e.g.~\cite{eldar2012compressed,davenport2012introduction,kutyniok2013theory,ludwig2018algorithms} for more extensive surveys.

Within the literature on sparse recovery, arguably the closest line of work to ours is the line of work which attempts to design efficient algorithms which work when the restricted condition number of the sensing or measurement matrix is large.
Indeed, it is known that many nonconvex methods fail when the restricted condition number of the sensing matrix is far from $1$, which is often the case in applications~\cite{jain2014iterative}.
To address this, several works~\cite{jain2014iterative,ludwig2018algorithms} have designed novel non-convex methods which still converge, when the restricted condition number of the matrix is much larger than $1$.
However, these methods still require that the restricted condition number is constant or bounded, whereas in our setting, the restricted condition number could be arbitrarily large due to the generality of the semi-random adversary assumption.

Another related line of work considers the setting where, instead of having a sensing matrix with rows which are drawn from an isotropic Gaussian, have rows drawn from $\Nor(0, \msig)$, for some potentially ill-conditioned $\msig$~\cite{bickel2009simultaneous,raskutti2010restricted,van2013lasso,jain2014iterative,koltchinskii2014l1,dalalyan2017prediction,zhang2017optimal,bellec2018noise,kelner2021power}.
This setting is related to our semi-random adversarial model, in that the information-theoretic content of the problem does not change, but obtaining efficient algorithms which match the optimal statistical rates is very challenging.
However, there does not appear to be any further concrete connection between this ``ill-conditioned covariance'' setting and the semi-random model we consider in this paper.
Indeed, the ill-conditioned setting appears to be qualitatively much more difficult for algorithms: in particular,~\cite{kelner2021power} shows evidence that there are in fact no efficient algorithms that achieve the optimal statistical rates, without additional assumptions on $\msig$.
In contrast in the semi-random setting, polynomial-time convex programming approaches, while having potentially undesirable superlinear runtimes, still obtain optimal statistical guarantees.

Finally as discussed earlier in the introduction, there is a large body of work on efficient algorithms for sparse recovery in an RIP matrix (or a matrix satisfying weaker or stronger analogous properties). These works e.g.\ \cite{candes2005signal,candes2006near,candes2006stable,mallat1993matching,pati1993orthogonal,needell2010signal,needell2009cosamp,blumensath2009iterative,blumensath2010normalized,maleki2010optimally,foucart2011hard,figueiredo2003algorithm,daubechies2004iterative,combettes2005signal,beck2009fast,becker2011nesta,negahban2012unified,agarwal2012fast} are typically based on convex programming or different iterative first-order procedures.

\paragraph{Semi-random models.} Semi-random models were originally introduced in a sequence of innovative papers~\cite{blum1995coloring,feige2001heuristics} in the context of graph coloring.
In theoretical computer science, semi-random models have been explored in many settings, for instance, for various graph-structured~\cite{feige2000finding,feige2001heuristics,chen2012clustering,makarychev2012approximation,makarychev2014constant} and constraint satisfaction problems~\cite{kolla2011play}. 
More recently, they have also been studied for learning tasks such as clustering problems and community detection~\cite{elsner2009bounding,mathieu2010correlation,makarychev2013sorting,chen2014clustering,globerson2014tight,makarychev2015correlation,makarychev2016learning,moitra2016robust}, matrix completion~\cite{cheng2018non}, and linear regression~\cite{li2020wellconditioned}.
We refer the reader to~\cite{roughgarden2021beyond} for a more thorough overview of this vast literature. Finally, we remark that our investigation of the semi-random sparse recovery problem is heavily motivated by two recent works \cite{cheng2018non, li2020wellconditioned} which studied the robustness of \emph{fast iterative methods} to semi-random modeling assumptions.

We also note that the well-studied \emph{Massart noise} model in PAC learning~\cite{massart2006risk} can be thought of as a semi-random variant of the random classification noise model.
However, this setting appears to be quite different from ours: in particular, it was not until quite recently that polynomial-time algorithms were even known to be achievable for a number of fundamental learning problems under Massart noise~\cite{diakonikolas2019distribution,diakonikolas2020learning,chen2020classification,diakonikolas2020hardness,diakonikolas2021threshold,diakonikolas2021forster,diakonikolas2021relu,diakonikolas2021boosting,zhang2021improved}.     	 	%

\section{Preliminaries}
\label{sec:prelims}

\paragraph{General notation.} We let $[n] \defeq \{i \in \N, 1 \le i \le n\}$. The $\ell_p$ norm of a vector is denoted $\norm{\cdot}_p$, and the sparsity (number of nonzero entries) of a vector is denoted $\norm{\cdot}_0$.  For a vector $v \in \R^d$ and $k \in [d]$, we let $\norm{v}_{2, (k)}$ be the $\ell_2$ norm of the largest $k$ entries of $v$ in absolute value (with other elements zeroed out). The all-zeroes vector of dimension $n$ is denoted $0_n$.  The nonnegative probability simplex in dimension $n$ (i.e.\ $\norm{p}_1 = 1$, $p \in \R^n_{\ge 0}$) is denoted $\Delta^n$.

For mean $\mu \in \R^d$ and positive semidefinite covariance $\msig \in \R^{d \times d}$, $\Nor(\mu, \msig)$ denotes the corresponding multivariate Gaussian. $i \sim_{\textup{unif.}} S$ denotes a uniform random sample from set $S$. For $N \in \N$ and $p \in \Delta^n$ we we use $\textup{Multinom}(N, p)$ to denote the probability distribution corresponding to $N$ independent draws from $[n]$ as specified by $p$.

\paragraph{Sparsity.} We say $v$ is \emph{$s$-sparse} if $\norm{v}_0 \le s$. We define the \emph{numerical sparsity} of a vector by
$\NS(v) \defeq \norm{v}_1^2 / \norm{v}_2^2$. Note that from the Cauchy-Schwarz inequality, if $\norm{v}_0 \le s$, then $\NS(v) \le s$.

\paragraph{Matrices.} Matrices are in boldface throughout. The zero and identity matrix of appropriate dimension from context are $\mzero$ and $\id$. For a matrix $\ma \in \R^{n \times d}$, we let its rows be $\ai$, $i \in [n]$ and its columns be $\aj$, $j \in [d]$. The set of $d \times d$ symmetric matrices is $\Sym^d$, and its positive definite and positive semidefinite restrictions are $\PD^d$ and $\PSD^d$. We use the Loewner partial order $\preceq$ on $\Sym^d$. The largest entry of a matrix $\ma \in \R^{n \times d}$ is denoted $\norm{\ma}_{\max} \defeq \max_{i \in [n], j \in [d]} |\ma_{ij}|$. When a matrix $\ma \in \R^{n \times d}$ is clear from context, we refer to its rows as $\{a_i\}_{i \in [n]}$.

\paragraph{Short-flat decompositions.} Throughout we frequently use the notion of ``short-flat decompositions.'' We say $v \in \R^d$ has a \emph{$(C_2, C_\infty)$ short-flat decomposition} if $v = p + e$ for some $e \in \R^d$ with $\norm{e}_2 \leq C_2$ and $p \in \R^d$ with $\norm{p}_\infty \leq C_\infty$. Further, we use $\trunc(v, c) \in \R^d$ for $c \in \R_{\ge 0}$ to denote the vector which coordinatewise $[\trunc(v, c) ]_i = \text{sgn}(v_i)\max(|v_i| - c, 0)$ (i.e.\ the result of adding or subtracting at most  $c$ from each coordinate to decrease the coordinate's magnitude). Note that $v \in \R^d$ has a $(C_2, C_\infty)$ short-flat decomposition if and only if  $\norm{\trunc(v, C_\infty)}_2 \le C_2$ (in which case $p = \trunc(v, C_\infty)$ and $e = v - p$ is such a decomposition).

\paragraph{Restricted isometry property.} We say that matrix $\ma \in \R^{n \times d}$ satisfies the \emph{$(s, c)$-restricted isometry property (RIP)} or (more concisely)  \emph{$\ma$ is $(s, c)$-RIP},  if for all $s$-sparse vectors $v \in \R^d$, 
\[\frac 1 c \norm{v}_2^2 \le \norm{\ma v}_2^2 \le c \norm{v}_2^2.\] 	%

\section{Exact recovery}
\label{sec:exact}

In this section, we give an algorithm for solving the underconstrained linear system $\ma \sx = b$ given the measurement matrix $\ma \in \R^{n \times d}$ (for $n \le d$) and responses $b \in \R^n$ (i.e.\ noiseless or ``exact'' regression), and $\sx$ is $s$-sparse. Our algorithm succeeds when $\ma$ is weighted RIP (wRIP), i.e.\ it satisfies Definition~\ref{def:wrip}, a weighted generalization of Definition~\ref{def:prip}.

\begin{definition}[wRIP matrix]\label{def:wrip} 
	Let $\wsinf \in [0, 1]$. We say $\ma \in \R^{n \times d}$ is $(\rho, \wsinf)$-wRIP if $\norm{\ma}_{\max} \le \rho$, and there exists a weight vector $w^\star \in \Delta^n$ satisfying $\norm{w^\star}_\infty \le \wsinf$, such that $\diag{w^\star}^{\half} \ma$ is $(\Theta(s), \Theta(1))$-RIP for appropriate constants. When $\rho = \tO(1)$ for brevity we say $\ma$ is $\wsinf$-wRIP.
\end{definition}

As discussed after Definition~\ref{def:prip}, a wRIP matrix can be thought of as arising from a ``semi-random model'' because it strictly generalizes our previously-defined pRIP matrix notion in Definition~\ref{def:prip} with $\wsinf = \frac 1 m$, by setting $w^\star$ to be $\frac 1 m$ times the zero-one indicator vector of rows of $\mg$. The main result of this section is the following theorem regarding sparse recovery with wRIP matrices.

\begin{restatable}{theorem}{restateexact}\label{thm:exact}
	Let $\delta \in (0, 1)$, $r > 0$, and suppose $R_0 \ge \norm{\sx}_2$ for $s$-sparse $\sx \in \R^d$. Then with probability at least $1 - \delta$, Algorithm~\ref{alg:hrs} using Algorithm~\ref{alg:ostep} as a step oracle takes as input a $(\rho, \wsinf)$-wRIP matrix $\ma \in \R^{n \times d}$ and $b = \ma \sx$, and computes $\hat{x}$ satisfying
	$\norm{\hat{x}- \sx}_2 \le r$ in time
	\[O\Par{\Par{nd \log^3(nd\rho)\log \Par{\frac {1} {\delta} \cdot \log \frac{R_0}{r}} \log \Par{\frac{R_0}{r}}} \cdot \Par{\wsinf s\rho^2 \log d}}.\]
\end{restatable}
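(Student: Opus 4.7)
The plan is a two-level reduction. At the outer level, I would invoke Algorithm~\ref{alg:hrs} a total of $O(\log(R_0/r))$ times in sequence, each invocation shrinking a known upper bound $R$ on $\norm{x - \sx}_2$ by a constant factor (say, in half). A union bound over these $O(\log(R_0/r))$ stages drives the distance below $r$ with overall failure probability $\delta$, provided each stage is run with confidence $1 - \delta / \Theta(\log(R_0/r))$. At the inner level, I would prove that each HalfRadiusSparse invocation succeeds by issuing $\tO(1)$ calls to the step oracle implemented by Algorithm~\ref{alg:ostep}, and that each accepted oracle output yields constant-factor $\ell_2$ progress toward $\sx$; iterating the per-step lemma a constant number of times then gives the promised radius halving.

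The per-step geometric argument is the short-flat-decomposition framework previewed in Section~\ref{sec:intro}. I would formally treat the step oracle as producing $y_t = x_t + \ma^\top u$ satisfying (i) a \emph{progress condition} $u^\top(b - \ma x_t) = \Omega(\norm{\sx - x_t}_2^2)$ up to the right scale, and (ii) a \emph{short-flat decomposition} $y_t - x_t = p + e$ with $\norm{p}_2 = O(\norm{\sx - x_t}_2)$ and $\norm{e}_\infty = O(\norm{\sx - x_t}_2/\sqrt{s})$. The next iterate $x_{t+1}$ comes from projecting $y_t$ onto an $\ell_1$ ball of the appropriate radius around $x_t$. The key lemma is that because $\sx - x_t$ has numerical sparsity $O(s)$ (being the difference of two approximately $s$-sparse vectors after the projection), the $\ell_1$ projection preserves almost all of the progress component $p$ while discarding most of the flat component $e$. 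Quantitatively, one derives $\norm{x_{t+1} - \sx}_2^2 \le (1 - \Omega(1))\norm{x_t - \sx}_2^2$ by lower-bounding $\inprod{x_{t+1} - x_t}{\sx - x_t}$ via (i) and upper-bounding the wasted orthogonal motion via H\"older's inequality against the flat part using (ii) and $\NS(\sx - x_t) = O(s)$.

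The main obstacle, and the source of the $\wsinf s \rho^2 \log d$ multiplier, is implementing the step oracle in near-linear time when $\ma$ is only $(\rho, \wsinf)$-wRIP rather than outright RIP. Existence of a valid $u$ is guaranteed by Definition~\ref{def:wrip}: were $w^\star$ known, one could take $u$ proportional to $\diag{w^\star}(b - \ma x_t)$ and directly appeal to RIP of $\diag{w^\star}^{\half}\ma$ to verify both conditions. Since $w^\star$ is unknown, Algorithm~\ref{alg:ostep} builds $u$ greedily via stochastic sampling: it maintains potentials tracking the progress inner product together with the $\ell_2$ and $\sqmax$-type quantities controlling the short-flat decomposition, repeatedly samples a row of $\ma$ uniformly at random, and updates $u$ by the optimal move along that row whenever doing so improves the potentials. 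A martingale/Hoeffding analysis then shows $\tO(\wsinf s \rho^2 \log d)$ sampled updates suffice: the $\wsinf$ factor enters because only an $O(\wsinf)$-fraction of uniform samples effectively correspond to mass in the planted weight $w^\star$; the $s\rho^2$ factor enters from bounding the per-step variance in the potentials (entries of magnitude $\rho$ interacting with $O(s)$-numerically-sparse target directions); and $\log d$ from standard vector concentration, with the $\log^3(nd\rho)$ coming from discretizing potential thresholds and stably tracking them. Combining an $O(nd)$ cost per oracle call to compute $\ma^\top u$, evaluate $\ma x_t$, and perform the $\ell_1$ projection, with $\tO(1)$ oracle calls per halving stage and $O(\log(R_0/r))$ stages overall, and distributing $\delta$ through a union bound across all oracle calls, yields the stated runtime and success probability.
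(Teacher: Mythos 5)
Your overall architecture is the same as the paper's: an outer loop that halves a radius bound $O(\log(R_0/r))$ times with a union bound (Lemma~\ref{lem:combinealg} plus the proof of Theorem~\ref{thm:exact}), a per-stage progress lemma driven by the progress condition together with a short-flat decomposition and the $\ell_1$ projection (Definition~\ref{def:ostep} and the argument via the optimality conditions of the projection in Lemma~\ref{lem:combinealg}), and a step oracle built by uniformly sampling rows and greedily improving a progress potential plus a $\sqmax$-based potential, with existence of a good step certified by the unknown $w^\star$ (Lemmas~\ref{lem:oracleunderdet} and~\ref{lem:detundergen}). The recovery and success-probability parts of your outline are sound at this level of detail (minor variations, such as centering the $\ell_1$ ball at $x_t$ rather than fixing $\xset$ around $\xin$, and the need for the early-termination branch when $\norm{x_t - \sx}_2 \le R/4$ so the oracle's hypothesis $\norm{v}_2 \ge \tfrac14$ can fail gracefully, are handled in the paper and would need the same care in your version).

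The genuine gap is in the runtime accounting for the step oracle, which is part of what the theorem asserts. You claim that $\tO(\wsinf s\rho^2\log d)$ sampled updates suffice and that each oracle call costs $O(nd)$; neither is right, and their product does not give the stated bound. With uniform row sampling, the expected potential gain per sample is only $\Omega(\eta/n)$ with $\eta = \Theta\Par{\frac{1}{\wsinf s\rho^2\log d}}$: a sample is useful only when it lands on a planted row (probability roughly $m/n = 1/(n\wsinf)$), and even then the admissible step size is capped at $\eta\wsinf$. Since the potential must increase by $\Omega(1)$, the oracle needs $\Theta(n\wsinf s\rho^2\log d)$ sampled iterations — your count is off by a factor of $n$ (e.g.\ with $m = \Theta(s\log d)$ and $\rho = O(1)$ your count is $\tO(1)$, yet $\Omega(n/m)$ samples are needed just to hit a planted row). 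Each such iteration costs $\tO(d)$, not $O(1)$ and not $O(nd)$: it is a one-dimensional concave maximization over the sampled row whose objective evaluations require the $\sqmax$-minimization subroutine of Lemma~\ref{lem:optimize-softmax} (this is also where the $\log^3(nd\rho)$ comes from, together with Lipschitzness bounds on the potentials, not from ``discretizing potential thresholds''), and the residual coordinates $\Delta_i$ are computed lazily rather than via a full $\ma x_t$ per call. Multiplying $\Theta(n\wsinf s\rho^2\log d)$ iterations by $\tO(d)$ per iteration, by $O(\log\frac1\delta)$ repetitions for the success probability, by $O(1)$ oracle calls per stage and $O(\log\frac{R_0}{r})$ stages, recovers the stated bound; as written, your numbers instead multiply out to $\tO(nd\log\frac{R_0}{r})$, which this uniform-sampling scheme does not achieve, so the time bound is not established.
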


Under the wRIP assumption, Theorem~\ref{thm:exact} provides a natural interpolation between the fully random and semi-random generative models. To build intuition, if a pRIP matrix contains a planted RIP matrix with $\tO(s)$ rows (the information-theoretically minimum size), then by setting $w^\star_\infty \approx \frac 1 {\tO(s)}$, we obtain a near-linear runtime of $\tO(nd)$. However, in the fully random regime where $w^\star_\infty \approx \frac 1 n$ (i.e.\ all of $\ma$ is RIP), the runtime improves $\tO(sd)$ which is sublinear for $n \gg s$. 

The roadmap of our algorithm and its analysis are as follows.

\begin{enumerate}
	\item In Section~\ref{ssec:onephase}, we give an algorithm (Algorithm~\ref{alg:hrs}) which iteratively halves an upper bound on the radius to $\sx$, assuming that either an appropriate step oracle (see Definition~\ref{def:ostep}) based on short-flat decompositions can be implemented for each iteration, or we can certify that the input radius bound is now too loose. This algorithm is analyzed in Lemma~\ref{lem:combinealg}.
	\item We state in Assumption~\ref{assume:det} a set of conditions on a matrix-vector pair $(\ma, \Delta)$ centered around the notion of short-flat decompositions, which suffice to provide a sufficient step oracle implementation with high probability in nearly-linear time. In Section~\ref{ssec:oracle} we analyze this implementation (Algorithm~\ref{alg:ostep}) in the proof of Lemma~\ref{lem:oracleunderdet} assuming the inputs satisfy Assumption~\ref{assume:det}.
	\item In Section~\ref{ssec:detassume}, we show Assumption~\ref{assume:det}, with appropriate parameters, follows from $\ma$ being wRIP. This is a byproduct of a general equivalence we demonstrate between RIP, restricted conditioning measures used in prior work \cite{AgarwalNW10}, and short-flat decompositions.
\end{enumerate}

\subsection{Radius contraction using step oracles}\label{ssec:onephase}

In this section, we provide and analyze the main loop of our overall algorithm for proving Theorem~\ref{thm:exact}. This procedure, $\HRS$, takes as input an $s$-sparse vector $\xin$ and a radius bound $R \ge \norm{\xin - \sx}_2$ and returns an $s$-sparse vector $\xout$ with the guarantee $\norm{\xout - \sx}_2 \le \half R$. As a subroutine, it requires access to a ``step oracle'' $\ostep$, which we implement in Section~\ref{ssec:oracle} under certain assumptions on the matrix $\ma$. 

\begin{definition}[Step oracle]\label{def:ostep}
We say that $\ostep$ is a $(\Cp, C_2, \delta)$-step oracle for $\Delta \in \R^n$ and $\ma \in \R^{n \times d}$, if the following holds. Whenever there is $v \in \R^d$ with $\frac 1 4 \le \norm{v}_2 \le 1$ and $\norm{v}_1 \le 2\sqrt{2s}$ such that $\Delta = \ma v$, with probability $\ge 1 - \delta$, $\ostep$ returns $w \in \R^n_{\ge 0}$ such that the following two conditions hold. First,
\begin{equation}\label{eq:prog}\sum_{i \in [n]} w_i \Delta_i^2 \ge \Cp.\end{equation}
Second, there exists a $(C_2, \frac{\Cp}{6\sqrt s})$ short-flat decomposition of $\ma^\top \diag{w} \Delta$:
\begin{equation}\label{eq:shortflat}\norm{\trunc\Par{\ma^\top \diag{w} \Delta, \frac{\Cp}{6\sqrt s}}}_2 \le C_2.\end{equation}
\end{definition}

Intuitively, \eqref{eq:shortflat} guarantees that we can write $\gamma = p + e$ where $p$ denotes a ``progress'' term which we require to be sufficiently short in the $\ell_2$ norm, and $e$ denotes an ``error'' term which we require to be small in $\ell_\infty$. We prove that under certain assumptions on the input $\ma$ (stated in Assumption~\ref{assume:det} below), we can always implement a step oracle with appropriate parameters.

\begin{restatable}{assumption}{restateassumedet}\label{assume:det}
	The matrix $\ma \in \R^{n \times d}$ satisfies the following. There is a weight vector $w^\star \in \Delta^n$ satisfying $\norm{w^\star}_\infty \le \wsinf$, a constant $L$, $\rho \ge 1$, and a constant $K$ (which may depend on $L$) such that for all $v \in \R^d$ with $\frac 1 4 \le \norm{v}_2 \le 1$ and $\norm{v}_1 \le 2 \sqrt{2s}$ we have, defining $\Delta = \ma v$:
	\begin{enumerate}
		\item $\ma$ is entrywise bounded by $\pm \rho$, i.e.\ $\norm{\ma}_{\max} \le \rho$.
		\item
		\begin{equation}\label{eq:rsc}\frac 1 L \le \sum_{i \in [n]} w^\star_i \Delta_i^2 \le L.\end{equation}
		\item For $\mw^\star \defeq \diag{w^\star}$, there is a $(L, \frac{1}{K\sqrt s})$ short-flat decomposition of $\ma^\top \mw^\star \Delta = \sum_{i \in [n]} w^\star_i \Delta_i a_i$: %
		\begin{equation}\label{eq:shortflatv}\norm{\trunc\Par{\ma^\top \mw^\star \Delta, \frac{1}{K \sqrt s}}}_2 \le L.\end{equation}
	\end{enumerate}	
\end{restatable}

Our Assumption~\ref{assume:det} may also be stated in a scale-invariant way (i.e.\ with \eqref{eq:rsc}, \eqref{eq:shortflatv} scaling with $\norm{v}_2$), but it is convenient in our analysis to impose a norm bound on $v$. Roughly, the second property in Assumption~\ref{assume:det} is (up to constant factors) equivalent to the ``restricted strong convexity'' and ``restricted smoothness'' assumptions of \cite{AgarwalNW10}, which were previously shown for specific measurement matrix constructions such as random Gaussian matrices. The use of the third property in Assumption~\ref{assume:det} (the existence of short-flat decompositions for numerically sparse vectors) in designing an efficient algorithm is a key contribution of our work. Interestingly, we show in Section~\ref{ssec:detassume} that these assumptions are up to constant factors equivalent to RIP.

More specifically, we show that when $\ma$ is wRIP, we can implement a step oracle for $\Delta = \ma v$ where $v = \frac 1 R (x - \sx)$ for some iterate $x$ of Algorithm~\ref{alg:hrs}, which either makes enough progress to advance the algorithm or certifies that $v$ is sufficiently short, by using numerical sparsity properties of $v$. We break this proof into two parts. In Lemma~\ref{lem:oracleunderdet}, we show that Assumption~\ref{assume:det} suffices to implement an appropriate step oracle; this is proven in Section~\ref{ssec:oracle}. In Lemma~\ref{lem:detundergen}, we then demonstrate the wRIP assumption with appropriate parameters implies our measurement matrix satisfies Assumption~\ref{assume:det}, which we prove by way of a more general equivalence in Section~\ref{ssec:detassume}.

\begin{restatable}{lemma}{restatesteporacle}\label{lem:oracleunderdet}
Suppose $\ma$ satisfies Assumption~\ref{assume:det}. Algorithm~\ref{alg:ostep} is a $(\Cp, C_2, \delta)$ step oracle $\OStep$ for $(\Delta, \ma)$ with $\Cp = \Omega(1)$, $C_2 = O(1)$ running in time
\[O\Par{\Par{ nd\log^3 (nd\rho)\log \frac 1 \delta} \cdot \Par{\wsinf  s\rho^2\log d}}.\]
\end{restatable}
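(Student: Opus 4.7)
The plan is to implement $\OStep$ as a stochastic coordinate-update procedure on a nonnegative weight vector $w \in \R^n_{\ge 0}$, whose correctness is analyzed by treating the reference weight $w^\star$ from Assumption~\ref{assume:det} as a feasibility certificate we cannot see but whose existence can be exploited. The target budget is $T = \tO(\wsinf s \rho^2)$ iterations with $O(d)$ amortized work each, plus $O(nd)$ for input preprocessing, and $O(\log(1/\delta))$ boosting repetitions to amplify the success probability to $1-\delta$; this matches the stated runtime up to the $\log^3(nd\rho)$ polylog factors, which absorb bit-length and numerical-precision overhead.

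Concretely, I would have the algorithm maintain two scalar statistics of the current iterate: a progress statistic $A(w) \defeq \sum_i w_i \Delta_i^2$ tracking the left-hand side of \eqref{eq:prog}, and a short-flat defect $B(w) \defeq \norm{\trunc(\ma^\top \diag{w} \Delta, \tau)}_2^2$ at threshold $\tau = \Theta(1/\sqrt{s})$ matching $\frac{\Cp}{6\sqrt s}$. Success corresponds to $A(w) \ge \Cp$ together with $B(w) \le C_2^2$. By Assumption~\ref{assume:det}, the reference weight $w^\star$ satisfies both $A(w^\star) = \Theta(1)$ and $B(w^\star) = O(1)$ simultaneously, and $\norm{w^\star}_\infty \le \wsinf$. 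Combining the two statistics into a composite $\Phi(w) = B(w) - \beta A(w)$ with a Lagrange multiplier $\beta = \Theta(1)$ gives a convex objective whose minimum is certified (up to constants) by $w^\star$, and the $\ell_\infty$ bound on $w^\star$ is the ``width'' parameter that will control the variance of uniform row sampling.

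At each iteration, the algorithm samples $i_t \sim_{\textup{unif.}} [n]$ and updates $w_{i_t} \gets w_{i_t} + \alpha_t$ along this single coordinate, with $\alpha_t \ge 0$ chosen to maximally decrease $\Phi$ (a closed-form one-dimensional problem since $\Phi$ is piecewise quadratic in each $w_i$). By storing $g \defeq \ma^\top \diag{w} \Delta$ explicitly, the univariate slice of $\Phi$ along coordinate $i_t$ can be evaluated in $O(d)$ time using $a_{i_t}$ and $\Delta_{i_t}$, and $g$ can be refreshed in $O(d)$ time after the step. A standard stochastic convex analysis then yields $\E\Brack{\Phi(w_T)} \le \Phi(w^\star) + O(1)$ once $T = \tO(\wsinf s \rho^2)$, where the $\rho^2$ tracks the per-coordinate Lipschitz constant from $\norm{\ma}_{\max} \le \rho$, the $s$ factor comes from the numerical sparsity $\norm{v}_1^2 \le O(s \norm{v}_2^2)$ of the signal inducing $\Delta$, and the $\wsinf$ factor is the width of the benchmark. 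A final $O(nd)$-time verification of \eqref{eq:prog} and \eqref{eq:shortflat} on each candidate iterate, combined with $O(\log(1/\delta))$ independent repetitions, yields the claimed high-probability guarantee.

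The main obstacle I anticipate is handling the nonsmoothness of $\trunc(\cdot, \tau)$ in $B$: its gradient is discontinuous at the threshold, which obstructs a naive appeal to smooth SGD variance bounds. I would address this either by replacing $B$ with a Huber-style smooth upper envelope that agrees with $B$ up to constants, or by exploiting that $\Phi$ is piecewise quadratic along each coordinate so that the exact one-dimensional minimizer is closed-form and a potential-monotonicity argument replaces the smooth-gradient bound. A secondary technical point is enforcing $w \le O(\wsinf) \cdot \mathbf{1}$ coordinatewise (e.g., via a box constraint or an appropriately scaled entropic mirror map) so that iterates cannot exceed the width of $w^\star$; without this the iteration count would degrade from $\tO(\wsinf s \rho^2)$ to something polynomial in $n$, defeating the purpose of exploiting the planted structure.
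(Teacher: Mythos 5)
Your algorithmic skeleton is essentially the paper's own: build $w$ by uniformly sampling rows, take the best one-dimensional step on a composite ``progress minus defect'' potential, compare against the hidden certificate $w^\star$, stop when the progress condition triggers, and boost by $O(\log\frac1\delta)$ repetitions. However, the analysis you sketch has concrete gaps exactly where the paper's proof does its work. First, the defect potential: the paper does not use a fixed-threshold truncation $\norm{\trunc(\ma^\top\diag{w}\Delta,\tau)}_2^2$ (nor a generic Huber smoothing of it), but rather $\Bpot(\bw)=\min_{\norm{p}_2\le L\norm{\bw}_1}\sqmax_\mu(\gamma_{\bw}-p)+\frac{\norm{\bw}_1}{4CLs}$, and the heart of the proof (Lemma~\ref{lem:bgrowth}) is the bound $\E_{i}\Brack{\Bpot(\bw+\eta w_i^\star e_i)}\le\Bpot(\bw)+\frac{1}{2CLs}\cdot\frac{\eta}{n}$. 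That bound works because each benchmark increment $\eta w_i^\star\Delta_i a_i$ is split as $\eta w_i^\star p^\star+\eta w_i^\star e^\star$: the non-flat part $p^\star$ is \emph{absorbed} into the $\ell_2$ budget for $p$, which grows with $\norm{\bw}_1$, while only the genuinely flat part $e^\star$ (controlled in $\ell_\infty$ by \eqref{eq:shortflatv}) and second-order terms (paid for by the step cap $\eta\wsinf$ and the softmax Taylor expansion) hit the potential. With a fixed threshold and no explicit short-part variable, the accumulated $p^\star$-contributions have nowhere to go and are charged to your $B$ throughout the run; saying the smoothed surrogate ``agrees with $B$ up to constants'' does not address this, so the claimed expected-growth control is unproven and is the missing core of the argument.

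Second, the correctness logic ``minimize convex $\Phi=B-\beta A$ and compare to $w^\star$'' does not certify the conjunction \eqref{eq:prog} and \eqref{eq:shortflat}: a bound $\E\Brack{\Phi(w_T)}\le\Phi(w^\star)+O(1)$ at a fixed horizon is consistent with $A$ and $B$ both huge (or both tiny), and a final verification only detects failure, it does not make a good iterate exist. The paper instead stops at the \emph{first} time $A\ge\Cp$, caps each step at $\eta\wsinf=\frac{1}{Ks\rho^2\log d}$ so that $A$ moves by $O(1/\log d)$ per iteration, and uses monotonicity of the composite potential to conclude $B$ is small at that stopping time, together with a bounded-increment stopping argument to show termination occurs within the budget with probability $\ge\half$. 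Your box constraint $w\le O(\wsinf)\mathbf{1}$ with otherwise unbounded $\alpha_t$ is at the wrong scale and does not give this per-step Lipschitz control. Finally, the iteration count is off by a factor of $n$: under uniform sampling the expected benchmark gain per iteration is $\E_i\Brack{\eta w_i^\star\Delta_i^2}\le\frac{\eta L}{n}$, so one needs $\Theta(n\wsinf s\rho^2\log d)$ iterations (as in Algorithm~\ref{alg:ostep}), not $\tO(\wsinf s\rho^2)$; the corrected count still fits the stated runtime at $\tO(d)$ work per iteration, but the bound you assert cannot follow from the sampling scheme you propose.
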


\begin{restatable}{lemma}{restatedetundergen}\label{lem:detundergen}
	Suppose $\ma \in \R^{n \times d}$ is $(\rho, \wsinf)$-wRIP with a suitable choice of constants in the RIP parameters in Definition~\ref{def:wrip}. Then, $\ma$ also satisfies Assumption~\ref{assume:det}.
\end{restatable}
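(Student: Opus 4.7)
}
Set $\mb \defeq \diag{w^\star}^{\half} \ma$, so wRIP provides that $\mb$ is $(Cs, c)$-RIP for some constants $C$ (large) and $c$ (close to $1$) that we can tune via the ``suitable constants'' clause. The key algebraic observation reducing everything to $\mb$ is that $\sum_{i \in [n]} w_i^\star \Delta_i^2 = \norm{\mb v}_2^2$ and $\ma^\top \mw^\star \Delta = \mb^\top \mb v$ whenever $\Delta = \ma v$. Property~1 of Assumption~\ref{assume:det} is then immediate from $\norm{\ma}_{\max} \le \rho$ in Definition~\ref{def:wrip}, so it remains to verify Properties~2 and~3 under the standing hypothesis $\frac 1 4 \le \norm{v}_2 \le 1$ and $\norm{v}_1 \le 2\sqrt{2s}$.

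For Property~2, the plan is a standard block decomposition of $v$ tailored to the RIP sparsity parameter $Cs$: let $T_0$ be the indices of the top $Cs$ entries of $v$ by magnitude, $T_1$ the next $Cs$, and so on, and write $v_H \defeq v_{T_0}$, $v_L \defeq v - v_H$. Using $\norm{v_{T_j}}_\infty \le \norm{v_{T_{j-1}}}_1/(Cs)$ for $j \ge 1$, one obtains $\sum_{j \ge 1} \norm{v_{T_j}}_2 \le \norm{v}_1/\sqrt{Cs} \le 2\sqrt{2/C}$, which is $o(1)$ once $C$ is a sufficiently large constant. Applying the RIP upper bound on each $Cs$-sparse block gives $\norm{\mb v}_2 \le \sqrt{c}(\norm{v_H}_2 + \sum_{j\ge 1}\norm{v_{T_j}}_2) = O(1)$; combining the RIP lower bound on $v_H$ with $\norm{v_H}_2 \ge \norm{v}_2 - \norm{v_L}_2 \ge \frac 1 4 - 2\sqrt{2/C}$ and the triangle inequality gives $\norm{\mb v}_2 \ge \frac{1}{\sqrt{c}}\norm{v_H}_2 - \sqrt{c}\cdot 2\sqrt{2/C} = \Omega(1)$, yielding Property~2 with $L = O(1)$ after choosing $C$ sufficiently large relative to $c$.

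For Property~3, the plan is a short duality argument followed by a pigeonhole bound on tail entries. Let $\gamma \defeq \mb^\top \mb v$. For any unit vector $u$ that is $Cs$-sparse,
\[
\inprod{u}{\gamma} = \inprod{\mb u}{\mb v} \le \norm{\mb u}_2 \norm{\mb v}_2 \le \sqrt{c}\cdot L = O(1),
\]
where the last inequality uses RIP on $u$ and Property~2 on $v$. Taking the supremum over such $u$ (which equals the $\ell_2$ norm of the top $Cs$ entries of $\gamma$) gives $\norm{\gamma_H}_2 \le \sqrt{c}\,L$, where $\gamma_H$ denotes the restriction of $\gamma$ to its largest $Cs$ coordinates. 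Then the $(Cs{+}1)$-st largest absolute value of $\gamma$ can be at most $\norm{\gamma_H}_2/\sqrt{Cs} \le \sqrt{c}\,L/\sqrt{Cs}$, so the residual $\gamma_L \defeq \gamma - \gamma_H$ satisfies $\norm{\gamma_L}_\infty \le \sqrt{c/C}\cdot L/\sqrt{s}$. Choosing $C$ large enough (depending on $L$) makes this at most $1/(K\sqrt{s})$, so the decomposition $\gamma = \gamma_L + \gamma_H$ (with $\gamma_H$ as the ``short'' $\ell_2$ piece and $\gamma_L$ as the ``flat'' $\ell_\infty$ piece) is an $(L, 1/(K\sqrt s))$ short-flat decomposition, as required by \eqref{eq:shortflatv}.

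The main obstacle is managing the dependence between the constants: enlarging the RIP sparsity factor $C$ is what drives both the tail bound in Property~2 (to keep $v_L$ small enough that the lower bound on $\norm{\mb v_H}_2$ survives) and the tail bound in Property~3 (to turn $O(1/\sqrt{Cs})$ into $1/(K\sqrt s)$ for whatever $K$ the downstream analysis requires). Thus ``suitable constants'' in Definition~\ref{def:wrip} should be read as: given the constant $K$ that Lemma~\ref{lem:oracleunderdet} demands, the RIP constant $c$ can be taken close enough to $1$ and the RIP sparsity threshold taken to be $Cs$ for $C$ a sufficiently large constant, both depending only on $K$. All other steps are direct.
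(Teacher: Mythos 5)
Your proposal is correct and follows essentially the same route as the paper: the paper proves Lemma~\ref{lem:detundergen} via Lemma~\ref{lem:assume==RIP}, whose proof uses exactly your two ingredients, namely a shelling decomposition of $v$ into $O(s)$-sparse blocks (Lemma~\ref{lem:shell}) to verify \eqref{eq:rsc}, and a Cauchy--Schwarz/RIP duality bound on the top $\Theta(s)$ coordinates of $\ma^\top \mw^\star \Delta$ (Lemma~\ref{lem:atbound}) followed by the same pigeonhole bound on the next-largest entry to obtain the short-flat decomposition \eqref{eq:shortflatv}. The only differences are organizational (the paper fixes the RIP sparsity as $12800L^3K^2 s$ rather than tracking a generic large constant $C$, and factors the two tools into standalone lemmas), so there is no substantive gap.
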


We now give our main algorithm $\HRS$, assuming access to the step oracle $\ostep$ from Section~\ref{ssec:oracle} with appropriate parameters, and that $\ma$ obeys Assumption~\ref{assume:det}.

\begin{algorithm2e}[ht!]
	\caption{$\HRS(x_{\text{in}}, R, \ostep, \delta, \ma, b)$}
	\label{alg:hrs}
	\codeInput $s$-sparse $\xin \in \R^d$, $R \ge \norm{\xin - \sx}_2$ for $s$-sparse $\sx \in \R^d$, $(\Cp, C_2, \delta)$-step oracle $\ostep$ for all $(\Delta, \ma)$ with $\Delta \in \R^n$, $\delta \in (0, 1)$, $\ma \in \R^{n \times d}$, $b = \ma \sx\in \R^n$ \;
	\codeOutput $s$-sparse vector $x_{\text{out}}$ that satisfies 
	$\norm{x_{\text{out}} - \sx}_2 \le \half R$ with probability $\ge 1 - T\delta$ \;
	\codeLineSace
	
	Set	$x_0 \gets x_{\text{in}}$, $\xset \gets \{x \in \R^d \mid \norm{x - \xin}_1 \le \sqrt{2s} R\}$
	$T \gets \left\lceil \frac{6C_2^2}{\Cp^2} \right\rceil$, $\eta \gets \frac{\Cp}{2C_2^2}$ \;
	\For{$0 \le t \le T - 1$}{
		$w_t \gets \ostep(\Delta_t,  \ma)$ for $\Delta_t \gets\frac{1}{R}(\ma x_t - b)$, $\gamma_t \gets \ma^\top \diag{w_t} \Delta_t = \sum_{i \in [n]} [w_t]_i [\Delta_t]_i a_i$  \;
		\If{$\sum_{i\in[n]} [w_t]_i [\Delta_t]_i^2 < \Cp$ $\textup{or}$ $\norm{\trunc(\gamma_t,\frac{\Cp}{6\sqrt{s}})}_2 > C_2$ \label{line:half_if_check}}{
			\codeReturn $\xout \gets x_t$ truncated to its $s$ largest coordinates
		}
		\lElse{
			$x_{t + 1} \gets \argmin_{x \in \xset} \norm{x - x_t - \eta R \gamma_t}_2$
		}
	}
	\codeReturn $\xout \gets x_t$ truncated to its $s$ largest coordinates
\end{algorithm2e}

\begin{lemma}\label{lem:combinealg}
Assume $\ma$ satisfies Assumption~\ref{assume:det}. With probability at least $1 - T\delta$, Algorithm~\ref{alg:hrs} succeeds (i.e.\ $\norm{x_{\text{out}} - \sx}_2 \le \half R$).
\end{lemma}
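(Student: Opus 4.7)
The high-level plan is a union bound followed by a deterministic potential argument. Each of the $T$ invocations of $\ostep$ fails with probability at most $\delta$, so with probability at least $1 - T\delta$ every call succeeds, and I condition on this event. I then track the potential $\Phi_t \defeq \norm{x_t - \sx}_2^2$, maintaining the inductive invariant $\Phi_t \le R^2$ (equivalently $\norm{v_t}_2 \le 1$, where $v_t \defeq R^{-1}(x_t - \sx)$). Two structural facts are used throughout: $\sx \in \xset$ because $\sx - \xin$ is $2s$-sparse with $\ell_2$-norm at most $R$ (so its $\ell_1$-norm is at most $\sqrt{2s}R$ by Cauchy--Schwarz); and $\norm{v_t}_1 \le 2\sqrt{2s}$, which follows from the triangle inequality applied to $x_t, \sx \in \xset$.

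I split the per-iteration analysis on whether the if-check on Line~\ref{line:half_if_check} triggers. If it does, I claim $\norm{v_t}_2 < 1/4$: otherwise $\norm{v_t}_2 \in [1/4, 1]$, and together with $\norm{v_t}_1 \le 2\sqrt{2s}$ and $\Delta_t = \ma v_t$ this would satisfy the premise of Definition~\ref{def:ostep}, so on the conditioned event the returned $w_t$ would satisfy both \eqref{eq:prog} and \eqref{eq:shortflat}, contradicting the check. Hence $\norm{x_t - \sx}_2 < R/4$. The standard fact that truncating any vector to its top $s$ coordinates at most doubles the $\ell_2$-distance to an $s$-sparse target then yields $\norm{\xout - \sx}_2 \le R/2$.

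The main technical step is the potential decrease when the if-check does \emph{not} trigger. Writing the update as $x_{t+1} = \argmin_{x \in \xset} \norm{x - (x_t - \eta R \gamma_t)}_2$, the standard three-point projection identity combined with $\sx \in \xset$ yields
\begin{equation*}
\Phi_{t+1} \le \Phi_t - \norm{x_{t+1} - x_t}_2^2 - 2\eta R \inprod{\gamma_t}{x_{t+1} - \sx}.
\end{equation*}
I decompose $\inprod{\gamma_t}{x_{t+1} - \sx} = \inprod{\gamma_t}{x_t - \sx} + \inprod{\gamma_t}{x_{t+1} - x_t}$. The first summand equals $R\sum_i [w_t]_i [\Delta_t]_i^2 \ge R \Cp$ by the checked progress condition~\eqref{eq:prog}. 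For the second, I use the checked short-flat decomposition $\gamma_t = p_t + e_t$: Cauchy--Schwarz gives $\inprod{p_t}{x_{t+1}-x_t} \ge -C_2 \norm{x_{t+1} - x_t}_2$, and Hölder combined with $\norm{x_{t+1} - x_t}_1 \le 2\sqrt{2s}R$ (from $x_t, x_{t+1} \in \xset$) gives $\inprod{e_t}{x_{t+1}-x_t} \ge -\tfrac{\sqrt{2}}{3}\Cp R$. An AM--GM absorption of $2\eta R C_2\norm{x_{t+1}-x_t}_2$ into $\eta^2 R^2 C_2^2 + \norm{x_{t+1}-x_t}_2^2$, followed by the choice $\eta = \Cp/(2C_2^2)$, delivers $\Phi_{t+1} \le \Phi_t - c \cdot \Cp^2 R^2/C_2^2$ for a positive absolute constant $c$.

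The two cases combine as follows. Either the if-check triggers at some $t^* \le T - 1$, in which case the second paragraph's bound gives $\norm{\xout - \sx}_2 \le R/2$; or the if-check never triggers, in which case summing the per-step decrease over $T = \lceil 6 C_2^2/\Cp^2 \rceil$ iterations produces a total decrease exceeding $R^2 \ge \Phi_0$, forcing $\Phi_T = 0$ and hence $x_T = \sx$ exactly, so truncation returns $\sx$. The main obstacle is precisely the case-(i) decrease step: since $\gamma_t$ is built by sampling rows of the wRIP matrix, it may have an enormous $\ell_2$-norm, so a naive Cauchy--Schwarz bound on $\inprod{\gamma_t}{x_{t+1}-x_t}$ is useless. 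The essential trick is pairing the small-$\ell_\infty$ piece $e_t$ of the short-flat decomposition against the bounded $\ell_1$-diameter of $\xset$, which converts a would-be dimension-dependent error into the clean $\tfrac{\sqrt{2}}{3}\Cp R$ contribution above and lets the progress term dominate.
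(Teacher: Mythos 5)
Your argument follows the paper's proof almost step for step: union bound over the $T$ oracle calls, $\sx \in \xset$ via Cauchy--Schwarz, the projection optimality (three-point) inequality, the split of $\inprod{\gamma_t}{x_{t+1}-\sx}$ into the progress term $R\sum_i [w_t]_i[\Delta_t]_i^2 \ge R\Cp$ and the error terms handled by H\"older against the $\ell_1$-diameter of $\xset$ (for $e_t$) and Cauchy--Schwarz plus AM--GM (for $p_t$), the observation that a triggered check forces $\norm{v_t}_2 < 1/4$ on the success event, and the factor-two loss from truncating to the top $s$ coordinates. Your additive potential decrease of order $\Cp^2 R^2/C_2^2$ per non-triggering step is, if anything, a cleaner phrasing than the paper's multiplicative statement.

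The one step that does not hold up as written is the endgame of the ``check never triggers'' branch. From ``the total guaranteed decrease over $T$ steps exceeds $\Phi_0$'' you conclude that this ``forces $\Phi_T = 0$ and hence $x_T = \sx$ exactly.'' That inference is invalid: the per-step bound gives $\Phi_T \le \Phi_0 - T\cdot c\,\Cp^2R^2/C_2^2$, and if the right-hand side is negative the correct conclusion is that the premise is impossible (the check must trigger before iteration $T$), not that $\Phi_T$ lands exactly at zero; in particular ``$x_T = \sx$ exactly'' is unjustified. There are two easy repairs. The paper's route is to stop accumulating once the potential drops below $R^2/16$: after at most $T$ non-triggering steps one has $\norm{x_T - \sx}_2 \le R/4$, and the same truncation-doubling bound used in your first case gives $\norm{\xout - \sx}_2 \le R/2$, so both exit points of the algorithm are covered uniformly. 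Alternatively, you can argue the branch is vacuous, but then you must actually pin down your constant $c$: with your stated pieces ($2\eta R^2\Cp$ progress, $-\tfrac{2\sqrt 2}{3}\eta R^2 \Cp$ from $e_t$, $-\eta^2R^2C_2^2$ from the absorption, and $\eta = \Cp/(2C_2^2)$) one gets $c = \tfrac 3 4 - \tfrac{\sqrt 2}{3} > 1/6$, so $T = \lceil 6C_2^2/\Cp^2\rceil$ steps of guaranteed decrease would indeed contradict $\Phi_T \ge 0$; leaving $c$ as an unspecified ``positive absolute constant'' is not enough for this argument, since $c \le 1/6$ would leave the branch realizable and your claimed conclusion false. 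With either patch the proof is complete and matches the paper's.
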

\begin{proof}
Throughout this proof, condition on the event that all step oracles succeed (which provides the failure probability via a union bound). We first observe that $\sx \in \xset$ because of Cauchy-Schwarz, the $2s$-sparsity of $\xin - \sx$, and the assumption $\norm{\xin - \sx}_2 \le R$.

Next, we show that in every iteration $t$ of Algorithm~\ref{alg:hrs},
\begin{equation}\label{eq:singlestep}\norm{x_{t + 1} - \sx}_2^2 \le \Par{1 - \frac{C_2^2}{2\Cp^2}}\norm{x_t - \sx}_2^2.\end{equation}
As $\sx \in \xset$, the optimality conditions of $x_{t + 1}$ as minimizing $\norm{x - (x_t - \eta R \gamma_t)}_2^2$ over $\xset$ imply
\begin{equation}\label{eq:optconds}
	\begin{gathered}2\inprod{x_{t + 1} - x_t + \eta R \gamma_t}{x_{t + 1} - \sx} \le 0 \\
		\implies \norm{x_t - \sx}_2^2 - \norm{x_{t + 1} - \sx}_2^2 \ge 2\eta R \inprod{\gamma_t}{x_{t + 1} - \sx} + \norm{x_t - x_{t + 1}}_2^2. \end{gathered}
\end{equation}
Hence, it suffices to lower bound the right-hand side of the above expression. Let $\gamma_t = p_t + e_t$ denote the  $(C_2, \frac{\Cp}{6\sqrt s})$ short-flat decomposition of $\gamma_t$ which exists by Definition~\ref{def:ostep} assuming the step oracle succeeded. We begin by observing
\begin{equation}\label{eq:otherterms}
	\begin{aligned}
		2\eta R \inprod{\gamma_t}{x_{t + 1} - x_t} + \norm{x_t - x_{t + 1}}_2^2 &= 2\eta R \inprod{e_t}{x_{t + 1} - x_t} + 2\eta R \inprod{p_t}{x_{t + 1} - x_t} + \norm{x_t - x_{t + 1}}_2^2 \\
		&\ge -2\eta R\norm{e_t}_\infty \norm{x_{t + 1} - x_t}_1 - \eta^2 R^2 \norm{p_t}_2^2 \\
		&\ge -\eta R^2 \Cp - \eta^2 R^2 C_2^2.
	\end{aligned}
\end{equation}
The first inequality followed from H\"older on the first term and Cauchy-Schwarz on the latter two terms in the preceding line. The second followed from the $\ell_1$ radius of $\xset$, and the bounds on $e_t$ and $p_t$ from \eqref{eq:shortflat}. Next, from Definition~\ref{def:ostep}, for $\Delta = \Delta_t = \frac{1}{R}(\ma x_t - b)$ and $v = \frac{1}{R}(x_t - \sx)$,
\begin{equation}\label{eq:progressterm}
	2\eta R \inprod{\gamma_t}{x_t - \sx} = 2 \eta R \sum_{i \in [n]} w_i \Delta_i \inprod{a_i}{v} = 2\eta R^2 \sum_{i \in [n]} w_i \Delta_i^2 \ge 2\eta R^2 \Cp.
\end{equation}
Finally, \eqref{eq:singlestep} follows from combining \eqref{eq:optconds}, \eqref{eq:otherterms}, and \eqref{eq:progressterm}, with our choice of $\eta$, and the fact that inducting on this lemma implies the $\ell_2$ distance to $\sx$ of the iterates is monotone decreasing.

Next, we claim that regardless of whether Algorithm~\ref{alg:hrs} terminates on Line 7 or Line 11, we have $\norm{x_t - \sx}_2 \le \frac 1 4 R$. Note that the vector $v = \frac{1}{R} (x_t - \sx)$ satisfies $\ma v = \Delta \defeq \frac{1}{R}(\ma x_t - b)$. By assumption the condition $\norm{v}_1 \le 2 \sqrt{2s}$ is met (since $x_t, x^\star \in \xset$), and upon iterating \eqref{eq:singlestep} on our radius bound assumption, this implies that the condition $\norm{v}_2 \le 1$ is met. Hence, if the algorithm terminated on Line 7, we must have $\norm{v}_2 \le \frac 1 4 R \implies \norm{x_t - \sx}_2 \le \frac 1 4 R$, as otherwise the termination condition would have been false. On the other hand, by \eqref{eq:singlestep}, after $T$ steps we have
\[\norm{x_{T} - \sx}_2^2 \le \exp\Par{-\frac{TC_2^2}{2\Cp^2}} \norm{x_0 - \sx}_2^2 \le \frac{1}{16} R^2.\]
We conclude that at termination, $\norm{x_t - \sx}_2 \le \frac 1 4 R$. Now, $s$-sparsity of $\sx$ and the definition of $\xout = \argmin_{\norm{x}_0 \le s} \norm{x - x_t}_2$ imply the desired 
\begin{equation}\label{eq:roundingdouble}
	\norm{\xout - \sx}_2 \le \norm{\xout - x_t}_2 + \norm{\sx - x_t}_2 \le 2\norm{\sx - x_t}_2 \le \half R.
\end{equation}
\end{proof}

\subsection{Designing a step oracle}\label{ssec:oracle}

In this section, we design a step oracle $\ostep(\Delta, \ma)$ (see Definition~\ref{def:ostep}) under Assumption~\ref{assume:det} on the input matrix $\ma \in \R^{n \times d}$. Our step oracle iteratively builds a weight vector $\bw \in \R^n_{\ge 0}$. It will be convenient to define
\begin{equation}\label{eq:gammadef}
\gamma_{\bw} \defeq \sum_{i \in [n]} \bw_i \Delta_i a_i.
\end{equation}
Note that a valid step oracle always exists (although it is unclear how to implement the following solution): namely, setting $\bw = w^\star$ satisfies the oracle assumptions by the second and third conditions in Assumption~\ref{assume:det}. 
In order to ensure Algorithm~\ref{alg:onstep} is indeed a step oracle, we track two potentials for some $\mu$, $C$ we will define in Algorithm~\ref{alg:ostep}:
\begin{equation}\label{eq:abdef}
\begin{gathered}
\Apot(\bw) \defeq \sum_{i \in [n]} \bw_i \Delta_i^2 \text{ and } \; \Bpot(\bw) \defeq \Par{\min_{\norm{p}_2 \le L\norm{\bw}_1}\sqmax_\mu\Par{\gamma_{\bw} - p}} + \frac{\norm{\bw}_1}{4CLs},\\
\text{where } \sqmax_\mu(x) \defeq \mu^2 \log\Par{\sum_{j \in [d]} \exp\Par{\frac{x_j^2}{\mu^2}} }.
\end{gathered}
\end{equation}
Intuitively, $\Apot(\bw)$ corresponds to progress on \eqref{eq:prog}, and $\Bpot(\bw)$ is intended to track the bounds \eqref{eq:shortflat}. We note the following fact about the $\sqmax$ function which follows from direct calculation.

\begin{fact}\label{fact:smax}
For all $x \in \R^d$, $\norm{x}_\infty^2 \le \sqmax_\mu(x)$, and $\sqmax_\mu(x) \ge \mu^2 \log(d)$.
\end{fact}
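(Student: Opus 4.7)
The plan is very short since the statement is essentially immediate from the definition of $\sqmax_\mu$ as a soft-max approximation to $\max_j x_j^2$. Both bounds follow by lower bounding the sum $\sum_{j \in [d]} \exp(x_j^2/\mu^2)$ by a single term and by $d$ respectively, and then applying $\log$ and multiplying by $\mu^2$.

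For the first bound $\norm{x}_\infty^2 \le \sqmax_\mu(x)$: let $j^\star \in \argmax_{j \in [d]} x_j^2$, so $x_{j^\star}^2 = \norm{x}_\infty^2$. Dropping all other terms in the sum,
\[
\sum_{j \in [d]} \exp\Par{\frac{x_j^2}{\mu^2}} \ge \exp\Par{\frac{x_{j^\star}^2}{\mu^2}} = \exp\Par{\frac{\norm{x}_\infty^2}{\mu^2}}.
\]
Taking logarithms of both sides and multiplying by $\mu^2$ yields $\sqmax_\mu(x) \ge \norm{x}_\infty^2$.

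For the second bound $\sqmax_\mu(x) \ge \mu^2 \log(d)$: since $x_j^2 \ge 0$ for every $j$, we have $\exp(x_j^2/\mu^2) \ge 1$, and thus $\sum_{j \in [d]} \exp(x_j^2/\mu^2) \ge d$. Taking logarithms and multiplying by $\mu^2$ gives the claim. There is no main obstacle here; the only ``choice'' is which term to keep in the soft-max sum, and each of the two natural choices yields one of the two claimed bounds.
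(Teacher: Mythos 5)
Your proof is correct, and it is exactly the ``direct calculation'' the paper alludes to when stating Fact~\ref{fact:smax} without an explicit proof: lower bound the sum $\sum_{j \in [d]} \exp(x_j^2/\mu^2)$ by its largest term for the first inequality, and by $d$ (since every summand is at least $1$) for the second, then take logarithms and multiply by $\mu^2$.
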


Also it will be important to note that $\Bpot(\bw)$ can be computed to high precision efficiently. We state this claim in the following and defer a full proof to Appendix~\ref{app:deferred}; we give a subroutine which performs a binary search on a Lagrange multiplier on the $\ell_2$ constraint on $p$, and then solves for each optimal $p_j$ using another binary search based on the Lagrange multiplier value.

\begin{restatable}{lemma}{restatesqmax}\label{lem:optimize-softmax}
Let $\delta > 0$ and $\theta \ge 0$.  For any vector $\gamma \in \R^d$, we can solve the optimization problem 
\[
\min_{\norm{p}_2 \leq \theta} \sqmax_\mu( \gamma - p)
\]
to additive accuracy $\delta$ in time 
\[O\left(d \log^2\left( \frac{\norm{\gamma}_2^2}{\mu \sqrt{\delta}}\right) 
\right).\]
\end{restatable}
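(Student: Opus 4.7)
The first key observation is that $\sqmax_\mu$ is convex: each term $\exp(x_j^2/\mu^2)$ is a convex function of $x_j$, and $\log\sum_j \exp(u_j)$ is convex and coordinatewise nondecreasing in $u$, so the standard composition rule yields convexity of $\sqmax_\mu$. Hence the problem is a convex minimization with strong duality. If $\theta \ge \norm{\gamma}_2$, the unconstrained optimum $p = \gamma$ attains $\sqmax_\mu(0) = \mu^2 \log d$ and is feasible, so we return it immediately. Otherwise the $\ell_2$ constraint is tight at the optimum, and the KKT stationarity condition for the Lagrangian $\sqmax_\mu(\gamma-p) + \lambda(\norm{p}_2^2 - \theta^2)$ reads, for each coordinate $j$:
\[
\lambda S(p) \cdot p_j \;=\; (\gamma_j - p_j) \exp\!\Par{\frac{(\gamma_j - p_j)^2}{\mu^2}}, \qquad S(p) \defeq \sum_k \exp\!\Par{\frac{(\gamma_k - p_k)^2}{\mu^2}}.
\]

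The system above is coupled across coordinates only through the scalar $S(p)$, so the critical step is to reparametrize via $\tau \defeq \lambda S(p)$. For each fixed $\tau \ge 0$, the coordinate equation $\tau p_j = (\gamma_j - p_j)\exp((\gamma_j - p_j)^2/\mu^2)$ is univariate in $p_j$; its right-hand side is strictly monotone decreasing in $p_j$ on the interval between $0$ and $\gamma_j$ (the derivative is $-\exp((\gamma_j - p_j)^2/\mu^2)(1 + 2(\gamma_j - p_j)^2/\mu^2) < 0$). Thus the equation has a unique solution $p_j(\tau)$, which bisection over $[0,\gamma_j]$ (sign adjusted appropriately) computes to any desired tolerance in $O(\log)$ time.

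As $\tau$ increases from $0$ to $+\infty$, each $|p_j(\tau)|$ moves monotonically from $|\gamma_j|$ down to $0$, so $\tau \mapsto \norm{p(\tau)}_2$ is strictly decreasing and there is a unique $\tau^\star > 0$ with $\norm{p(\tau^\star)}_2 = \theta$. The full algorithm is then a nested bisection: an outer bisection on $\tau$ locates $\tau^\star$ in $O(\log)$ iterations, each of which evaluates $\tau \mapsto p(\tau)$ via $d$ independent inner bisections in $O(d\log)$ total time. This yields the claimed $O(d \log^2)$ runtime, with the log factor $\log(\norm{\gamma}_2^2/(\mu\sqrt{\delta}))$ emerging once one translates the bisection tolerances into additive $\delta$ error on the objective by a routine sensitivity computation using Lipschitz bounds on $\sqmax_\mu$ and on the implicit map $\tau \mapsto p(\tau)$.

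The main obstacle I foresee is this final bookkeeping step: one must track the sensitivity of $p(\tau)$ and of $\sqmax_\mu$ over the active regime carefully enough to recover precisely the stated logarithmic dependence (and correctly pin down the initial interval $[\tau_{\min}, \tau_{\max}]$ for the outer bisection) rather than settling for a worse polylogarithmic bound. Conceptually, though, the entire proof reduces to the single reparametrization $\tau = \lambda S(p)$, which decouples the KKT system into $d$ independent monotone one-dimensional equations amenable to bisection.
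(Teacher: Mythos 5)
Your proposal follows essentially the same route as the paper's proof: the paper also reduces to the regime where the constraint is tight, writes the stationarity condition as $p_j = \exp(\zeta)\,(\gamma_j - p_j)\exp\bigl((\gamma_j-p_j)^2/\mu^2\bigr)$ (your $\tau$ is $\exp(-\zeta)$ there, i.e.\ the multiplier and the softmax normalizer collapsed into one scalar), solves each coordinate equation by an inner binary search using exactly the monotonicity you cite, and runs an outer binary search on the scalar parameter using monotonicity of $\norm{p(\cdot)}_2$. So the skeleton is correct and matches the paper.

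The caveat is that the step you defer as routine bookkeeping is where most of the paper's proof actually lives, and it contains two points that are not automatic. First, the outer search interval: handling only the exact case $\theta \ge \norm{\gamma}_2$ is not enough, because when $\norm{\gamma}_2$ is barely above $\theta$ (or $\theta$ is tiny) the optimal multiplier can be arbitrarily extreme and the claimed $\log^2$ bound would not follow; the paper separately outputs $0$ when $\theta \le \delta/(2\norm{\gamma}_2)$ and treats $\norm{\gamma}_2 \le \theta + \sqrt{\delta}$ by hand, which is what yields $\norm{\gamma - p}_2 \ge \sqrt{\delta}$ at the optimum and hence a bracket of length $O\bigl(\norm{\gamma}_2^2/\mu^2 + \log(\norm{\gamma}_2^2/\delta)\bigr)$ for the outer search. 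Second, terminating the outer bisection when $\norm{p(\tau)}_2$ is close to $\theta$ does not by itself place $p(\tau)$ close to the optimizer (norm-closeness does not imply vector-closeness in general); the paper derives this from entrywise monotonicity of the family --- for entrywise-ordered $u \ge v \ge 0$ one has $\norm{u}_2^2 \ge \norm{v}_2^2 + \norm{u-v}_2^2$ --- and then converts $\ell_2$-closeness into additive error $\delta$ via the $2\norm{\gamma}_2$-Lipschitzness of $\sqmax_\mu(\gamma - \cdot)$ on the sign-restricted region, which is also what fixes the required inner tolerance $\delta/(2\norm{\gamma}_2)$. A further small but necessary detail: the inexact inner solutions must be rounded consistently (the paper always rounds down) so that the quantity compared in the outer bisection remains monotone in the parameter and the bisection genuinely brackets the optimal multiplier. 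None of this changes your approach --- it is the paper's --- but these are the missing ingredients needed to turn the sketch into the stated runtime and accuracy guarantee.
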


We state the full implementation of our step oracle as Algorithm~\ref{alg:ostep} below.

\begin{algorithm2e}[ht!]
	\caption{$\OStep(\Delta, \ma, \delta)$}
	\label{alg:ostep}
		\codeInput $\Delta \in \R^n, \ma \in \R^{n \times d}$ satisfying Assumption~\ref{assume:det}, $\delta \in (0, 1)$ \;
		\codeOutput $w$ such that if there is $v \in \R^d$ with $\frac 1 4 \le \norm{v}_2 \le 1$ and $\norm{v}_1 \le 2\sqrt{2s}$ such that $\Delta = \ma v$, with probability $\ge 1 - \delta$, \eqref{eq:prog}, \eqref{eq:shortflat} are satisfied with $Cp = 1,\; C_2 = O(1).$
		\codeLineSace
		
		$C \gets 200$, $\mu \gets \frac{1}{\sqrt{Cs \log d} }$, $\eta \gets \frac 1 {K\wsinf s\rho^2\log d}$, $N' \gets \lceil\log_2 \frac 1 \delta\rceil$ \;
		\For{$0 \le  k \le N'$}{
			$w_0 \gets 0_n$, $N \gets \lceil \frac{5Ln}{\eta}\rceil$\;
			\For{$0 \le t \le N$}{
				\lIf{$\Apot(w_t) \ge 1$}{
					\codeReturn $w \gets w_t$ 
				}
				Sample $i \sim_{\textup{unif.}} [n]$ \;
				Compute (using Lemma~\ref{lem:optimize-softmax}) $d_t \in [0, \eta\wsinf]$ maximizing to additive $O(\frac \eta n)$
		\begin{equation}\label{eq:deltadef}\Gamma_t(d) \defeq \Apot(w_t + d e_i) - Cs \Bpot(w_t + d e_i)\end{equation}
				$w_{t + 1} \gets w_t + d_t e_i$ \;
			}
		}
		\codeReturn $w \gets 0_n$ \;
\end{algorithm2e}

Our main helper lemma bounds the expected increase in $\Bpot$ from choosing a row of $\ma$ uniformly at random, and choosing a step size according to $w^\star$. We do not know $w^\star$, but we argue that our algorithm makes at least this much expected progress. Define the decomposition promised by \eqref{eq:shortflatv}:
\[p^\star \defeq \trunc\Par{\ma^\top \mw^\star \Delta, \frac{1}{K\sqrt s}},\; e^\star \defeq  \ma^\top \mw^\star \Delta - p^\star.\]
Furthermore, define for all $i \in [n]$,
\begin{equation}\label{eq:zdef}z^{(i)} \defeq \eta w^\star_i(\Delta_i a_i - \pstar),\end{equation}
where $\pstar$ is given by \eqref{eq:shortflatv}. We use $\{z^{(i)}\}_{i \in [n]}$ as certificates of $\Bpot$'s growth in the following.

\begin{lemma}\label{lem:bgrowth}
Assume that the constant $K$ in Assumption~\ref{assume:det} is sufficiently large, and that $\Delta = \ma v$ where $v$ satisfies the norm conditions in Assumption~\ref{assume:det}. Then for any $\bw \in \R^n_{\ge 0}$ such that $\Bpot(\bw) \leq C^2 \mu^2 \log d$, and $\eta \le \frac{1}{K\wsinf  s\rho^2\log d}$, we have
\[ \E_{i \sim_{\textup{unif.}} [n]} \Brack{\Bpot(\bw + \eta w^\star_i)} \le \Bpot(\bw) + \frac{1}{2CLs}\cdot \frac{\eta}{n}. \]
\end{lemma}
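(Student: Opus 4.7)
My plan is to construct an explicit feasible witness for the $\sqmax_\mu$-minimization defining $\Bpot(\bw + \eta w^\star_i e_i)$, then estimate the growth of that minimum via a second-order expansion of $\sqmax_\mu$.

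\textbf{Reduction via a feasible witness.} Let $p_\bw$ attain the minimum in $\Bpot(\bw)$. Since $\norm{\pstar}_2 \le L$ by \eqref{eq:shortflatv}, the point $p' \defeq p_\bw + \eta w^\star_i \pstar$ satisfies $\norm{p'}_2 \le L\norm{\bw + \eta w^\star_i e_i}_1$ by the triangle inequality and hence is feasible. Its residual equals $\tilde\gamma + z^{(i)}$ where $\tilde\gamma \defeq \gamma_\bw - p_\bw$, by the definition of $z^{(i)}$ in \eqref{eq:zdef}. Since the $\norm{\bw}_1/(4CLs)$ term of $\Bpot$ grows in expectation by exactly $\eta/(4CLsn)$ (using $\norm{w^\star}_1 = 1$), it suffices to prove $\E_i[\sqmax_\mu(\tilde\gamma + z^{(i)}) - \sqmax_\mu(\tilde\gamma)] \le \eta/(4CLsn)$.

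\textbf{First-order term.} A second-order Taylor expansion of $\sqmax_\mu$ around $\tilde\gamma$ splits into two pieces. For the first-order term, direct computation gives $\E_i z^{(i)} = (\eta/n)\estar$, and H\"older combined with the identity $\nabla\sqmax_\mu(\tilde\gamma) = 2\tilde\gamma \odot q$ (where $q_j \propto \exp(\tilde\gamma_j^2/\mu^2)$) produces averaged contribution at most $(2\eta/n)\norm{\tilde\gamma}_\infty \norm{\estar}_\infty$. The hypothesis $\Bpot(\bw) \le C^2\mu^2\log d = C/s$ (with our choice of $\mu$) combined with Fact~\ref{fact:smax} yields $\norm{\tilde\gamma}_\infty \le \sqrt{C/s}$, and $\norm{\estar}_\infty \le 1/(K\sqrt s)$ by \eqref{eq:shortflatv}, producing a bound $(\eta/n)\cdot O(\sqrt{C}/(Ks))$.

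\textbf{Second-order term (main obstacle).} The PSD upper bound $\nabla^2\sqmax_\mu(\tilde\gamma) \preceq 2\diag{q} + (4/\mu^2)\diag{\tilde\gamma^2 \odot q}$ reduces the problem to bounding $\sum_j q_j(1 + 4\tilde\gamma_j^2/\mu^2) \E_i (z^{(i)}_j)^2$. A naive bound $\norm{z^{(i)}}_\infty^2 \lesssim \eta^2\wsinf^2 s\rho^4$ wastes an unaffordable factor of $s\rho^2$; the key trick is to average \emph{coordinatewise first}. Expanding $(z^{(i)}_j)^2 \le 2\eta^2(w^\star_i)^2(\Delta_i^2 \ma_{ij}^2 + (p^\star_j)^2)$ and using $\sum_i (w^\star_i)^2 \le \wsinf$, the restricted-smoothness-type bound $\sum_i w^\star_i \Delta_i^2 \le L$ from \eqref{eq:rsc}, and $\ma_{ij}^2 \le \rho^2$, one obtains $\E_i (z^{(i)}_j)^2 \le (2\eta^2\wsinf/n)(\rho^2 L + (p^\star_j)^2)$. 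Substituting back, with $\mu^2 = 1/(Cs\log d)$, $\norm{\tilde\gamma}_\infty^2 \le C/s$, $\norm{\pstar}_\infty \le 1/(K\sqrt s)$, and $\eta\wsinf \le 1/(Ks\rho^2\log d)$, yields second-order contribution at most $(\eta/n)\cdot O(C^2 L/(Ks))$.

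Summing all contributions gives total expected growth at most $(\eta/n)\bigl(O(C^2 L/(Ks)) + 1/(4CLs)\bigr) \le \eta/(2CLsn)$ for $K$ sufficiently large (as a function of $C$ and $L$), which is the claimed bound. The only remaining technicality is to rigorously justify the second-order expansion of $\sqmax_\mu$; this can be handled either by uniformly bounding the Hessian on the segment $[\tilde\gamma, \tilde\gamma + z^{(i)}]$ (the bounds on $\norm{\tilde\gamma}_\infty, \norm{z^{(i)}}_\infty$ survive small perturbations), or directly via the log-sum-exp identity $\sqmax_\mu(\tilde\gamma + z) - \sqmax_\mu(\tilde\gamma) = \mu^2\log \E_{j\sim q}\exp((2\tilde\gamma_j z_j + z_j^2)/\mu^2)$ paired with $e^Y \le 1 + Y + Y^2$ for $|Y|\le 1$, whose hypothesis is verified from the $\eta\wsinf$ bound for $K$ large.
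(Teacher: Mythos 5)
Your argument is in substance the paper's own: the same feasible witness $p_{\bw} + \eta w_i^\star p^\star$, the same vectors $z^{(i)}$, a first-order term controlled by $\E_i z^{(i)} = \tfrac{\eta}{n}e^\star$ together with $\norm{e^\star}_\infty \le \tfrac{1}{K\sqrt s}$, and a second-order term tamed by averaging coordinatewise via $(w_i^\star)^2 \le \wsinf w_i^\star$ and \eqref{eq:rsc}; indeed your fallback route through the log-sum-exp identity and $e^Y \le 1+Y+Y^2$ is exactly the paper's computation (the paper additionally passes the expectation inside the logarithm by concavity, which your pointwise Taylor bound sidesteps). One slip: you invoke $\norm{p^\star}_\infty \le \tfrac{1}{K\sqrt s}$, but $p^\star$ is the truncated (``short'') part of the decomposition, guaranteed only $\norm{p^\star}_2 \le L$ and hence $\norm{p^\star}_\infty \le L$; the $\tfrac{1}{K\sqrt s}$ bound belongs to $e^\star$. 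With the correct bound your second-order contribution becomes $(\eta/n)\cdot O(C^2 L^2/(Ks))$ rather than $O(C^2 L/(Ks))$, which is still absorbed by taking $K$ sufficiently large as a function of $C$ and $L$ (this is exactly what the paper does in \eqref{eq:quadbound}, using $\norm{p^\star}_\infty \le \norm{p^\star}_2 \le L$), so the conclusion stands.
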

\begin{proof}
We assume for simplicity $L \ge 2\sqrt{2}$ as otherwise we may set $L \gets \max(2\sqrt{2}, L)$ and \eqref{eq:rsc} remains true. Let $p_{\bw}$ be the minimizing argument in the definition of $\Bpot(\bw)$ in \eqref{eq:abdef}. For any $i \in [n]$, it is clear that $p_{\bw} + (\eta w_i^\star) \pstar $ is a valid argument for the optimization problem defining $\Bpot(\bw + \eta w^\star_i)$ by $\norm{\pstar}_2 \le L$, and since $\norm{w}_1$ grows by $\eta w_i^\star$. Next, define
\begin{equation}\label{eq:fdef}
F(x) \defeq \sum_{j \in [d]} \exp\Par{\frac{x_j^2}{\mu^2} }
\end{equation}
such that $\Bpot(\bw) = \mu^2 \log F(x) + \frac{\norm{\bw}_1}{4CLs}$ for $x = \gamma_{\bw} - p_{\bw}$. As discussed earlier, since $\norm{p_{\bw} + (\eta w_i^\star)p^\star}_2 \le \norm{p_{\bw}}_2 + \eta w_i^\star L$, we conclude
\begin{equation}\label{eq:bpotvalid}
\Bpot(\bw + \eta w^\star_i) \le \mu^2 \log F(x + z^{(i)}) + \frac{\norm{\bw + \eta w^\star_i}_1}{4CLs}.
\end{equation} 
We next compute
\begin{equation}\label{eq:avFdiff}
\begin{aligned}
\frac 1 n \sum_{i \in [n]} F(x + z^{(i)}) &= \frac 1 n \sum_{j \in [d]} \exp\Par{\frac{x_j^2}{\mu^2}} \Par{\sum_{i \in [n]} \exp\Par{\frac{2x_j z^{(i)}_j + (z^{(i)}_j)^2}{\mu^2}}}  \\
&\le \frac 1 n F(x) \max_{j \in [d]} \Par{\sum_{i \in [n]} \exp\Par{\frac{2x_j z^{(i)}_j + (z^{(i)}_j)^2}{\mu^2}}}.
\end{aligned}
\end{equation}
We now bound the right-hand side of this expression. For any $i \in [n]$ and $j \in [d]$, recalling \eqref{eq:zdef},
\begin{equation}\label{eq:zinfbound}
\Abs{z^{(i)}_j} \le \eta w_i^\star (|\Delta_i| \norm{a_i}_\infty + \norm{\pstar}_2) \le \eta \wsinf L(\sqrt s \rho^2 + 1).
\end{equation}
The second inequality used our bounds from Assumption~\ref{assume:det}; note that for $\Delta = \ma v$ where $v$ satisfies the norm conditions in Assumption~\ref{assume:det}, $|\Delta_i| \le \rho \norm{v}_1 \le 2\sqrt{2s}\rho$. Hence, if we choose a sufficiently large constant $K$ in Assumption~\ref{assume:det}, we have
\[\frac{1}{\mu} \Abs{z^{(i)}_j} \le \frac{\sqrt{C}}{K\sqrt{s\log d}\rho^2} \cdot \Par{L (\sqrt s \rho^2 + 1)} \le \frac{1}{4C\sqrt{ \log d}} .\]
Also by the assumption that $\Bpot(\bw) \leq C^2 \mu^2 \log d$ we must have that for all $j \in [d]$,
\[
\frac{|x_j|}{\mu} \leq C\sqrt{\log d}.
\]
Now, using $\exp(c) \le 1 + c + c^2$ for $|c| \le 1$, we get
\begin{equation}\label{eq:taylor}
\begin{aligned}
\sum_{i \in [n]}\exp\Par{\frac{2x_j z^{(i)}_j + (z^{(i)}_j)^2}{\mu^2}} &\le \sum_{i \in [n]} \Par{1 + \frac{2x_j z^{(i)}_j}{\mu^2} +    \frac{(z^{(i)}_j)^2}{\mu^2} + \Par{\frac{2x_j z^{(i)}_j + (z^{(i)}_j)^2}{\mu^2}}^2} \\
&\le \sum_{i \in [n]} \Par{1 + \frac{2x_j z^{(i)}_j}{\mu^2} +   10C^2\log d \cdot  \frac{(z^{(i)}_j)^2}{\mu^2}}.
\end{aligned}
\end{equation}
We control the first-order term via the observation that $\sum_{i \in [n]} z^{(i)} = \eta \estar$ which is $\ell_\infty$-bounded from \eqref{eq:shortflatv}, so taking the constant $K$ in Assumption~\ref{assume:det} sufficiently large, we have
\begin{equation}\label{eq:linbound}
\begin{gathered}
\left \lvert \sum_{i \in [n]}  \frac{z^{(i)}_j}{\mu} \right \rvert \le \frac \eta \mu \norm{\estar}_\infty \le \frac{\eta\sqrt{C\log d}}{K} \\
\implies \Abs{\sum_{i \in [n]} \frac{2x_j z_j^{(i)}}{\mu^2}} \le 2C\sqrt{\log d} \cdot \frac{\eta \sqrt{C\log d}}{K} \le \frac{\eta \log d}{8L}.
\end{gathered}
\end{equation}
In the last inequality we assumed $K \ge 16C^{1.5}L$.
We control the second-order term by using $(a + b)^2 \le 2a^2 + 2b^2$, $\norm{\pstar}_\infty \le \norm{\pstar}_2 \le L$, and \eqref{eq:rsc}:
\begin{equation}\label{eq:quadbound}
\begin{aligned}
\sum_{i \in [n]} \Par{z^{(i)}_j}^2 &\le 2\eta^2\wsinf \Par{\sum_{i \in [n]} w_i^\star [\pstar]_j^2  +  \sum_{i \in [n]} w_i^\star \Delta_i^2\rho^2} \le 2\eta^2\wsinf (L\rho^2 + L^2).
\end{aligned}
\end{equation}
Putting together \eqref{eq:taylor}, \eqref{eq:linbound}, and \eqref{eq:quadbound}, with the definition of $\mu$, we conclude for sufficiently large $K$,
\begin{align*}
\frac 1 n \sum_{i \in [n]} \exp\Par{\frac{2x_j z^{(i)}_j + (z^{(i)}_j)^2}{\mu^2}} &\le 1 + \frac{\eta \log d}{8Ln} + \frac{2\eta^2\wsinf(L\rho^2 + L^2)}{n\mu^2} \\
&\le 1 + \frac{\eta \log d}{4Ln} .
\end{align*}
Hence, combining the above with \eqref{eq:avFdiff}, and using $\log(1 + c) \le c$ for all $c$,
\begin{equation}\label{eq:Fbound}
\mu^2 \log \Par{\frac 1 n \sum_{i \in [n]} F(x + z^{(i)})} \le \mu^2 \log F(x) + \frac{\mu^2\eta \log d}{4Ln} = \mu^2 \log F(x) + \frac{1}{Cs} \cdot \frac{\eta }{4Ln}. 
\end{equation}
Finally, we compute via \eqref{eq:bpotvalid} and concavity of $\log$,
\begin{align*}\E_{i \sim_{\textup{unif.}} [n]} [\Bpot(\bw + \eta w_i^\star)] &\le \frac{\mu^2}{n} \sum_{i \in [n]} \log F(x + z^{(i)}) + \frac{ \norm{\bw}_1}{4CLs} + \frac{1}{4CLs}\Par{\frac{1}{n} \sum_{i \in [n]} \eta w_i^\star} \\
&\le \mu^2 \log\Par{\frac 1 n \sum_{i \in [n]} F(x + z^{(i)})} + \frac{ \norm{\bw}_1}{4CLs} + \frac{1}{4CLs} \cdot \frac \eta n \\
&\le \mu^2 \log F(x) + \frac{ \norm{\bw}_1}{4CLs} + \frac{1}{Cs} \cdot \frac {\eta }{2Ln} = B(\bw) + \frac{1}{Cs}\cdot \frac{\eta}{2Ln}.\end{align*}
In the last line, we used the bound \eqref{eq:Fbound}.
\end{proof}

Finally, we can complete the analysis of Algorithm~\ref{alg:ostep}.

\restatesteporacle*
\begin{proof} 
	
It suffices to prove Algorithm~\ref{alg:ostep} meets its output guarantees in this time. Throughout this proof, we consider one run of Lines 5-10 of the algorithm, and prove that it successfully terminates on Line 7 with probability $\ge \half$ assuming $\ma$ satisfies Assumption~\ref{assume:det} and that $\Delta = \ma v$ for $v$ satisfying the norm bounds in Assumption~\ref{assume:det}. This yields the failure probability upon repeating $N'$ times. 

For the first part of this proof, we assume we can exactly compute $\Delta_t$, and carry out the proof accordingly. We discuss issues of approximation tolerance at the end, when bounding the runtime. 

\paragraph{Correctness.} We use the notation $A_t \defeq \Apot(w_t)$, $B_t \defeq \Bpot(w_t)$, and $\Phi_t \defeq A_t - CsB_t$. We first observe that $A_t$ is $1$-Lipschitz, meaning it can only increase by $1$ in any given iteration; this follows from $\eta\wsinf \Delta_i^2 \le \frac{1}{8s\rho^2} \Delta_i^2 \le 1$, since $\Delta_i^2 = \inprod{a_i}{v}^2 \le 8s\rho^2$ by $\ell_\infty$-$\ell_1$ H\"older. 

Suppose some run of Lines 5-13 terminates by returning on Line 8 in iteration $T$, for $0 \le T \le N$. The termination condition implies that $A_T \ge 1 = \Cp$, so to show that the algorithm satisfies Definition~\ref{def:ostep}, it suffices to show existence of a short-flat decomposition in the sense of \eqref{eq:shortflat}. Clearly, $\Phi_t$ is monotone non-decreasing in $t$, since we may always force $\Gamma_t = 0$ by choosing $d_t = 0$. Moreover, $\Phi_0 = -CsB_0 = -Cs\mu^2 \log d = -1$. The above Lipschitz bound implies that $A_T \le 2$, since $A_{T - 1} \le 1$ by the termination condition; hence,
\[A_T - CsB_T = \Phi_T \ge \Phi_0 = -1 \implies B_T \le \frac{A_T + 1}{Cs} \le \frac{3}{Cs} \le C^2\mu^2\log d.\]
Note that the above inequality and nonnegativity of $\sqmax_\mu$ imply that $\frac{\norm{w_T}_1}{4LCs} \le \frac{3}{Cs}$, so $\norm{w_T}_1 \le12L$. For the given value of $C = 200$, and the first inequality in Fact~\ref{fact:smax}, the definition of the first summand in $B$ implies there is a short-flat decomposition meeting \eqref{eq:shortflat} with $C_2 = L\norm{w_T}_1 = O(1)$.

Hence, we have shown that Definition~\ref{def:ostep} is satisfied whenever the algorithm returns on Line 7. We make one additional observation: whenever $\Phi_t \ge 0$, the algorithm will terminate. This follows since on such an iteration,
\[A_t \ge CsB_t \ge CsB_0 = Cs\mu^2\log d = 1,\]
since clearly the function $B$ is minimized by the all-zeroes weight vector, attaining value $\mu^2\log d$.

\paragraph{Success probability.} We next show that with probability at least $\half$, the loop in Lines 5-10 will terminate. Fix an iteration $t$. When sampling $i \in [n]$, the maximum gain in $\Phi_t$ for $d_t \in [0, \eta \wsinf]$ is at least that attained by setting $d_t = \eta w^\star_i$, and hence
\begin{equation}\label{eq:phigrowth}\E[\Phi_{t + 1} - \Phi_t \mid A_t \le 1] \ge \frac{\eta}{Ln} - \frac{\eta}{2Ln} = \frac{\eta}{2Ln}.\end{equation}
Here, we used that the expected gain in $A_t$ by choosing $d_t = \eta w^\star_i$ over a uniformly sampled $i \in [n]$ is lower bounded by $\frac{\eta}{Ln}$ via \eqref{eq:rsc}, and the expected gain in $Cs B_t$ is upper bounded by Lemma~\ref{lem:bgrowth}. 

Let $Z_t$ be the random variable equal to $\Phi_t - \Phi_0$, where we freeze the value of $w_{t'}$ for all $t' \ge t$ if the algorithm ever returns on Line 8 in an iteration $t$. Notice that $Z_t \le 2$ always: whenever $Z_t \ge 1$, we have $\Phi_t \ge 0$ so the algorithm will terminate, and $Z_t$ is $1$-Lipschitz because $A_t$ is. Moreover, whenever we are in an iteration $t$ where $\Pr[A_t \ge 1] \le \half$, applying \eqref{eq:phigrowth} implies 
\begin{align*}
\E[Z_{t + 1} - Z_t] = \E[Z_{t + 1} - Z_t \mid A_t \le 1] \Pr[A_t \le 1] \ge \frac{\eta}{4Ln}.
\end{align*} 
Clearly, $\Pr[A_t \ge 1]$ is a monotone non-decreasing function of $t$, since $A_t$ is monotone. After $N \ge \frac{5Ln}{\eta}$ iterations, if we still have $\Pr[A_t \ge 1] \le \half$, we would obtain a contradiction since recursing the above display yields $\E[Z_N] > 2$. This yields the desired success probability.

\paragraph{Runtime.} The cost of each iteration is dominated by the following computation in Line 9: we wish to find $d \in [0, \eta w^\star_\infty]$ maximizing to additive $O(\frac \eta n)$ the following objective:
\[\Apot(w + d e_i) - Cs\Bpot(w + d e_i).\]
We claim the above function is a concave function of $d$. First, we show $\Bpot$ is convex (and the result will then follow from linearity of $\Apot$). To see this, for two values $w_i$ and $w'_i$, let the corresponding maximizing arguments in the definition of $\Bpot(\bw + w_i)$ and $\Bpot(\bw + w'_i)$ be denoted $p$ and $p'$. Then, $\half(p + p')$ is a valid argument for $\bw + \half (w_i + w'_i)$, and by convexity of $\sqmax_\mu$ and linearity of the $\ell_1$ portion, we have the conclusion. 

Next, note that all $|\Delta_i|$ are bounded by $2\sqrt{2s}\rho$ (proven after \eqref{eq:zinfbound}) and all $a_{ij}$ are bounded by $\rho$ by assumption. It follows that the restriction of $\Apot$ to a coordinate is $8s\rho^2$-Lipschitz. Moreover the linear portion of $\Bpot$ is clearly $\frac 1 {4CLs}$-Lipschitz in any coordinate. Finally we bound the Lipschitz constant of the $\sqmax$ part of $\Bpot$. It suffices to bound Lipschitzness for any fixed $p$ of
\[\sqmax_\mu\Par{\gamma_{\bw} - p + d_i \Delta_i a_i}\]
because performing the minimization over $p$ involved in two $\sqmax(\gamma_{\bw} - p + d \Delta_i a_i)$ and $\sqmax(\gamma_{\bw} - p + d' \Delta_i a_i)$ can only bring the function values closer together. By direct computation the derivative of the above quantity with respect to $d_i$ is
\[\sum_{j \in [d]} \Delta_i a_{ij} \Par{2\Brack{\gamma_{\bw} - p + d_i\Delta_i a_i}_j} q_j \]
for some probability density vector $q \in \Delta^d$. Further we have
\[\Abs{\gamma_{\bw} - p + d_i\Delta_i a_i}_j \le O\Par{\sqrt{s}\rho^2} + O(1) + 2\sqrt{2s}\rho^2\cdot (\eta \wsinf).\]
Here we used our earlier proof that we must only consider values of $\norm{p}_2 = O(1)$ throughout the algorithm (since $\norm{w_t}_1 = O(1)$ throughout) and this also implies no coordinate of $\gamma_{\bw}$ can be larger than $(\max_{i \in [n]} |\Delta_i|) (\max_{i \in [n], j \in [d]} |a_{ij}|) \norm{\bw}_1$ by definition of $\gamma_{\bw}$.
Combined with our bounds on linear portions this shows $\Apot$ and $\Bpot$ are $\text{poly}(nd\rho)$-Lipschitz.

Hence, we may evaluate to the desired $O(\frac \eta n)$ accuracy by approximate minimization of a Lipschitz convex function over an interval (Lemma 33, \cite{CohenLMPS16}) with a total cost of $O(d\log^3(nd\rho))$. Here we use the subroutine of Lemma~\ref{lem:optimize-softmax} in Lemma 33 of \cite{CohenLMPS16}, with evaluation time $O(d\log^2(nd\rho))$. 

The algorithm then runs in $NN'$ iterations, each bottlenecked by the cost of approximating $\Gamma_t$; combining these multiplicative factors yields the runtime. We note that we do not precompute $\Delta = \ma v$; we can compute coordinates of $\Delta$ in time $O(d)$ as they are required by Algorithm~\ref{alg:ostep}.
\end{proof}

\subsection{Equivalence between Assumption~\ref{assume:det} and RIP}\label{ssec:detassume}

The main result of this section is an equivalence between Assumption~\ref{assume:det} and the weighted restricted isometry property, which requires two helper tools to prove. The first is a ``shelling decomposition.'' 

\begin{lemma}\label{lem:shell}
Let $v \in \R^d$ have $\NS(v) \le \sigma$. Then if we write $v = \sum_{l \in [k]} v^{(l)}$ where $v^{(1)}$ is obtained by taking the $s$ largest coordinates of $v$, $v^{(2)}$ is obtained by taking the next $s$ largest coordinates and so on (breaking ties arbitrarily so that the supports are disjoint), we have
\[\sum_{2 \le l \le k} \norm{v^{(l)}}_2 \le  \sqrt{\frac \sigma s} \norm{v}_2.\]
\end{lemma}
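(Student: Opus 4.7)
The plan is to use the standard ``shelling'' bound that each subsequent block $v^{(l)}$ has coordinates no larger (in absolute value) than the minimum coordinate in the previous block $v^{(l-1)}$, then telescope.

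More precisely, first I would observe that since the coordinates of $v^{(l)}$ are (by construction) all bounded in magnitude by the smallest coordinate in $v^{(l-1)}$, and since $v^{(l-1)}$ has exactly $s$ nonzero coordinates, we have the pointwise bound
\[
\norm{v^{(l)}}_\infty \;\le\; \frac{\norm{v^{(l-1)}}_1}{s}
\qquad \text{for all } l \ge 2.
\]
Combining this with $\norm{v^{(l)}}_2 \le \sqrt{s}\,\norm{v^{(l)}}_\infty$ (since $v^{(l)}$ is $s$-sparse) yields
\[
\norm{v^{(l)}}_2 \;\le\; \frac{1}{\sqrt{s}}\,\norm{v^{(l-1)}}_1.
\]

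Next, I would sum this inequality over $l = 2, \dots, k$ and reindex to telescope the $\ell_1$ norms, giving
\[
\sum_{l=2}^{k} \norm{v^{(l)}}_2 \;\le\; \frac{1}{\sqrt{s}} \sum_{l=1}^{k-1} \norm{v^{(l)}}_1 \;\le\; \frac{1}{\sqrt{s}} \norm{v}_1,
\]
where the last inequality uses that the supports of the $v^{(l)}$ partition the support of $v$. Finally, I would invoke the numerical sparsity assumption $\NS(v) \le \sigma$, which is exactly the statement $\norm{v}_1 \le \sqrt{\sigma}\,\norm{v}_2$, to conclude
\[
\sum_{l=2}^{k} \norm{v^{(l)}}_2 \;\le\; \sqrt{\frac{\sigma}{s}}\,\norm{v}_2,
\]
as desired.

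There is no real obstacle: the argument is a standard compressed-sensing shelling estimate, with the only twist being that the sparsity parameter is replaced by the numerical sparsity via the $\ell_1/\ell_2$ bound. The only minor bookkeeping point is to make sure the telescoping drops the final block $v^{(k)}$ on the right-hand side (as opposed to the first block on the left-hand side), which is what allows the strict shift $l \mapsto l-1$ in the indexing.
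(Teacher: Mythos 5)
Your proof is correct and follows essentially the same argument as the paper: the blockwise bound $\norm{v^{(l)}}_2 \le \sqrt{s}\,\norm{v^{(l)}}_\infty \le \tfrac{1}{\sqrt{s}}\norm{v^{(l-1)}}_1$, summing over blocks with disjoint supports, and then applying the numerical sparsity bound $\norm{v}_1 \le \sqrt{\sigma}\,\norm{v}_2$. No meaningful differences from the paper's proof.
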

\begin{proof}
Note that the decomposition greedily sets $v^{(l)}$ to be the $s$ largest coordinates (by absolute value) of $v - \sum_{l' \in [l - 1]} v^{(l')}$, zeroing all other coordinates and breaking ties arbitrarily. This satisfies
\[\norm{v^{(l + 1)}}_2 \le \sqrt{s} \norm{v^{(l + 1)}}_\infty \le \frac{1}{\sqrt s} \norm{v^{(l)}}_1.\]
The last inequality follows since every entry of $v^{(l)}$ is larger than the largest of $v^{(l + 1)}$ in absolute value. Finally, summing the above equation and using disjointness of supports yields
\begin{align*}
\sum_{2 \le l \le k} \norm{v^{(l)}}_2 \le \frac{1}{\sqrt s} \norm{v}_1 \le  \sqrt{\frac{\sigma}{s}} \norm{v}_2.
\end{align*}

\end{proof}

The second bounds the largest entries of image vectors from the transpose of an RIP matrix.

\begin{lemma}\label{lem:atbound}
Let $\ma \in \R^{n \times d}$ be $(s, c)$-RIP, and let $u \in \R^n$. Then,
\[\norm{\ma^\top u}_{2, (s)} \le c \norm{u}_2.\]
\end{lemma}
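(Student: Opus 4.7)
The plan is to use a standard duality-style argument, which is a common tool in the RIP literature. Let $S \subseteq [d]$ be the set of indices of the top $s$ entries of $\ma^\top u$ in absolute value, and let $w \in \R^d$ be the vector obtained by restricting $\ma^\top u$ to $S$ and zeroing out the other coordinates. By construction $w$ is $s$-sparse and $\norm{w}_2 = \norm{\ma^\top u}_{2,(s)}$, so it suffices to bound $\norm{w}_2$.

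The key identity is
\[
\norm{w}_2^2 \;=\; \inprod{w}{\ma^\top u} \;=\; \inprod{\ma w}{u} \;\le\; \norm{\ma w}_2 \norm{u}_2,
\]
where the first equality uses that $w$ is supported on $S$ and matches $\ma^\top u$ there, and the final step is Cauchy--Schwarz. Now, since $w$ is $s$-sparse, the $(s,c)$-RIP upper bound gives $\norm{\ma w}_2 \le \sqrt{c}\, \norm{w}_2$. Substituting, we get $\norm{w}_2^2 \le \sqrt{c}\, \norm{w}_2 \norm{u}_2$, and dividing through by $\norm{w}_2$ (the case $w = 0$ being trivial) yields $\norm{w}_2 \le \sqrt{c}\, \norm{u}_2$. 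Since the RIP constant satisfies $c \ge 1$, we have $\sqrt{c} \le c$ and so $\norm{w}_2 \le c \norm{u}_2$, as claimed.

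There is no real obstacle here; the only small thing to keep track of is that we only use the RIP upper bound and not the lower bound, and that the bound $\sqrt c$ is actually tighter than the stated $c$ (which the proof simply relaxes because $c \ge 1$ is needed for RIP to be consistent). The whole argument is just the observation that restricting $\ma^\top u$ to its top $s$ coordinates produces a vector whose self-inner-product can be pulled back through $\ma$ to an $s$-sparse evaluation.
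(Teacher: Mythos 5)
Your proof is correct and follows essentially the same route as the paper's: restrict $\ma^\top u$ to its top $s$ coordinates, use $\norm{w}_2^2 = \inprod{\ma w}{u}$ with Cauchy--Schwarz, and apply RIP to the $s$-sparse vector $w$. Your explicit handling of the $\sqrt{c}$ versus $c$ relaxation (via $c \ge 1$) is a minor tidiness improvement over the paper's wording but does not change the argument.
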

\begin{proof}
Let $v \in \R^d$ be the $s$-sparse vector obtained by zeroing out all but the $s$ largest coordinates of $\ma^\top u$. The lemma is equivalent to showing $\norm{v}_2 \le c \norm{u}_2$. Note that
\[\norm{v}_2^2 = \inprod{v}{\ma^\top u} \le \norm{\ma v}_2 \norm{u}_2 \le c \norm{v}_2 \norm{u}_2.\]
The first inequality used Cauchy-Schwarz, and the second applied the RIP property of $\ma$ to $v$, which is $s$-sparse by construction. The conclusion follows via dividing by $\norm{v}_2$.
\end{proof}

Using these helper tools, we now prove the main result of this section.
\begin{lemma}\label{lem:assume==RIP}
	The following statements are true.
	\begin{enumerate}
		\item If $\ma$ satisfies Assumption~\ref{assume:det} with weight vector $w^\star$, then $(\mw^\star)^{\half} \ma$ is $(s, L)$-RIP.
		\item If the matrix $(\mw^\star)^{\half}\ma$ is RIP with parameters
		\[
		\Par{12800 L^3 K^2 s , \frac {\sqrt L} 2}
		\]
		for $L \ge 1$, and $\norm{\ma}_{\max} \le \rho$, then $\ma$ satisfies Assumption~\ref{assume:det}.
	\end{enumerate}
\end{lemma}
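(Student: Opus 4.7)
The plan is to prove the two directions separately. Direction 1 follows almost immediately from scaling, while direction 2 verifies Assumption~\ref{assume:det} item by item using the shelling decomposition of Lemma~\ref{lem:shell} and the $\ell_2$-tail bound of Lemma~\ref{lem:atbound}.

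For direction 1, I would observe that any $s$-sparse $v$ satisfies $\norm{v}_1 \le \sqrt{s}\norm{v}_2 \le 2\sqrt{2s}\norm{v}_2$ by Cauchy-Schwarz, so by homogeneity we may normalize to $\norm{v}_2 = 1$, which meets the hypotheses of property 2 of Assumption~\ref{assume:det}. The conclusion $\frac{1}{L} \le \sum_i w_i^\star \Delta_i^2 \le L$ for $\Delta = \ma v$ is exactly $\frac{1}{L}\norm{v}_2^2 \le \norm{(\mw^\star)^{\half}\ma v}_2^2 \le L\norm{v}_2^2$, i.e.\ $(s,L)$-RIP; rescaling recovers the general $s$-sparse case.

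For direction 2, let $\mb \defeq (\mw^\star)^{\half}\ma$, which is $(s_\star, c)$-RIP with $s_\star \defeq 12800 L^3 K^2 s$ and $c \defeq \sqrt{L}/2$. Property 1 is by hypothesis. For property 2, given $v$ with $\norm{v}_2 \in [\frac{1}{4},1]$ and $\norm{v}_1 \le 2\sqrt{2s}$, we have $\NS(v) \le 128s$. I would apply Lemma~\ref{lem:shell} with shell size $s_\star$ to write $v = \sum_l v^{(l)}$ into $s_\star$-sparse pieces satisfying $\sum_{l \ge 2}\norm{v^{(l)}}_2 \le \alpha\norm{v}_2$, where $\alpha = \sqrt{128 s/s_\star} = \frac{1}{10 L^{3/2} K}$. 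Applying the $s_\star$-RIP guarantee of $\mb$ to each shell together with the triangle inequality then yields
\[
\frac{1 - \alpha(1+c)}{\sqrt{c}}\norm{v}_2 \;\le\; \norm{\mb v}_2 \;\le\; \sqrt{c}(1+\alpha)\norm{v}_2.
\]
Squaring, substituting $\alpha$ and $c$, and using $\norm{v}_2 \in [\frac{1}{4},1]$ then verifies $\sum_i w_i^\star \Delta_i^2 = \norm{\mb v}_2^2 \in [\frac{1}{L}, L]$ for the stated choice of constants.

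For property 3, set $y \defeq (\mw^\star)^{\half}\Delta$ so that $\ma^\top \mw^\star \Delta = \mb^\top y$, and observe $\norm{y}_2^2 = \sum_i w_i^\star \Delta_i^2 \le L$ by property 2 above. Lemma~\ref{lem:atbound} applied to $\mb$ at sparsity $s_\star$ gives $\norm{\mb^\top y}_{2, (s_\star)} \le c\norm{y}_2 \le c\sqrt{L} = L/2$, so the $(s_\star+1)$-st largest coordinate magnitude of $\mb^\top y$ is at most $\frac{L/2}{\sqrt{s_\star}} \le \frac{1}{K\sqrt{s}}$ for $L \ge 1$ (by direct calculation with the constant $12800$). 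Thus only the top $s_\star$ entries of $\mb^\top y$ can exceed the truncation threshold $\frac{1}{K\sqrt{s}}$ in magnitude, giving $\norm{\trunc(\ma^\top \mw^\star \Delta, \frac{1}{K\sqrt{s}})}_2 \le \norm{\mb^\top y}_{2,(s_\star)} \le L$, which is exactly \eqref{eq:shortflatv}. The main obstacle is the lower bound in the shelling step for property 2: because $\norm{v}_2$ may be as small as $\frac{1}{4}$ while $c = \sqrt{L}/2$ grows with $L$, the target bound $\norm{\mb v}_2^2 \ge \frac{1}{L}$ is delicate and relies on the factor $L^{3/2}$ in the exponent of the RIP sparsity parameter $s_\star$ (via $\alpha$) to drive $\alpha(1+c)$ to a universal small constant.
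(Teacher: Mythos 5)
Your proposal is correct and takes essentially the same route as the paper's proof: direction 1 by normalizing $s$-sparse vectors and reading off property 2 of Assumption~\ref{assume:det}, and direction 2 by combining the shelling decomposition of Lemma~\ref{lem:shell} (with $\sigma = 128s$ and shell size $12800L^3K^2s$) and the triangle inequality to verify \eqref{eq:rsc}, then Lemma~\ref{lem:atbound} with the top-$s'$/threshold argument to obtain the $(L,\frac{1}{K\sqrt{s}})$ short-flat decomposition. The delicacy you flag in the lower bound of \eqref{eq:rsc} (which effectively needs $L$ large enough, since $\norm{v}_2$ may be as small as $\tfrac14$) is present to the same degree in the paper's own constant-chasing, so nothing is missing relative to its argument.
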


\begin{proof} We prove each equivalence in turn.
	
\paragraph{Assumption~\ref{assume:det} implies RIP.} The statement of RIP is scale-invariant, so we will prove it for all $s$-sparse unit vectors $v$ without loss of generality. Note that such $v$ satisfies the condition in Assumption~\ref{assume:det}, since $\norm{v}_2 = 1$ and $\norm{v}_1 \le \sqrt{s}$ by Cauchy-Schwarz. Then, the second condition of Assumption~\ref{assume:det} implies
that for $\Delta = \ma v$, we have the desired norm preservation:
\[
\frac 1 L \leq \norm{(\mw^*)^{\half}\ma v }_2^2 = \sum_{i \in [n]} w^\star_i \Delta_i^2 \leq L.
\]

\paragraph{Boundedness and RIP imply Assumption~\ref{assume:det}.} Let $v \in \R^d$ satisfy $\frac 1 4 \leq \norm{v}_2 \leq 1$ and $\norm{v}_1 \leq 2 \sqrt{2s}$, and define $\Delta \defeq \ma v$. The first condition in Assumption~\ref{assume:det} is immediate from our assumed entrywise boundedness on $\ma$, so we begin by demonstrating the lower bound in \eqref{eq:rsc}. Let
\[
s' =12800 L^3 K^2 s
\]
and let $v^{(1)}, \ldots , v^{(k)}$ be the shelling decomposition into $s'$-sparse vectors given by Lemma~\ref{lem:shell}, where $\sigma = 128s$ from the $\ell_1$ and $\ell_2$ norm bounds on $v$.  By Lemma~\ref{lem:shell}, we have 
\[
\norm{v^{(2)}}_2  + \dots +  \norm{v^{(k)}}_2 \leq \frac{0.1}{L}\norm{v}_2.
\]
In particular, the triangle inequality then implies $0.9 \norm{v}_2 \le \norm{v^{(1)}}_2 \le \norm{v}_2$. Next, recall that $\sum_{i \in [n]} w^\star_i \Delta_i^2 = \norm{(\mw^\star)^{\half} \ma v}_2^2$. By applying the triangle inequality and since $(\mw^\star)^{\half} \ma$ is RIP,
\begin{align*}
\norm{(\mw^\star)^{\half}\ma v }_2^2 &\geq \left(\norm{(\mw^\star)^{\half}\ma v^{(1)} }_2 - \sum_{l = 2}^k  \norm{(\mw^\star)^{\half}\ma v^{(l)}  }_2  \right)^2  \\
&\ge \Par{\frac 5 {\sqrt L} \cdot 0.9 \norm{v}_2 - \frac{\sqrt{L}}{2} \cdot \frac 1 L\norm{v}_2}^2 \geq \frac{16}{L} \norm{v}_2^2 \ge \frac 1 L.
\end{align*}
In the second inequality, we applied the RIP assumption to each individual term, since all the vectors are $s'$-sparse. Similarly, to show the upper bound in \eqref{eq:rsc}, we have
\begin{align*}
\norm{(\mw^\star)^{\half}\ma v }_2^2 &\leq \left(\norm{(\mw^\star)^{\half}\ma v^{(1)} }_2 + \sum_{l = 2}^k  \norm{(\mw^\star)^{\half}\ma v^{(l)}  }_2  \right)^2 \\
&\le \Par{ \frac{\sqrt{L}}{2} \cdot \norm{v}_2 + \frac{\sqrt{L}}{2} \cdot \frac 1 L \norm{v}_2}^2 \le L.
\end{align*}

It remains to verify the final condition of Assumption~\ref{assume:det}. First, for $u \defeq \mw^{\half} \ma v$, by applying the shelling decomposition to $v$ into $s'$-sparse vectors $\{v^{(l)}\}_{l \in [k]}$,
\begin{equation}\label{eq:rip_eq_equiv_1}
\norm{u}_2 \le \sum_{l \in [k]} \norm{(\mw^\star)^{\half} \ma v^{(l)}}_2 \le \sqrt{L} \norm{v}_2.
\end{equation}
Here, we used our earlier proof to bound the contribution of all terms but $v^{(1)}$. Applying Lemma~\ref{lem:atbound} to the matrix $(\mw^\star)^{\half} \ma$ and vector $u$, we have for $\Delta = \ma v$,
\[\norm{\ma^\top (\mw^\star)^{\half} u}_{2, (s')} = \norm{\ma^\top \mw^\star \Delta }_{2, (s')} \le L.\]
By setting the $\ell_2$ bounded component in the short-flat decomposition of $\ma^\top \mw^\star \Delta$ to be the top $s'$ entries by magnitude, it remains to show the remaining coordinates are $\ell_\infty$ bounded by $\frac 1 {K\sqrt{s}}$.  
 This follows from the definition of $s'$ and \eqref{eq:rip_eq_equiv_1}, which imply that the $s' + 1^{\text{th}}$ largest coordinate (in magnitude) cannot have squared value larger than $\frac{L^2}{s'} \le \frac{1}{K^2s}$ without contradicting \eqref{eq:rip_eq_equiv_1}.
\end{proof}

Finally, it is immediate that Lemma~\ref{lem:detundergen} follows from Lemma~\ref{lem:assume==RIP}.

\subsection{Putting it all together}

At this point, we have assembled the tools to prove our main result on exact recovery.

\restateexact*
\begin{proof}
With probability at least $1 - \delta$, combining Lemma~\ref{lem:oracleunderdet} and Lemma~\ref{lem:detundergen} implies that Assumption~\ref{assume:det} holds for all $v \in \R^d$ where $\frac 1 4 \le \norm{v}_2 \le 1$ and $\norm{v}_1 \le 2\sqrt{2s}$, and that for $N = O(\log \frac{R_0} r)$, we can implement a step oracle for $N$ runs of Algorithm~\ref{alg:hrs} in the allotted time, each with failure probability $1 - \frac \delta N$. Moreover, Algorithm~\ref{alg:hrs} returns in $O(1)$ iterations, and allows us to halve our radius upper bound. By taking a union bound on failure probabilities and repeatedly running Algorithm~\ref{alg:hrs} $N$ times, we obtain a radius upper bound of $r$ with probability $\ge 1 - \delta$.
\end{proof} 	%

\section{Noisy recovery}
\label{sec:noisy}

In this section, we give an algorithm for solving a noisy sparse recovery problem in a wRIP matrix $\ma \in \R^{n \times d}$ (where we recall Definition~\ref{def:wrip}). In particular, we assume that we receive
\begin{equation}\label{eq:noiseb}b = \ma \sx + \xi^\star,\end{equation}
for an arbitrary unknown $\xi^\star \in \R^n$, and $\sx \in \R^d$ is $s$-sparse. Throughout this section, we will define 
\begin{equation}\label{eq:mdef}m \defeq \frac{1}{\wsinf},\end{equation}
where $\wsinf$ is an entrywise bound on $w$ in Definition~\ref{def:wrip}. We define the (unknown) ``noise floor''
\[R_\xi \defeq \frac 1 {\sqrt m}\norm{\xi^\star}_{2, (m)},\]
where we defined $\norm{\cdot}_{2, (m)}$ in Section~\ref{sec:prelims}. Our goal will be to return $x$ such that $\norm{x - \sx}_2 = O(R_\xi)$. We now formally state the main result of this section here.

\begin{restatable}{theorem}{restatenoisy}\label{thm:noisy}
	Let $\delta \in (0, 1)$, and suppose $R_0 \ge \norm{\sx}_2$ for $s$-sparse $\sx \in \R^d$. Further, suppose $\ma \in \R^{n \times d}$ is $(\rho, \wsinf)$-wRIP and $b = \ma \sx + \xi^\star$, and $R_1 \ge R_\xi \defeq \frac 1 {\sqrt m}\norm{\xi^\star}_{2, (m)}$. Then with probability at least $1 - \delta$, Algorithm~\ref{alg:hrsn} using Algorithm~\ref{alg:hrsn} as a noisy step oracle computes $\hat{x}$ satisfying
	\[\norm{\hat x - \sx}_2 \le R_{\textup{final}} = \Theta(R_{\xi}), \]
	in time 
	\begin{align*}
		O\Par{\Par{ nd \wsinf  s \log^4 (nd\rho)\log^2\Par{\frac{d}{\delta} \cdot \log\Par{\frac{R_0}{R_{\textup{final}}}}\log\Par{\frac{R_1}{R_{\textup{final}}}} }} \cdot \rho^2 \log\Par{\frac{R_0}{R_{\textup{final}}}}  \log\Par{\frac{R_1}{R_{\textup{final}}}}}.
	\end{align*}
\end{restatable}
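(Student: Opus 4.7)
The plan is to adapt the exact recovery framework of Section~\ref{sec:exact} to the noisy setting, with the natural modification that we halt halving once our radius bound $R$ on $\norm{x_t - \sx}_2$ falls to $\Theta(R_\xi)$. Concretely, the noisy halving procedure $\HRSN$ takes a current iterate $x_t$ together with $R \ge \norm{x_t - \sx}_2$ and either returns a new iterate satisfying $\norm{x_{t+1} - \sx}_2 \le R/2$, or certifies that already $R = O(R_\xi)$. Iterating $\HRSN$ gives the desired output in $O(\log(R_0/R_{\textup{final}}))$ outer phases, with an additional $O(\log(R_1/R_{\textup{final}}))$ factor from tightening an initial estimate of the noise floor; together these account for the two outer logarithmic factors in the claimed runtime.

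Inside each phase I introduce a noisy step oracle $\onstep$ analogous to Definition~\ref{def:ostep}, with one crucial change: the residual $\Delta_t = \frac{1}{R}(\ma x_t - b) = \frac{1}{R}\ma(x_t - \sx) - \frac{1}{R}\xi^\star$ now contains noise, so $\sum_i w_i \Delta_i^2$ is no longer the right progress quantity. Instead, the oracle must either produce $\bw$ satisfying $\inprod{\gamma_{\bw}}{\sx - x_t}/R = \Omega(1)$ together with an $(O(1), O(1/\sqrt{s}))$ short-flat decomposition of $\gamma_{\bw}$, or certify $R = O(R_\xi)$. The single-step contraction argument of Lemma~\ref{lem:combinealg} then goes through essentially verbatim, because the optimality inequality~\eqref{eq:optconds} and the final rounding step only rely on the progress inner product and the short-flat structure of $\gamma_{\bw}$, not on the specific form of the residual.

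For the oracle itself I reuse the randomized coordinate-ascent structure of Algorithm~\ref{alg:ostep} with the potential $\Apot$ redefined to lower bound the progress inner product rather than $\sum_i \bw_i \Delta_i^2$, leaving $\Bpot$ essentially unchanged. The key to correctness is showing that the wRIP witness $w^\star$ remains a feasible oracle output whenever $R \gtrsim R_\xi$. To see this, a splitting of $\xi^\star$ into its top-$m$ coordinates and the tail, together with $\norm{w^\star}_\infty \le 1/m$ and the bound $|\xi^\star_{(m+1)}|^2 \le R_\xi^2$, yields $\norm{(\mw^\star)^{1/2} \xi^\star}_2 = O(R_\xi)$. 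Applying Lemma~\ref{lem:atbound} to the RIP matrix $(\mw^\star)^{1/2} \ma$ then shows the noise bias $\ma^\top \mw^\star \xi^\star / R$ admits an $O(R_\xi/R)$-sized short-flat decomposition, which becomes negligible once $R$ is a sufficiently large constant multiple of $R_\xi$; adding this bias to the signal short-flat decomposition (guaranteed by Assumption~\ref{assume:det} via Lemma~\ref{lem:detundergen}) preserves the oracle guarantees with at most constant-factor degradation.

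The main obstacle is controlling the noise's contribution to the second-moment bound in the analog of Lemma~\ref{lem:bgrowth}. There, the certificate vector $z^{(i)} = \eta w^\star_i(\Delta_i a_i - \pstar)$ tracking the growth of $\Bpot$ under uniform coordinate sampling picks up a $-\eta w^\star_i \xi^\star_i a_i/R$ term, whose per-coordinate quadratic contribution $\sum_i (z^{(i)}_j)^2$ must be controlled using the sparsity of the large entries of $\xi^\star$ and the definition of $R_\xi$ rather than naive $\ell_\infty$ bounds on $\Delta_i$. Combined with a union bound over the $O(\log(R_0/R_{\textup{final}}) \log(R_1/R_{\textup{final}}))$ total oracle invocations across phases, this is where the extra polylogarithmic factors in the runtime compared to Theorem~\ref{thm:exact} enter.
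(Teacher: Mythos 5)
Your outer structure (halving the radius until it reaches $\Theta(R_\xi)$, guessing the noise floor by halving from $R_1$, and verifying the deterministic assumption for the noisy residual via a splitting of $\xi^\star$ and Lemma~\ref{lem:atbound}, as in Lemma~\ref{lem:detundergennoise}) matches the paper. But there is a genuine gap at the heart of the step oracle. You propose to ``redefine $\Apot$ to lower bound the progress inner product'' $\tfrac 1R\inprod{\gamma_{\bw}}{\sx - x_t} = \sum_{i} \bw_i \tDelta_i \Delta_i$, where $\Delta = \ma v$ is the \emph{noiseless} part of the residual. This quantity depends on the unknown $\sx$ and is not computable by the algorithm, and you never supply an observable surrogate; the only quantity the greedy sampling procedure can track is $\tApot(\bw) = \sum_i \bw_i \tDelta_i^2$, which in the semi-random setting can be inflated arbitrarily by adversarial rows carrying huge noise, so maximizing it does not by itself yield real progress. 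Showing that $w^\star$ remains a feasible oracle output (your third paragraph) only establishes existence of a good step, not that the sampled greedy procedure finds one whose \emph{actual} progress is $\Omega(1)$.

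The paper resolves exactly this issue with machinery your proposal omits: the strong step oracle (Definition~\ref{def:osstep}) keeps $\tApot$ as the guiding potential but additionally enforces $\norm{w}_1 = O(\delta)$ and, crucially, that the output weights are stochastically dominated by a scaled multinomial (a consequence of uniform row sampling with per-step cap $\eta\wsinf$, Fact~\ref{fact:winfbound}); Lemma~\ref{lem:ostepisonstep} then bounds the noise contribution $\sum_i w_i \xi_i^2 \le \Cp/2$ via Chernoff and Markov, converting the observable bound $\sum_i w_i\tDelta_i^2 \ge \Cp$ into the cross-term bound $\sum_i w_i\tDelta_i\Delta_i \ge \Cp/4$. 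This argument only succeeds with \emph{constant} probability, and since the cross-term condition is unverifiable (unlike \eqref{eq:prog}--\eqref{eq:shortflat} in the noiseless case, where a failed run can simply be detected and repeated), the outer loop must boost reliability by running $N_{\textup{trials}} = O(\log\tfrac d\delta)$ independent copies of the whole descent and selecting a majority-consistent output geometrically (Algorithm~\ref{alg:agg}, Claim~\ref{claim:agg}); this is also where one of the extra logarithmic factors in the runtime comes from. Your concluding union bound over all oracle invocations presupposes per-call failure probability that can be driven down directly, which is precisely what is unavailable here without the domination-plus-aggregation mechanism, so as written the proof does not go through.
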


Similarly to Theorem~\ref{thm:exact}, Theorem~\ref{thm:noisy} provides a runtime guarantee which interpolates between the fully random and semi-random settings, and runs in sublinear time when e.g.\ the entire measurement matrix $\ma$ satisfies RIP. Theorem~\ref{thm:noisy} further provides a refined error guarantee as a function of the noise vector $\xi$, which again interpolates based on the ``quality'' of the weights $w$. This is captured through the parameter $m = \frac 1 {\wsinf}$: when $m \approx n$, the squared error bound $R_\xi^2$ scales as the average squared entry of $\xi$, and more generally it scales as the average of the largest $m$ entries.

We solve the noisy variant by essentially following the same steps as Section~\ref{sec:exact} and making minor modifications to the analysis; we give an outline of the section here.  In Section~\ref{ssec:onephasenoisy}, we generalize the framework of Section~\ref{ssec:onephase} to the setting where we only receive noisy observations \eqref{eq:noiseb}, while our current radius is substantially above the noise floor. We then implement an appropriate step oracle for this outer loop in Section~\ref{ssec:oraclenoisy}, and prove that the relevant Assumption~\ref{assume:detplus} used in our step oracle implementation holds when $\ma$ is wRIP in Section~\ref{ssec:detassumenoisy}.

\subsection{Radius contraction above the noise floor using step oracles}\label{ssec:onephasenoisy}

In this section, we give the main loop of our overall noise-tolerant algorithm, $\HRSN$, which takes as input $s$-sparse $\xin$ and a radius bound $R \ge \norm{\xin - \sx}_2$. It then returns an $s$-sparse vector $\xout$ with the guarantee $\norm{\xout - \sx}_2 \le \half R$, \emph{as long as} $R$ is larger than an appropriate multiple of $R_\xi$. We give the analog of Definition~\ref{def:ostep} in this setting, termed a ``noisy step oracle.''

\begin{definition}[Noisy step oracle]\label{def:onstep}
We say that $\onstep$ is a $(\Cp, C_2, C_\xi, \delta)$-noisy step oracle for $\tDelta \in \R^n$ and $\ma \in \R^{n \times d}$ if the following holds. Whenever there is $v \in \R^d$ with $\frac 1 {12} \le \norm{v}_2 \le 1$ such that $\tDelta = \ma v + \xi$ where $\norm{\xi}_{2, (m)} \le \frac{\sqrt m}{C_\xi}$, with probability $\ge 1- \delta$, $\onstep$ returns $w \in \R^n_{\ge 0}$ such that the following two conditions hold. First,
\begin{equation}\label{eq:tdeltadelta}\sum_{i \in [n]} w_i \tDelta_i\Delta_i \ge \Cp.\end{equation}
Second, there exists a $(C_2, \frac{\Cp}{6 \sqrt s})$ short-flat decomposition of $\ma^\top \diag{w} \Delta$:
\[\norm{\trunc\Par{\ma^\top \diag{w} \Delta, \frac{\Cp}{6 \sqrt s}}}_2 \le C_2.\]
\end{definition}

We next characterize how a strengthened step oracle with appropriate parameters also is a noisy step oracle.  First, we will need a definition.

\begin{definition}\label{def:stochastic-domination}
For distributions $A,B$ on $\R^n$, we say $A$ stochastically dominates $B$ if there is a random variable $C$ on $\R^n$ whose coordinates are always nonnegative such that the distribution of $A$ is the same as the distribution of $B + C$ (where $C$ may depend on the realization of $B$).
\end{definition}

We now formalize the properties of the strengthened step oracle that we will construct.

\begin{definition}[Strong step oracle]\label{def:osstep}
We say that $\ostep$ is a $(\Cp, C_2, C_\xi, \delta)$-strong step oracle for $\tDelta \in \R^n$ and $\ma \in \R^{n \times d}$ if it satisfies all the properties of a standard step oracle (Definition~\ref{def:ostep}), as well as the following additional guarantees.
\begin{enumerate}
	\item For the output weights $w$, we have
	\begin{equation}\label{eq:wbounds}\norm{w}_1 \le \frac{\Cp C_\xi^2}{4} \cdot \delta.\end{equation}
	\item The distribution of $w$ output by the oracle is stochastically dominated by the distribution 
	\[
	 \frac{\delta}{4s \rho^2  \log \frac d \delta} \textup{Multinom}\left( \left\lceil \frac{\Cp C_{\xi}^2 n s  \rho^2 \log \frac d \delta}{m} \right\rceil , \left( \underbrace{\frac{1}{n}, \dots , \frac{1}{n}}_n \right ) \right) 
	\]
	for some $\rho \geq 1$.
	
	\item Compared to Definition~\ref{def:ostep} (the step oracle definition), we have the stronger guarantees that $\ma^\top \diag{w} \Delta$ admits a $(C_2, \frac{\Cp}{24\sqrt{s}})$ short-flat decomposition in \eqref{eq:shortflat}, and obtains its guarantees using the bounds $\frac 1 {12} \le \norm{v}_2 \le 1$ (instead of a lower bound of $\frac 1 4$).
\end{enumerate}
\end{definition}

We next demonstrate that a strong step oracle is a noisy step oracle.
\begin{lemma}\label{lem:ostepisonstep}
Suppose $\ostep$ is a $(\Cp, C_2, C_\xi, \delta)$-strong step oracle for $\tDelta \in \R^n$ and $\ma \in \R^{n \times d}$. Then, $\ostep$ is also a $(\frac 1 4 \Cp, C_2, C_\xi, 2\delta)$-noisy step oracle for $(\tDelta, \ma)$.
\end{lemma}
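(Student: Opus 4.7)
The plan is to verify the two conditions of a noisy step oracle (Definition~\ref{def:onstep}) applied to $\tDelta = \ma v + \xi$ directly from the guarantees of the strong step oracle in Definition~\ref{def:osstep}. The noisy oracle's short-flat decomposition condition, with parameter $\Cp' = \Cp/4$, requires a $(C_2, \frac{\Cp}{24\sqrt{s}})$ decomposition of $\ma^\top \diag{w}\Delta$, which is exactly what property~3 of the strong step oracle supplies (interpreting the symbol ``$\Delta$'' there as the clean signal $\ma v$). So this second condition of the noisy oracle transfers over with no loss, and the work reduces to the progress condition \eqref{eq:tdeltadelta}.

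For the progress condition, writing $\Delta = \tDelta - \xi$ and letting $A \defeq \sqrt{\sum_i w_i \tDelta_i^2}$, $B \defeq \sqrt{\sum_i w_i \xi_i^2}$, I would apply Cauchy--Schwarz:
\[
\sum_i w_i \tDelta_i \Delta_i = A^2 - \sum_i w_i \tDelta_i \xi_i \ge A^2 - AB.
\]
The inherited standard step oracle progress gives $A^2 \ge \Cp$. On the regime $A \ge \sqrt{\Cp} \ge B$ the expression $A^2 - AB$ is increasing in $A$ (derivative $2A - B > 0$) and hence at least $\Cp(1 - B/\sqrt{\Cp})$, which exceeds $\Cp/4$ as soon as $B \le \sqrt{\Cp/2}$. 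Hence the remaining task is to establish $B^2 \le \Cp/2$ with probability at least $1 - \delta$.

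To bound $B^2$, I would split by $S$, the indices of the $m$ largest $|\xi_i|$. The hypothesis $\norm{\xi}_{2,(m)} \le \sqrt{m}/C_\xi$ forces the $m$-th largest $|\xi_i|$ to be at most $1/C_\xi$, so on $S^c$ we have $\xi_i^2 \le 1/C_\xi^2$, and property~1 of the strong oracle yields
\[
\sum_{i \in S^c} w_i \xi_i^2 \le \frac{\norm{w}_1}{C_\xi^2} \le \frac{\Cp\delta}{4} \le \frac{\Cp}{4}.
\]
For the $S$ portion I would invoke the stochastic domination (property~2), coupling $w_i \le c N_i$ coordinatewise with $(N_i) \sim \textup{Multinom}(N, \tfrac{1}{n}\mathbf{1})$ for the $c, N$ from the definition. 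A direct expectation calculation then gives $\E\Brack{\sum_{i \in S} w_i \xi_i^2} \le (cN/n)\norm{\xi}_{2,(m)}^2 \le \Cp\delta/4$, and Markov's inequality yields $\sum_{i \in S} w_i \xi_i^2 \le \Cp/4$ with probability $\ge 1 - \delta$. Summing the two pieces delivers $B^2 \le \Cp/2$, and a union bound over this event and the strong oracle's $\delta$-failure event gives total failure probability $2\delta$, matching the stated parameters. The main obstacle is the tight numerical accounting: the slack $1 - 1/\sqrt{2} \approx 0.293$ over the target $\Cp/4 = 0.25$ is small, so the constants in the $\norm{w}_1$ bound and the multinomial parameters of Definition~\ref{def:osstep} are precisely tuned to make each of the two halves of $B^2$ fit inside $\Cp/4$.
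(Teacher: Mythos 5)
Your proposal is correct and follows essentially the same route as the paper: reduce everything to the progress condition \eqref{eq:tdeltadelta}, then show $\sum_{i \in [n]} w_i \xi_i^2 \le \Cp/2$ using the $\ell_1$ bound on $w$, the stochastic domination by the scaled multinomial restricted to the top-$m$ set $S$, the expectation bound $\delta \Cp / 4$, Markov's inequality, and a union bound giving failure probability $2\delta$. The only cosmetic differences are that you control the cross term via Cauchy-Schwarz and monotonicity in $A$ instead of the paper's termwise $a^2 - ab \ge \tfrac{1}{2}a^2 - \tfrac{1}{2}b^2$, and you bound the off-$S$ contribution deterministically by $\norm{w}_1 / C_\xi^2$ rather than through the paper's Chernoff-plus-rearrangement step; both variants are valid and yield the same constants.
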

\begin{proof}
In the definition of a noisy step oracle, we only need to check the condition that $\sum_{i \in [n]} w_i \tDelta_i \Delta_i \ge \frac 1 4 \Cp$ for an arbitrary $\Delta = \tDelta - \xi$ where $\norm{\xi}_{2, (m)} \le \sqrt m C_\xi^{-1}$, as all other conditions are immediate from Definition~\ref{def:osstep}. Note that
\begin{align*}
\sum_{i \in [n]} w_i \tDelta_i\Delta_i &= \sum_{i \in [n]} w_i \tDelta_i(\tDelta_i - \xi_i) \\
&\ge \half \sum_{i \in [n]} w_i \tDelta_i^2 - \half \sum_{i \in [n]} w_i \xi_i^2 \,.
\end{align*}
where we used $a^2 - ab \ge \half a^2 - \half b^2$.  The first sum above is at least $\half\Cp $ by assumption.  To upper bound the second sum, we will use the second property in the definition of a strong step oracle.  Let $S \subset [n]$ be the set consisting of the $m$ largest coordinates of $\xi$ (with ties broken lexicographically).  Let 
$\alpha$ be drawn from the distribution
\[
 \frac{\delta }{4 s \rho^2 \log \frac d \delta} \textup{Multinom}\left( \left\lceil \frac{\Cp C_{\xi}^2 n s \rho^2 \log \frac d \delta}{m} \right\rceil , \left( \underbrace{\frac{1}{n}, \dots , \frac{1}{n}}_n \right ) \right) .
\]
Note that with $1 - 0.1 \delta$ probability, by a Chernoff bound, we have that $\sum_{i \in S} \alpha_i \geq \frac 1 5 \delta \Cp C_{\xi}^2$.  If this happens, then since $S$ consists of the largest coordinates of $\xi$, any vector $\beta$ such that $\beta \leq \alpha$ entrywise and $\norm{\beta}_1 \leq  \frac 1 4 \delta \Cp C_{\xi}^2$ must have 
\[
\sum_{i \in [n]} \beta_i \xi_i^2 \leq \frac{5}{4} \sum_{i \in S} \alpha_i \xi_i^2.
\]
Now note that for any $S$ with $|S| = m$,
\[
\E\Brack{ \sum_{i \in S} \alpha_i \xi_i^2 } \leq  \frac{\delta \Cp C_{\xi}^2 }{4 m } \cdot \norm{\xi}_{2, (m)}^2 \leq \frac{\delta \Cp}{4}. 
\]
Combining the above two inequalities and Markov's inequality and the fact that the distribution of $\alpha$ stochastically dominates the distribution of $w$, we deduce that with at least $1 - \delta$ probability,
\[
\sum_{i \in [n]} w_i \xi_i^2 \leq \frac{1}{0.9 \delta} \cdot \E \left[ \max_{\substack{\beta \leq \alpha \\ \norm{\beta}_1 \leq \frac 1 4 \delta \Cp C_{\xi}^2}} \sum_{i \in [n]} \beta_i \xi_i^2 \right]  \leq \frac{  \Cp}{2}. 
\]
Putting everything together, we conclude that we have 
\[
\sum_{i \in [n]} w_i \tDelta_i\Delta_i \geq \frac{\Cp}{4}
\]
with failure probability at most $2\delta$, completing the proof.
\end{proof}

In Section~\ref{ssec:oraclenoisy}, we prove that if $\ma$ satisfies Assumption~\ref{assume:detplus} (a slightly different assumption than Assumption~\ref{assume:det}) then with high probability we can implement a strong step oracle with appropriate parameters. This is stated more formally in the following; recall $m$ is defined in \eqref{eq:mdef}.

\begin{restatable}{assumption}{restatedetplus}\label{assume:detplus}
	The matrix $\ma \in \R^{n \times d}$ satisfies the following. There is a weight vector $w^\star \in \Delta^n$ with $\norm{w^\star}_\infty \le \wsinf = \frac 1 m$, a constants $L$, $\rho \ge 1$, and constants $K, C_\xi$ (which may depend on $L$) such that for all $v \in \R^d$, $\xi \in \R^n$ with
	\[\frac 1 4 \le \norm{v}_2 \le 1,\; \norm{v}_1 \le 2\sqrt{2s},\; \norm{\xi}_{2, (m)} \le \frac{\sqrt m}{C_\xi}\]
	we have, defining $\tDelta = \ma v + \xi$:
	\begin{enumerate}
		\item $\ma$ is entrywise bounded by $\pm \rho$, i.e.\ $\norm{\ma}_{\max} \le \rho$.
		\item
		\begin{equation}\label{eq:rscplus}\frac 1 L \le \sum_{i \in [n]} w_i^\star \tDelta_i^2 \le L.\end{equation}
		\item There is a $(L, \frac 1 {K \sqrt s})$ short-flat decomposition of $\ma^\top \mw^\star \tDelta$:
		\begin{equation}\label{eq:shortflatvplus}\norm{\trunc\Par{\ma^\top \mw^\star \tDelta, \frac 1 {K \sqrt s} }}_2 \le L.\end{equation}
	\end{enumerate}	
\end{restatable}

\begin{restatable}{lemma}{restateoracleunderdetplus}\label{lem:oracleunderdetplus}
Suppose $\ma$ satisfies Assumption~\ref{assume:detplus}. Algorithm~\ref{alg:onstep} is a $(\Cp, C_2, C_\xi, \delta)$ strong step oracle $\ONStep$ for $(\tDelta, \ma)$ with 
\[\Cp = \Omega(1),\; C_2 = O\Par{1},\; C_\xi = O(1), \; \delta = \half\Par{\frac{C_2}{10^5\Cp}}^2   ,  \]
running in time
\[O\Par{\Par{ nd\log^3 (nd\rho)\log \frac 1 \delta} \cdot \Par{\wsinf  s\rho^2\log^2 \frac{d}{\delta}}}.\]
\end{restatable}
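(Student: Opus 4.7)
The plan is to closely mirror the proof of Lemma~\ref{lem:oracleunderdet}, adapting the potential-based analysis to the noisy residual $\tDelta = \ma v + \xi$ and then separately verifying the three additional properties demanded by Definition~\ref{def:osstep}. I would define the same-form potentials
\begin{equation*}
\Apot(\bw) \defeq \sum_{i \in [n]} \bw_i \tDelta_i^2, \qquad \Bpot(\bw) \defeq \min_{\norm{p}_2 \le L\norm{\bw}_1} \sqmax_\mu(\gamma_{\bw} - p) + \frac{\norm{\bw}_1}{4CLs},
\end{equation*}
with $\gamma_{\bw} \defeq \sum_{i} \bw_i \tDelta_i a_i$, and run the same randomized coordinate ascent: at each step, pick $i \sim_{\textup{unif.}} [n]$ and choose $d_t \in [0, \eta \wsinf]$ maximizing $\Gamma_t$. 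Since Assumption~\ref{assume:detplus} provides the exact analogs of \eqref{eq:rsc} and \eqref{eq:shortflatv} with $\Delta$ replaced by $\tDelta$, the noisy analog of Lemma~\ref{lem:bgrowth} follows with essentially no changes: the pointwise bound $|\tDelta_i| \le O(\sqrt{s}\rho^2)$ (using $|\tDelta_i|^2 \le 2|\langle a_i, v\rangle|^2 + 2\xi_i^2$ together with the upper bound of \eqref{eq:rscplus}) is all that is required to reproduce the Taylor and sum calculations for $z^{(i)} = \eta w^\star_i(\tDelta_i a_i - \pstar)$. The drift argument of \eqref{eq:phigrowth} then shows the inner loop terminates within $N = O(Ln/\eta)$ iterations with probability at least $\half$, and a union bound over $N' = \Theta(\log(1/\delta))$ independent restarts amplifies this to failure probability $\delta$.

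For the strengthened short-flat bound in item (3) of Definition~\ref{def:osstep}, I would simply enlarge the constant $C$ in the definition of $\mu$ (by roughly the factor $24^2$): termination at $\Apot \ge 1$ again forces $\Bpot \le 3/(Cs)$, and Fact~\ref{fact:smax} combined with the $p$-minimization yields a $(O(1), \Cp/(24\sqrt{s}))$ short-flat decomposition of $\gamma_{w_T}$. For the norm bound \eqref{eq:wbounds} and the multinomial domination in item (2), observe that each increment lies in $[0, \eta \wsinf] = [0, \eta/m]$ and each index is sampled uniformly; concatenating the sample histories across all $N'$ restarts into a single ensemble of $N N'$ uniform draws, the returned $w$ is coordinatewise dominated by $(\eta/m)$ times this histogram. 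Setting $\eta = \Theta(1/(K\wsinf s\rho^2 \log(d/\delta)))$ with $K$ large, the total sample count $N N' = O(L n \log(1/\delta)/\eta)$ matches $\lceil \Cp C_\xi^2 n s \rho^2 \log(d/\delta) / m \rceil$ up to the rescaling by $\delta/(4s\rho^2 \log(d/\delta))$ appearing in Definition~\ref{def:osstep}(2), and the norm bound $\norm{w}_1 \le (\eta/m) N N' \le \Cp C_\xi^2 \delta/4$ then holds under the stated threshold $\delta \ge \half(C_2/10^5 \Cp)^2$ for an appropriate choice $C_\xi = O(1)$.

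The main obstacle will be the careful choreography between the outer amplification loop and the stochastic-domination requirement: domination must hold for the \emph{actual returned $w$}, not just for a single successful attempt. The clean way around this is the coupling suggested above --- treat all $N'$ restarts as one long stream of uniform draws, since the $w$ returned from any particular restart is dominated by the accumulated histogram of the entire stream; this costs only a $\log(1/\delta)$ factor in the multinomial count, which is already absorbed into the $\log(d/\delta)$ appearing in the displayed distribution. A secondary subtlety is that $\mu$ now depends on $\log(d/\delta)$ (rather than $\log d$) in order to keep the quadratic Taylor term in the $\Bpot$ growth bound controlled at the smaller per-restart failure probability; this is precisely what produces the extra $\log(d/\delta)$ factor in $\eta^{-1}$ and hence the $\log^2(d/\delta)$ (instead of $\log(d/\delta)$) in the final runtime compared to Lemma~\ref{lem:oracleunderdet}. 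With these ingredients in place, the rest of the runtime analysis (approximate maximization of $\Gamma_t$ via Lemma~\ref{lem:optimize-softmax}, Lipschitz bounds on $\Apot$ and $\Bpot$, and the per-iteration $O(d \log^3(nd\rho))$ cost) transfers unchanged from the proof of Lemma~\ref{lem:oracleunderdet}.
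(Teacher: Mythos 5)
Your overall architecture matches the paper's: reuse the potential argument of Lemma~\ref{lem:oracleunderdet} with adjusted constants, prove a noisy analog of Lemma~\ref{lem:bgrowth}, amplify the per-run success probability over $N'$ restarts, and verify the extra requirements of Definition~\ref{def:osstep} (your concatenated-stream coupling for the stochastic-domination property is a fine, if slightly more generous, version of the paper's Fact~\ref{fact:winfbound}, and since $\delta$ is a fixed constant the extra $\log(1/\delta)$ in the multinomial count is harmless). However, there are two genuine gaps. First, your justification of the noisy potential-growth lemma is wrong at exactly the point where the noise actually matters. You claim a pointwise bound $|\tDelta_i| \le O(\sqrt{s}\rho^2)$ from $|\tDelta_i|^2 \le 2\langle a_i, v\rangle^2 + 2\xi_i^2$ together with the upper bound in \eqref{eq:rscplus}; but \eqref{eq:rscplus} is a $w^\star$-weighted aggregate bound and gives no pointwise control of $\xi_i$ on rows where $w^\star_i$ is tiny or zero, and even on rows with $w^\star_i = \wsinf$ it only yields $|\tDelta_i| \le \sqrt{Lm}$. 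The only pointwise bound available is $|\xi_i| \le \norm{\xi}_{2,(m)} \le \sqrt{m}/C_\xi$, which is not $O(\sqrt{s}\rho^2)$ in general. The paper's proof of Lemma~\ref{lem:bgrowthn} therefore treats the new noise contribution to $z^{(i)} = \eta w^\star_i(\tDelta_i a_i - \pstar)$ separately, bounding $\eta w^\star_i \rho |\xi_i|$ via $w^\star_i \le 1/m$ and $|\xi_i| = O(\sqrt{m})$, and must invoke an additional explicit relation between $m$ and $s\rho^2$ to keep the Taylor-expansion step valid; this is the one place the noisy analysis genuinely differs from the noiseless one, so it cannot be waved through as ``no changes.'' (The same issue touches the $1$-Lipschitzness of $\tApot$, which the paper also argues through the step-size cap rather than a pointwise bound on $\tDelta_i$ of the form you assert.)

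Second, your derivation of the $\ell_1$ bound \eqref{eq:wbounds} fails: multiplying the per-draw cap $\eta\wsinf$ by the total number of draws gives
\[
\norm{w}_1 \le \eta\wsinf \cdot N N' \approx \frac{5 L n N'}{m},
\]
which is unbounded when $n \gg m$ and certainly not $\le \tfrac{\Cp C_\xi^2}{4}\delta$. The bound must instead come from the potential, as in the paper: the oracle only returns $w_T$ when $\tApot(w_T) \ge 1$, and monotonicity of $\Phi_t$ together with $1$-Lipschitzness of $\tApot$ forces $\tBpot(w_T) \le \tfrac{3}{Cs}$, whose $\tfrac{\norm{\bw}_1}{4CLs}$ term yields $\norm{w_T}_1 \le 12L = O(1)$; this meets \eqref{eq:wbounds} for the specified constant $\delta$ and a suitable $C_\xi = O(1)$, and the same inequality is what delivers the strengthened $(C_2, \tfrac{\Cp}{24\sqrt{s}})$ short-flat decomposition after enlarging $C$. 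With those two repairs (and noting that the paper keeps $\mu, \eta$ at scale $\log d$ rather than $\log(d/\delta)$, which is immaterial here since $\delta$ is constant), the rest of your outline — drift argument, restart amplification, and the runtime transfer — coincides with the paper's proof.
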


Here, in contrast to the noiseless setting, we can only guarantee that the strong step oracle (and thus also the noisy step oracle) succeeds with constant probability.  In our full algorithm, we boost the success probability of the oracle by running a logarithmic number of independent trials and aggregating the outputs. We also show that for an appropriate choice of constants in Definition~\ref{def:wrip}, Assumption~\ref{assume:detplus} is also satisfied, stated in Lemma~\ref{lem:detundergennoise} and proven in Section~\ref{ssec:detassumenoisy}.

\begin{restatable}{lemma}{restatedetundergennoise}\label{lem:detundergennoise}
Suppose $\ma \in \R^{n \times d}$ is $(\rho, \wsinf)$-wRIP with a suitable choice of constants in the RIP parameters in Definition~\ref{def:wrip}. Then, $\ma$ also satisfies Assumption~\ref{assume:detplus}.
\end{restatable}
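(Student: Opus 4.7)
The plan is to reduce to Lemma~\ref{lem:assume==RIP} (the noiseless equivalence between RIP and Assumption~\ref{assume:det}) by treating the noise term $\xi$ as a bounded perturbation. Since Assumption~\ref{assume:detplus} differs from Assumption~\ref{assume:det} only in that $\tDelta = \ma v + \xi$ replaces $\Delta = \ma v$, the proof of Lemma~\ref{lem:detundergen} (which is immediate from Lemma~\ref{lem:assume==RIP}) already handles the $\ma v$ contribution; the task is to absorb $\xi$ at the cost of slightly tighter RIP constants in Definition~\ref{def:wrip}, using the hypothesis $\norm{\xi}_{2,(m)} \le \sqrt{m}/C_\xi$ and $\wsinf = 1/m$.

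For condition \eqref{eq:rscplus}, expand $\sum_{i} w^\star_i \tDelta_i^2 = \sum_i w^\star_i (\ma v)_i^2 + 2 \sum_i w^\star_i (\ma v)_i \xi_i + \sum_i w^\star_i \xi_i^2$. The first sum lies in $[1/L, L]$ by Lemma~\ref{lem:assume==RIP} (after suitably rescaling the RIP constants). The third sum is at most $\wsinf \norm{\xi}_{2,(m)}^2 \le 1/C_\xi^2$, using that the largest value of $\sum_i w^\star_i \xi_i^2$ subject to $\norm{w^\star}_\infty \le \wsinf = 1/m$ and $\norm{w^\star}_1 \le 1$ is attained by placing mass $1/m$ on the top $m$ entries of $\xi^2$. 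Cauchy--Schwarz then bounds the cross term by $2\sqrt{L}/C_\xi$, and choosing $C_\xi$ large enough as a function of $L$ yields \eqref{eq:rscplus} after a harmless constant-factor rescaling of $L$.

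For condition \eqref{eq:shortflatvplus}, I write $\ma^\top \mw^\star \tDelta = \ma^\top \mw^\star (\ma v) + \ma^\top \mw^\star \xi$ and construct a short-flat decomposition of each summand. The signal summand admits a $(L/2, 1/(2K\sqrt{s}))$ short-flat decomposition from the proof of Lemma~\ref{lem:assume==RIP} (taking the top $s' = \Theta(L^3 K^2 s)$ entries as the short part and using RIP plus shelling to control the tail). For the noise summand, factor $\ma^\top \mw^\star \xi = ((\mw^\star)^{1/2}\ma)^\top u$ where $u \defeq (\mw^\star)^{1/2}\xi$ satisfies $\norm{u}_2^2 = \sum_i w^\star_i \xi_i^2 \le 1/C_\xi^2$. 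Applying Lemma~\ref{lem:atbound} to the RIP matrix $(\mw^\star)^{1/2}\ma$ shows $\norm{\ma^\top \mw^\star \xi}_{2,(s')} \le O(1/C_\xi)$, so the top $s'$ entries form a short part of norm $O(1/C_\xi) \le L/2$; the remaining coordinates each have magnitude at most $O(1/(C_\xi \sqrt{s'})) \le 1/(2K\sqrt{s})$ provided $s' \gtrsim K^2 s/C_\xi^2$, which is satisfied by the same choice of $s'$. Summing the two decompositions coordinatewise gives a $(L, 1/(K\sqrt{s}))$ short-flat decomposition of $\ma^\top \mw^\star \tDelta$.

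The main obstacle is purely bookkeeping: one must choose the RIP distortion in Definition~\ref{def:wrip} slightly tighter than in Lemma~\ref{lem:assume==RIP} so that the noiseless decomposition has room (half the budget) to be augmented by the noise contribution, and then choose $C_\xi$ large in terms of $L$ and $K$ so that the noise's contribution fits in the remaining half-budget. Since every estimate above is scale-invariant and uses only RIP, Cauchy--Schwarz, and Lemma~\ref{lem:atbound}, no qualitatively new tool is required beyond what already appears in the proof of Lemma~\ref{lem:assume==RIP}.
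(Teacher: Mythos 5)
Your proposal is correct and follows essentially the same route as the paper's proof: rescale the constants so the noiseless decomposition from Lemma~\ref{lem:detundergen}/Lemma~\ref{lem:assume==RIP} uses only half the budget, bound $\sum_i w^\star_i \xi_i^2 \le \wsinf\norm{\xi}_{2,(m)}^2 \le 1/C_\xi^2$ via the top-$m$ extremal weighting, and handle \eqref{eq:shortflatvplus} by applying Lemma~\ref{lem:atbound} to $(\mw^\star)^{1/2}\ma$ and $u = (\mw^\star)^{1/2}\xi$ before summing decompositions by the triangle inequality. The only (immaterial) difference is that you bound the cross term in \eqref{eq:rscplus} by Cauchy--Schwarz, whereas the paper absorbs it via the elementary inequalities $(a+b)^2 \le 2a^2 + 2b^2$ and $(a+b)^2 \ge \tfrac{1}{2}a^2 - b^2$.
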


\begin{algorithm2e}[ht!]
	\caption{$\HRSN(x_{\text{in}}, R, R_\xi, \onstep, \delta, \ma, b)$}
	\label{alg:hrsn}
		\codeInput $s$-sparse $\xin \in \R^d$, $R \ge \norm{\xin - \sx}_2$ for $s$-sparse $\sx \in \R^d$, $(\Cp, C_2, C_\xi, \delta')$-noisy step oracle $\onstep$ for all $(\Delta, \ma)$ with $\Delta \in \R^n$, $\delta'  \leq (10^{-4} \frac{\Cp}{C_2})^2$, $\ma \in \R^{n \times d}$, $b = \ma \sx + \xi^\star \in \R^n$ for $\norm{\xi^\star}_{2, (m)} \le R_\xi \sqrt m$, with $R \ge C_\xi R_\xi$ \;
		\codeOutput $s$-sparse vector $\xout$ that satisfies  $\norm{x_{\text{out}} - \sx}_2 \le \half R$ with probability $\ge 1 - \delta$
		\codeLineSace
		
		$x_0 \gets x_{\text{in}}$, $\xset \gets \{x \in \R^d \mid \norm{x - \xin}_1 \le \sqrt{2s} R\}$ \;
		$T \gets \left\lceil \frac{100C_2^2}{\Cp^2} \right\rceil$, $\eta \gets \frac{\Cp}{2C_2^2}$ \;
		$N_{\text{trials}} \gets 10\log \frac d \delta$ \;
		\For{$1 \leq j \leq N_{\textup{trials}}$ }{
			$x_0^j \gets x_0$ \;
			\For{$0 \le t \le T - 1$}{
				$w_t^{j}   \gets \onstep(\Delta_t^{j}, \ma)$ for $\Delta_t^{j} \gets \frac{1}{R}(\ma x_t^j - b)$, $ \gamma_t^{j} \gets \ma^\top \diag{[w_t^{j}]} \Delta_t^j = \sum_{i \in [n]} [w_t^j]_i [\Delta_t^j]_i a_i$ \;
				\If{$(w_t^{j}, \gamma_t^{j})$ do not meet all of \eqref{eq:prog}, \eqref{eq:shortflat} and the additional criteria in Definition~\ref{def:osstep}}{
					$x_T^j \gets x_t^j$ truncated to its $s$ largest coordinates \;
					\codeBreak \;
				}
				$x_{t + 1}^j \gets \argmin_{x \in \xset} \norm{x - x_t^j - \eta R \gamma_t}_2$ \;
			}
		}
		$ x_T \gets \textsf{Aggregate}(\{x_T^{1}, \dots , x_T^{N_{\textup{trials}}} \}, \frac R 2 )$ \;
		\codeReturn $\xout \gets x_T$ truncated to its $s$ largest coordinates \;
\end{algorithm2e}

\begin{algorithm2e}[ht!]
	\caption{$\textsf{Aggregate}(\mathcal{S}, R)$}
	\label{alg:agg}
	\codeInput $\mathcal{S} = \{y_i\}_{i \in [k]} \subset \R^d$, $R \ge 0$ such that for some unknown $z \in \R^d$, at least $0.51k$ points $y_i \in \mathcal{S}$ have $\norm{y_i - z}_2 \leq \frac R 3$ \;
	\codeOutput $\wt{z}$ with $\norm{\wt{z} - z}_2 \leq R$ \;
	\For{$1 \leq i \leq k$ }{
		\lIf{at least $0.51k$ points $y_j \in \mathcal{S}$ satisfy $\norm{y_i - y_j}_2 \leq \frac{2R}{3}$}{
			\codeReturn $\wt{z} \leftarrow y_i$
		}
	}
\end{algorithm2e}

Next, we give a guarantee regarding our geometric post-processing step, Algorithm~\ref{alg:agg}.

\begin{claim}\label{claim:agg}
$\mathsf{Aggregate}(\mathcal{S},R)$ runs in $O(k^2d)$ time and meets its output guarantees. 
\end{claim}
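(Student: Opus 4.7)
The plan is to verify both the runtime and correctness claims of Algorithm~\ref{alg:agg} directly from its structure. For the runtime, I would note that the outer loop runs at most $k$ times, and for each candidate $y_i$ we must compute $\norm{y_i - y_j}_2$ for all $j \in [k]$, each of which takes $O(d)$ time. Summing over the outer loop yields the claimed $O(k^2 d)$ bound.

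For correctness, I would proceed in two steps, letting $G \subseteq [k]$ denote the (unknown) set of ``good'' indices with $\norm{y_i - z}_2 \leq R/3$, so that $|G| \geq 0.51k$ by hypothesis. First I would argue that the algorithm always returns some point (i.e.\ does not exit the loop without returning): for any $i \in G$ and any $j \in G$, the triangle inequality gives
\[
\norm{y_i - y_j}_2 \leq \norm{y_i - z}_2 + \norm{y_j - z}_2 \leq \frac{R}{3} + \frac{R}{3} = \frac{2R}{3},
\]
so every good $y_i$ has at least $|G| \geq 0.51k$ points within distance $2R/3$ and hence satisfies the return condition. Thus the algorithm returns at or before the first index in $G$.

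Second, I would argue that any $y_i$ the algorithm returns satisfies $\norm{y_i - z}_2 \leq R$. Let $T \subseteq [k]$ denote the set of indices $j$ with $\norm{y_i - y_j}_2 \leq 2R/3$, so by the return condition $|T| \geq 0.51k$. Since $|T| + |G| \geq 1.02k > k$, a pigeonhole argument gives that $T \cap G$ is nonempty; pick any $j^\star \in T \cap G$. Then
\[
\norm{y_i - z}_2 \leq \norm{y_i - y_{j^\star}}_2 + \norm{y_{j^\star} - z}_2 \leq \frac{2R}{3} + \frac{R}{3} = R,
\]
which is the desired guarantee. There is no real obstacle here; the only subtlety is choosing the threshold $0.51k$ (strictly greater than $k/2$) precisely so the pigeonhole intersection argument goes through, which is exactly what the algorithm's constants are calibrated for.
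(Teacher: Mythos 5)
Your proof is correct and follows essentially the same route as the paper's: a good index always triggers the return condition (so the algorithm terminates), and any returned point must have its $\tfrac{2R}{3}$-neighborhood intersect the set of good points by a counting/pigeonhole argument, whence the triangle inequality gives distance at most $R$ to $z$; the $O(k^2 d)$ runtime from pairwise distance computations matches as well.
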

\begin{proof}
Let $T$ be the subset of indices $i \in [k]$ such that $\norm{y_i - z} \leq \frac R 3$.  Whenever the algorithm tests $y_i$ for some $i \in T$, it will be returned and satisfies the desired properties.  Now consider any $y_i$ returned by the algorithm. The ball of radius $\frac{2R}{3}$ around $y_i$ intersects the ball of radius $\frac R 3$ around $z$, since otherwise it can only contain at most $0.49k$ points.  Thus, $\norm{y_i  - z }_2 \leq R$. The runtime is dominated by the time it takes to do $k^2$ distance comparisons of points in $\R^d$.
\end{proof}

We remark that is possible that for $k = \Omega(\log \frac 1 \delta)$ as is the case in our applications, the runtime of Claim~\ref{claim:agg} can be improved to have a better dependence on $k$ by subsampling the points and using low-rank projections for distance comparisons.

\begin{lemma}\label{lem:hrsncorrect}
Assume $\ma$ satisfies Assumption~\ref{assume:detplus}. Then, Algorithm~\ref{alg:hrsn} meets its output guarantees in time
\[O\Par{\Par{ nd\log^3 (nd\rho)} \cdot \Par{\wsinf  s\rho^2\log d} \cdot \log^2 \frac d \delta}.\]
\end{lemma}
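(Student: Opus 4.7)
The plan is to first invoke Lemma~\ref{lem:detundergennoise} to conclude that $\ma$ satisfies Assumption~\ref{assume:detplus}, and then combine with Lemma~\ref{lem:oracleunderdetplus} to conclude that Algorithm~\ref{alg:onstep} is a strong step oracle in the sense of Definition~\ref{def:osstep} with $\Cp = \Omega(1)$, $C_2 = O(1)$, $C_\xi = O(1)$, and per-call failure probability $\delta'$ equal to a sufficiently small constant. Invoking Lemma~\ref{lem:ostepisonstep} converts this into a noisy step oracle in the sense of Definition~\ref{def:onstep} (with slightly adjusted constants and twice the per-call failure probability), which is exactly what Algorithm~\ref{alg:hrsn} requires as input.

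\textbf{Per-trial contraction.} For each individual trial $j$, I would mirror the contraction analysis of Lemma~\ref{lem:combinealg}, substituting the noisy progress condition \eqref{eq:tdeltadelta}. The key identity is
\[\inprod{\gamma_t^j}{x_t^j - \sx} \;=\; R \sum_{i \in [n]} [w_t^j]_i \, [\tDelta_t^j]_i \, [\Delta_t^j]_i \;\ge\; \Cp R,\]
where $\Delta_t^j \defeq \ma(x_t^j - \sx)/R$ is the noiseless normalized residual and the inequality holds whenever the check on Line 9 of Algorithm~\ref{alg:hrsn} passes (since the check empirically verifies both \eqref{eq:tdeltadelta} and the short-flat decomposition of $\gamma_t^j$). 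Combining this progress bound with the short-flat bound on $\gamma_t^j$ exactly as in \eqref{eq:otherterms} and \eqref{eq:progressterm} yields the per-step contraction $\norm{x_{t+1}^j - \sx}_2^2 \le (1 - C_2^2/(2\Cp^2)) \norm{x_t^j - \sx}_2^2$. Iterating over the $T = \lceil 100 C_2^2/\Cp^2 \rceil$ inner iterations shrinks the error far below $R/12$, so after $s$-truncation (on Line 10 or the final line) a trial in which every check passes ends at a point $x_T^j$ with $\norm{x_T^j - \sx}_2 \le R/6$.

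\textbf{Amplification and aggregation.} I would call a trial "good" if at every iteration in which $\norm{v_t^j}_2 \ge 1/12$ the oracle's check passes: in that regime the strong step oracle's formal guarantee applies, so the check fails with probability at most $O(\delta')$ per iteration, giving per-trial failure probability at most $T \cdot O(\delta') \le 0.01$ for our choice of constants. In the complementary regime $\norm{v_t^j}_2 < 1/12$ we are already within $R/12$ of $\sx$, so any outcome of the check (continuing a step that empirically meets the oracle conditions, or breaking and truncating) leaves us within $R/6$. A Chernoff bound over $N_{\text{trials}} = 10 \log(d/\delta)$ trials then guarantees at least $0.51 N_{\text{trials}}$ good trials with probability $\ge 1 - \delta$. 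Feeding the outputs to $\mathsf{Aggregate}(\cdot, R/2)$ and applying Claim~\ref{claim:agg} returns an $s$-sparse point (every $x_T^j$ is already $s$-sparse) within $R/2$ of $\sx$, so the final truncation in Algorithm~\ref{alg:hrsn} is a no-op and the stated output guarantee holds. The runtime decomposes as $N_{\text{trials}} \cdot T$ oracle calls, bounded via Lemma~\ref{lem:oracleunderdetplus} with constant $\delta'$, plus the $O(N_{\text{trials}}^2 d)$ cost of $\mathsf{Aggregate}$, giving the claimed bound. The main obstacle is cleanly handling the regime $\norm{v_t^j}_2 < 1/12$ where the oracle's formal promises no longer apply; the resolution is that this regime already puts us close enough to $\sx$ for aggregation to succeed, so we never need to analyze the oracle's behavior outside its guaranteed regime.
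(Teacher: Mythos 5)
Your overall route is the same as the paper's: per-trial contraction mirroring Lemma~\ref{lem:combinealg} with the noisy progress term, constant per-trial failure probability, Chernoff amplification over $N_{\textup{trials}}$ trials, aggregation via Claim~\ref{claim:agg}, and runtime bookkeeping through Lemma~\ref{lem:oracleunderdetplus} plus the cost of $\mathsf{Aggregate}$. However, there is a genuine flaw in how you justify the key progress inequality. You assert that the check on Line~9 of Algorithm~\ref{alg:hrsn} ``empirically verifies both \eqref{eq:tdeltadelta} and the short-flat decomposition.'' It cannot: condition \eqref{eq:tdeltadelta} is $\sum_{i} w_i \tDelta_i \Delta_i \ge \Cp$, where $\Delta = \frac 1 R \ma(x_t - \sx)$ is the \emph{noiseless} residual, which depends on the unknown $\sx$ (equivalently on $\xi^\star$) and is therefore not observable. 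What the algorithm can and does check are \eqref{eq:prog} (i.e.\ $\sum_i w_i \tDelta_i^2 \ge \Cp$ on the noisy residual), the short-flat condition on $\gamma_t$, and the $\ell_1$ bound on $w$. The inequality \eqref{eq:tdeltadelta} is instead a \emph{probabilistic} promise of the noisy step oracle, obtained in Lemma~\ref{lem:ostepisonstep} from the stochastic domination of the output weights together with the bound $\norm{\xi}_{2,(m)} \le \sqrt m / C_\xi$, and it holds with probability $1 - \delta'$ per call only when the precondition $\frac 1 {12} \le \norm{v}_2 \le 1$ is met. Because of this, your definition of a ``good'' trial (all checks pass whenever $\norm{v_t}_2 \ge \frac1{12}$) does not let the contraction argument go through: within such a trial, check-passing alone never establishes \eqref{eq:tdeltadelta}, so nothing certifies that the accepted steps make progress. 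The paper's accounting fixes this by conditioning on the event that every oracle call whose input satisfies the precondition returns an output meeting the \emph{full} Definition~\ref{def:onstep} guarantee (including \eqref{eq:tdeltadelta}), which costs a union bound of $T\delta'$ per trial; on that event, while $\norm{v_t}_2 \ge \frac1{12}$ the checks pass and contraction holds, and if a check ever fails it certifies $\norm{v_t}_2 < \frac1{12}$, after which truncation gives $\norm{x_T^j - \sx}_2 \le \frac R 6$ as in \eqref{eq:roundingdouble}.

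A secondary point: your claim that in the regime $\norm{v_t^j}_2 < \frac1{12}$ ``any outcome of the check \ldots leaves us within $R/6$'' is asserted without proof, and it is not automatic. Unlike the noiseless setting, where the checked condition \eqref{eq:prog} \emph{is} the progress condition (so check-passing alone implies contraction regardless of the oracle's precondition), in the noisy setting a step whose empirical checks pass but for which \eqref{eq:tdeltadelta} fails could in principle move the iterate away from $\sx$; controlling this requires either the oracle's probabilistic guarantee or a separate argument bounding the cross term $\sum_i w_i \tDelta_i \xi_i$ (e.g.\ via the unconditional stochastic domination of Fact~\ref{fact:winfbound}). The paper is admittedly terse on this regime as well, but it does not rest its progress bound on the incorrect claim that \eqref{eq:tdeltadelta} is checkable, which is the step in your write-up that needs repair.
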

\begin{proof}
We claim that for each independent trial $j \in [N_{\text{trials}}]$, except with probability $1 - T\delta'$, the output $x_T^j$ satisfies $\|x_T^j - x^\star\|_2 \le \frac R 6$. Once we prove this, by Chernoff at least $0.51 N_{\text{trials}}$ of the trials satisfy $\|x_T^j - x^\star\|_2 \le \frac R 6$ except with probability at most $\delta$, and then we are done by Claim~\ref{claim:agg}.
 
It remains to prove the above claim. Fix a trial $j$, and drop the superscript $j$ for notational convenience. In every iteration $t$, $\tDelta \defeq \frac 1 R (\ma x_t - b)$ is given to $\onstep$. Since $b = \ma \sx + \xi^\star$, we have
\[\tDelta = \frac 1 R (\ma(x - \sx) + \xi^\star) = \ma v + \xi,\]
for $\norm{v}_2 \le 1$, $\norm{v}_1 \le 2\sqrt{2s}$, and $\norm{\xi}_{2, (m)} \le \frac{\sqrt m}{C_\xi}$, where the last inequality used the assumed bounds 
\[\norm{\xi^\star}_{2, (m)} \le R_\xi \sqrt m,\; \frac{R_\xi}{R} \le \frac 1 {C_\xi}.\]
Hence, by the assumptions on $\onstep$, it will not fail for such inputs unless $\norm{v}_2 \ge \frac 1 {12}$ is violated, except with probability $\le \delta'$. If the check in Line 10 fails, then except with probability $\le \delta'$, the conclusion $\|x_T - x^\star\|_2 \le \frac R 6$ follows analogously to Lemma~\ref{lem:combinealg}, since $v = \frac 1 R (x - x^\star)$.

The other case's correctness follows identically to the proof of Lemma~\ref{lem:combinealg}, except for one difference: to lower bound the progress term \eqref{eq:progressterm}, we use the assumption \eqref{eq:tdeltadelta} which shows
\[2\eta R \inprod{\gamma_t}{x_t - \sx} = 2\eta R \sum_{i \in [n]} w_i \tDelta_i \inprod{a_i}{v} = 2\eta R^2 \sum_{i \in [n]} w_i \tDelta_i \Delta_i \ge 2\eta R^2 \Cp.\]
Hence, following the proof of Lemma~\ref{lem:combinealg} (and adjusting for constants), whenever the algorithm does not terminate we make at least a $\frac{50}{T}$ fraction of the progress towards $x^\star$, so in $T$ iterations (assuming no step oracle failed) we will have $\|x_T - x^\star\|_2 \le \frac R 6$.

Finally, the runtime follows from combining Lemma~\ref{lem:oracleunderdetplus} (with constant failure probability) with a multiplicative overhead of $T \cdot N_{\text{trials}}$ due to the number of calls to the step oracle, contributing one additional logarithmic factor. We adjusted one of the $\log d$ terms to become a $\log \frac d \delta$ term to account for the runtime of $\mathsf{Aggregate}$ (see Claim~\ref{claim:agg}).
\end{proof}

\subsection{Designing a strong step oracle}\label{ssec:oraclenoisy}

In this section, we design a strong step oracle $\ostep(\tDelta, \ma)$ under Assumption~\ref{assume:detplus}. As in Section~\ref{ssec:oracle}, our oracle iteratively builds a weight vector $\bw$, and sets
\[\gamma_{\bw} \defeq \sum_{i \in [n]} \bw_i \tDelta_i a_i.\]
We will use essentially the same potentials as in \eqref{eq:abdef}, defined in the following:
\begin{equation}\label{eq:abtdef}
\tApot(\bw) \defeq \sum_{i \in [n]} \bw_i \tDelta_i^2,\; \tBpot(\bw) \defeq \Par{\min_{\norm{p}_2 \le  L\norm{\bw}_1} \sqmax_\mu(\gamma_{\bw} - p)} + \frac{ \norm{\bw}_1}{4CLs}.
\end{equation}

\begin{algorithm2e}[ht!]
	\caption{$\ONStep(\tDelta, \ma, \delta)$}
	\label{alg:onstep}
		\codeInput $\tDelta \in \R^n, \ma \in \R^{n \times d}$ satisfying Assumption~\ref{assume:det}, $\delta \in (0, 1)$ \;
		\codeOutput $(w, \gamma)$ such that $\gamma = \sum_{i \in [n]} w_i \tDelta_i a_i$, and if there is $v \in \R^d$ with $\frac 1 {12} \le \norm{v}_2 \le 1$ such that $\tDelta = \ma v + \xi$ where $\norm{\xi}_{2, (m)} \le \frac{\sqrt m}{C_\xi}$, with probability $\ge 1 - \delta$, \eqref{eq:prog}, \eqref{eq:shortflat} are satisfied with
		\[\Cp = 1,\; C_2 = O\Par{1}.\]
		Furthermore, the second condition in \eqref{eq:shortflat} is satisfied with the constant $24$ rather than $6$, and there is $C_\xi = O(1)$ such that \eqref{eq:wbounds} is also satisfied.
		\codeLineSace
		
		$C \gets 3200$, $\mu \gets \frac{1}{\sqrt{Cs \log d}}$, $\eta \gets \frac 1 {K\wsinf s\rho^2\log d}$, $N' \gets \lceil \log_2 \frac 2 \delta \rceil$ \;
		\For{$0 \le k \le N'$}{
			$w_0 \gets 0_n$, $N \gets \lceil \frac {5Ln} {\eta}\rceil$ \;
			\For{$0 \le t \le N$}{
			\lIf{$\tApot(w_t) \ge 1$}{
				\codeReturn $\gamma \gets \sum_{i \in [n]} [w_t]_i \tDelta_i a_i$, $w \gets w_t$
			}
			Sample $i \sim_{\textup{unif.}} [n]$ \;
			Compute (using Lemma~\ref{lem:optimize-softmax}) $d_t \in [0, \eta \wsinf]$ maximizing to additive $O(\frac \eta n )$
			\[\Gamma_t(d) \defeq \tApot(w_t + d e_i) - C s \tBpot(w_t + d e_i)\]
			$w_{t + 1} \gets w_t + d_t e_i$ \;
			}
		}
		\codeReturn $\gamma \gets 0_d$, $w \gets 0_n$
	\end{algorithm2e}

Algorithm~\ref{alg:onstep} is essentially identical to Algorithm~\ref{alg:ostep} except for changes in constants.
We further have the following which verifies the second property in the definition of strong step oracle.
\begin{fact}\label{fact:winfbound}
The distribution of $w$ returned by Algorithm~\ref{alg:onstep} is stochastically dominated by the distribution
\[
\eta w_{\infty}^* \textup{Multinom}\left(  \frac{5Ln}{\eta} , \left( \underbrace{\frac{1}{n}, \dots , \frac{1}{n}}_n \right)  \right)
\]
\end{fact}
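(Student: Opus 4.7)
The plan is to trace through Algorithm~\ref{alg:onstep} and observe that the returned $w$ is always obtained as a sum of coordinate updates of the form $d_t e_{i_t}$, where $i_t \sim_{\textup{unif.}} [n]$ and $d_t \in [0, \eta \wsinf]$, with at most $N = \lceil 5Ln/\eta\rceil$ such updates in total. From there, the stochastic domination claim will follow by a simple coupling.

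More concretely, first I would note that because each pass of the outer ``$k$'' loop resets $w_0 \gets 0_n$, the returned $w$ comes from a single pass of the inner loop: either the weights $w_{\tau}$ at some termination time $\tau \le N$, or the zero vector $0_n$ (returned at the very end if no pass succeeded). In either case, for the realized sequence of uniform samples $(i_0, \dots, i_{\tau-1})$ drawn during the returning pass, we have coordinatewise
\[
w_j \;=\; \sum_{t=0}^{\tau-1} d_t \mathbbm{1}[i_t = j] \;\le\; \eta \wsinf \sum_{t=0}^{\tau-1} \mathbbm{1}[i_t = j],
\]
since $d_t \le \eta \wsinf$. Hence $w$ is entrywise dominated by $\eta \wsinf \cdot c$, where $c \in \mathbb{Z}_{\ge 0}^n$ is the count vector of $\tau$ i.i.d.\ uniform samples from $[n]$.

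Next, to upgrade this to domination by a $\textup{Multinom}(N,(1/n,\dots,1/n))$ vector, I would couple as follows: after observing $\tau \le N$, draw an additional $N - \tau$ independent uniform samples $i_\tau, \dots, i_{N-1}$ from $[n]$ (independent of everything else), and let $c' \in \mathbb{Z}_{\ge 0}^n$ be the count vector of all $N$ samples. Then $c'$ has the $\textup{Multinom}(N,(1/n,\dots,1/n))$ distribution by construction, and entrywise $c \le c'$, so $w \le \eta \wsinf \cdot c'$. This is precisely the definition of stochastic domination in Definition~\ref{def:stochastic-domination}, with the ``slack'' variable $C = \eta \wsinf \cdot c' - w \ge 0$.

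I do not anticipate a significant obstacle: the only subtlety is that $\tau$ is a (random) stopping time and the returned $w$ may be zero on the failure branch, but both cases are handled uniformly by the coupling argument above, which only needs the bound $d_t \le \eta \wsinf$ together with uniform sampling of $i_t$ --- and this upper bound on $d_t$ holds deterministically since Algorithm~\ref{alg:onstep} restricts its search over $d_t$ to the interval $[0, \eta \wsinf]$.
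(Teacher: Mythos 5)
Your proposal is correct and takes essentially the same route as the paper's own (two-line) proof, which likewise observes that each inspected row changes the corresponding coordinate of $w$ by at most $\eta \wsinf$ and then appeals to the iteration bound $N = \lceil 5Ln/\eta \rceil$ and the uniformity of the sampled rows; you merely spell out the entrywise bound and the coupling that the paper leaves implicit. No substantive difference in approach.
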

\begin{proof}
Every time we inspect a row, we change the corresponding entry of $w$ by at most $\eta w_{\infty}^*$. The result follows from the number of iterations in the algorithm and uniformity of sampling rows.
\end{proof}

To analyze Algorithm~\ref{alg:onstep}, we provide appropriate analogs of Lemmas~\ref{lem:bgrowth} and~\ref{lem:oracleunderdet}. Because Algorithm~\ref{alg:onstep} is very similar to Algorithm~\ref{alg:ostep}, we will largely omit the proof of the following statement, which follows essentially identically to the proof of Lemma~\ref{lem:bgrowth} up to adjusting constants.

\begin{lemma}\label{lem:bgrowthn}
Assume that the constant $K$ in Assumption~\ref{assume:detplus} is sufficiently large, and that $\tDelta = \ma v + \xi$ where $v, \xi$ satisfy the norm conditions in Assumption~\ref{assume:detplus}. Then for any $\bw \in \R^n_{\ge 0}$ such that $B(\bw) \le C^2 \mu^2 \log d$, we have
\[\E_{i \sim_{\textup{unif.}} [n]} [B(\bw + \eta w_i^\star)] \le B(\bw) + \frac{1}{2CLs} \cdot \frac{\eta}{n}.\]
\end{lemma}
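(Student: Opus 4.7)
The plan is to mirror the proof of Lemma~\ref{lem:bgrowth} almost verbatim, substituting $\tDelta$ for $\Delta$ throughout and invoking Assumption~\ref{assume:detplus} in place of Assumption~\ref{assume:det}. To begin, I would define $\pstar \defeq \trunc(\ma^\top \mw^\star \tDelta, \tfrac{1}{K\sqrt{s}})$ and $\estar \defeq \ma^\top \mw^\star \tDelta - \pstar$, which by \eqref{eq:shortflatvplus} satisfy $\norm{\pstar}_2 \le L$ and $\norm{\estar}_\infty \le \tfrac{1}{K\sqrt{s}}$. I would then set $z^{(i)} \defeq \eta w_i^\star (\tDelta_i a_i - \pstar)$ and observe that $p_{\bw} + \eta w_i^\star \pstar$ is a feasible argument in the optimization defining $\tBpot(\bw + \eta w_i^\star)$, yielding the analog of \eqref{eq:bpotvalid}. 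The remainder of the proof reduces to controlling $\frac{1}{n}\sum_i F(x + z^{(i)})$, where $F$ is the sum-of-exponentials from \eqref{eq:fdef} and $x \defeq \gamma_{\bw} - p_{\bw}$.

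The key step is the second-order Taylor bound $\exp(c) \le 1 + c + c^2$ for $|c| \le 1$ applied to $c = (2 x_j z^{(i)}_j + (z^{(i)}_j)^2)/\mu^2$, which requires a pointwise estimate $|z^{(i)}_j|/\mu \le \tfrac{1}{4C\sqrt{\log d}}$. In the noiseless case this used the uniform bound $|\Delta_i| \le 2\sqrt{2s}\rho$ coming from $\norm{v}_1 \le 2\sqrt{2s}$ and $\norm{\ma}_{\max} \le \rho$. In the noisy setting, $\tDelta_i = (\ma v)_i + \xi_i$, and the assumption $\norm{\xi}_{2,(m)} \le \sqrt{m}/C_\xi$ implies the coordinatewise bound $|\xi_i| \le \sqrt{m}/C_\xi$, giving $|\tDelta_i| \le 2\sqrt{2s}\rho + \sqrt{m}/C_\xi$. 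Taking $C_\xi$ (and equivalently the constant $K$ in Assumption~\ref{assume:detplus}) sufficiently large relative to $\sqrt{m/s}/\rho$ preserves the $\ell_\infty$ bound \eqref{eq:zinfbound}, and in turn the Taylor inequality.

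With this pointwise bound in hand, the rest of the argument carries over unchanged up to constants. The first-order sum cancels via $\sum_i z^{(i)} = \eta \estar$ together with $\norm{\estar}_\infty \le \tfrac{1}{K\sqrt{s}}$, matching \eqref{eq:linbound}. The quadratic sum is controlled by expanding $(z^{(i)}_j)^2 \le 2(\eta w_i^\star \tDelta_i a_{ij})^2 + 2(\eta w_i^\star \pstar_j)^2$ and summing, using $\sum_i w_i^\star \tDelta_i^2 \le L$ from \eqref{eq:rscplus} and $\norm{\pstar}_2 \le L$; this gives the analog of \eqref{eq:quadbound}. Combining these pieces with concavity of $\log$ exactly as in the final lines of the proof of Lemma~\ref{lem:bgrowth} yields $\E_{i \sim_{\textup{unif.}}[n]}[\tBpot(\bw + \eta w_i^\star)] \le \tBpot(\bw) + \tfrac{1}{2CLs}\cdot \tfrac{\eta}{n}$.

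The main (and essentially only) obstacle is the verification described in the second paragraph: unlike in the noiseless case, $|\tDelta_i|$ is not automatically bounded by $O(\sqrt{s}\rho)$ since a single entry of $\xi$ can have magnitude up to $\sqrt{m}/C_\xi$. This is precisely why Assumption~\ref{assume:detplus} includes the noise-tolerance parameter $C_\xi$ that may depend on the other constants; absorbing the extra term $\sqrt{m}\rho/C_\xi$ into the $O(\sqrt{s}\rho^2)$ bound on $|\tDelta_i| \rho$ reduces the proof to the noiseless case. All other steps, including the convexity manipulations, the bound $\Bpot(\bw) \le C^2 \mu^2 \log d$ on the size of $x_j/\mu$, and the final $\log(1+c) \le c$ estimate, are identical to those in Lemma~\ref{lem:bgrowth}, which is why we only need to record the modifications described above.
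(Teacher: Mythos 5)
Your proposal is correct and follows essentially the same route as the paper: reduce to the proof of Lemma~\ref{lem:bgrowth}, note that the only new term is the noise contribution $\eta w_i^\star \rho |\xi_i|$ to the pointwise bound on $z_j^{(i)}$, and control it via $|\xi_i| \le \norm{\xi}_{2,(m)} \le \sqrt{m}/C_\xi$ together with $\eta w_i^\star \le \eta \wsinf$, all other steps (feasibility of $p_{\bw} + \eta w_i^\star \pstar$, the first-order cancellation through $\estar$, the quadratic bound via \eqref{eq:rscplus}, and the concavity of $\log$) carrying over verbatim. The only difference is presentational: the paper absorbs the required inequality $\eta w_i^\star \rho|\xi_i| = O(1/(\sqrt{s}\log d))$ into an assumed relation between $m$ and $s\rho^2$ (keeping $K, C_\xi$ constants), whereas you place the slack on $K, C_\xi$ being large relative to $\sqrt{m/s}/\rho$ --- the same condition $K C_\xi \gtrsim \sqrt{m}/(\sqrt{s}\rho)$ viewed from the other side, which is harmless in the intended regime $m = \Theta(s\rho^2)$.
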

\begin{proof}
The analysis is essentially identical to that of Lemma~\ref{lem:bgrowth}; we discuss only the main difference. To apply the Taylor expansion of the exponential, Lemma~\ref{lem:bgrowth} required a bound that
\[\frac 1 \mu \Abs{z_j^{(i)}} = O\Par{\frac 1 {\sqrt{\log d}}} \impliedby \Abs{z_j^{(i)}} = O\Par{\frac 1 {\sqrt s \log d}}, \]
for all $i \in [n]$ and $j \in [d]$. Note that in the setting of Lemma~\ref{lem:bgrowth}, we took $z_j^{(i)} = \eta w_i^\star (\Delta_i a_{ij} - p^\star_{j})$. Here, we will take $z_j^{(i)} = \eta w_i^\star (\tDelta_i a_{ij} - p^\star_{j})$; bounds on all of these terms follow identically to in Lemma~\ref{lem:bgrowth}, except that $\tDelta_i = \inprod{a_i}{v} + \xi_i$, so we need to show
\[\eta w_i^\star \rho |\xi_i| = O\Par{\frac 1 {\sqrt{s}\log d}}.\]
This follows since $\eta \le \frac 1 {\log d}$ and $|\xi_i| = O(\sqrt{m})$ by assumption. Hence, as $w_i^\star \le \frac 1 m$ by definition of $m$, this is equivalent to $m = \Omega(s\rho^2)$, an explicit assumption we make. 
\end{proof}

We now give a full analysis of Algorithm~\ref{alg:onstep}, patterned off of Lemma~\ref{lem:oracleunderdet}.

\restateoracleunderdetplus*
\begin{proof}
The analysis is essentially identical to that of Algorithm~\ref{alg:ostep} in Lemma~\ref{lem:oracleunderdet}; we discuss differences here. First, the stochastic domination condition follows from Fact~\ref{fact:winfbound} for sufficiently large $C_{\xi}$, $K$.

For the remaining properties, since the algorithm runs $N' \ge \log_2 \frac 2 \delta$ times independently, it suffices to show each run meets Definition~\ref{def:osstep} with probability $\ge \half$ under the events of Assumption~\ref{assume:detplus}, assuming there exists the desired decomposition $\tDelta = \ma v + \xi$ in the sense of Assumption~\ref{assume:detplus}. Union bounding with the failure probability in Fact~\ref{fact:winfbound} yields the overall failure probability.

\paragraph{Correctness.} As in Lemma~\ref{lem:oracleunderdet}, it is straightforward to see that $\tApot$ is $1$-Lipschitz, since the value of $\eta$ is smaller than that used in Algorithm~\ref{alg:ostep}. The termination condition in iteration $T$ then again implies $\tApot(w_T) \ge 1$, and $\tBpot(w_T) \le \frac 3 {Cs}$. For $C = 3200$, this implies the short-flat decomposition with stronger parameters required by Definition~\ref{def:osstep}, as well as the $\norm{w_T}_1$ bound. 

\paragraph{Success probability.} As in Lemma~\ref{lem:oracleunderdet}, the expected growth in $\Phi_t$ in any iteration where $\Pr[\tApot(w_t) \ge 1] \le \half$ is $\ge \frac {\eta} {4Ln}$. Hence, running for $\ge \frac{5Ln}{\eta}$ iterations and using $\Phi_t - \Phi_0 \le 2$ yields the claim.

\paragraph{Runtime.} This follows identically to the analysis in Lemma~\ref{lem:oracleunderdet}. 

\end{proof}

\subsection{Deterministic assumptions for noisy regression}\label{ssec:detassumenoisy}

In this section, we prove Lemma~\ref{lem:detundergennoise}, restated here for completeness. The proof will build heavily on our previous developments in the noiseless case, as shown in Section~\ref{ssec:detassume}.

\restatedetundergennoise*
\begin{proof}
The analysis is largely similar to the analysis of Lemma~\ref{lem:assume==RIP}; we will now discuss the differences here, which are introduced by the presence of the noise term $\xi$. There are three components to discuss: the upper and lower bounds in \eqref{eq:rscplus}, and the decomposition \eqref{eq:shortflatvplus}.

Regarding the bounds in \eqref{eq:rsc}, by changing constants appropriately in Definition~\ref{def:wrip}, we can assume that $\ma$ satisfies the second property in Assumption~\ref{assume:det} with the parameters $\frac 4 L$ and $\frac L 4$. In particular, for $\Delta = \ma v$, we then have
\[\frac 4 L \le \sum_{i \in [n]} w_i^\star \Delta_i^2 \le \frac L 4.\]
Recall that $\tDelta = \Delta + \xi$ for some $\norm{\xi}_{2, (m)} \le \frac{\sqrt m}{C_\xi}$. Hence,
\begin{align*}
\sum_{i \in [n]} w_i^\star \tDelta_i^2 &\le 2\sum_{i \in [n]} w_i^\star \Delta_i^2 + 2\sum_{i \in [n]} w_i^\star \xi_i^2 \\
&\le \frac L 2 + 2\Par{\frac 1 m \norm{\xi}_{2, (m)}^2} \le L,
\end{align*}
for an appropriately large $C_\xi^2 \ge \frac 4 L$. Here the first inequality used $(a + b)^2 \le 2a^2 + 2b^2$, and the second inequality used that the largest $\sum_{i \in [n]} w_i^\star \xi_i^2$ can be subject to $\norm{w^\star}_1 = 1$ and $\norm{w^\star}_\infty \le \frac 1 m$ is attained by greedily choosing the $m$ largest coordinates of $\xi$ by their magnitude, and setting $w_i^\star = \frac 1 m$ for those coordinates. This gives the upper bound in Assumption~\ref{assume:detplus}, and the lower bound follows similarly: for appropriately large $C_\xi^2 \ge \frac L 2$,
\begin{align*}
\sum_{i \in [n]} w_i^\star \tDelta_i^2 &\ge \half \sum_{i \in [n]} w_i^\star \Delta_i^2 - \half \sum_{i \in [n]} w_i^\star \xi_i^2 \\
&\ge \frac 2 L - \half \Par{\frac 1 m \norm{\xi}_{2, (m)}^2} \ge \frac 1 L.
\end{align*}
Lastly, for the decomposition required by \eqref{eq:shortflatvplus}, we will use the decomposition of Lemma~\ref{lem:detundergen} for the component due to $\sum_{i \in [n]} w_i^\star \Delta_i a_i$; in particular, assume by adjusting constants that this component has a $(\frac L 2, \frac{1}{2K\sqrt s})$ short-flat decomposition. It remains to show that
\[\sum_{i \in [n]} w_i^\star \xi_i a_i = \ma^\top \mw^\star \xi.\]
also admits a $(\frac L 2, \frac{1}{2K\sqrt s})$ short-flat decomposition, at which point we may conclude by the triangle inequality. Let $u = (\mw^\star)^{\half} \xi$; from earlier, we bounded 
\[\norm{u}_2^2 \le \frac 1 m \norm{\xi}_{2, (m)}^2 \implies \norm{u}_2 \le \frac 1 {C_{\xi}}. \]
Hence, applying Lemma~\ref{lem:atbound} using the RIP matrix $(\mw^\star)^{\half} \ma$ with appropriate parameters yields the conclusion, for large enough $C_{\xi}$. In particular, the $\ell_2$-bounded part of the decomposition follows from Lemma~\ref{lem:atbound}, and the proof of the $\ell_\infty$-bounded part is identical to the proof in Lemma~\ref{lem:assume==RIP}.
\end{proof}

\subsection{Putting it all together}

We now prove our main result on noisy recovery.

\restatenoisy*
\begin{proof}
Our algorithm will iteratively maintain a guess $R_{\text{guess}}$ on the value of $\frac 1 {\sqrt m}\norm{\xi^\star}_{2, (m)}$, initialized at $R_{\text{guess}} \gets R_1$. For each value of $R_{\text{guess}} \ge R_\xi$, the hypothesis of Algorithm~\ref{alg:hrsn} is satisfied, and hence using a strategy similar to the proof of Theorem~\ref{thm:exact} (but terminating at accuracy $R = O(R_{\text{guess}})$ where the constant is large enough to satisfy the assumption $R \ge C_\xi R_{\text{guess}}$) results in an estimate at distance $R$ with probability at least $1 - \delta$, with runtime
\[
O\Par{\Par{ nd\wsinf  s\rho^2\log^4 (nd\rho)\log^2\Par{\frac{d}{\delta} \cdot \log\Par{\frac{R_0}{R_{\textup{final}}}} }} \cdot \rho^2 \log\Par{\frac{R_0}{R_{\textup{final}}}} }.
\]
The runtime above follows from Lemma~\ref{lem:hrsncorrect}.

Our overall algorithm repeatedly halves $R_{\text{guess}}$, and outputs the last point returned by a run of the algorithm where it can certify a distance bound to $\sx$ of $R = C_\xi R_{\text{guess}}$. We use $R_{\text{final}}$ to denote $C_\xi R_{\text{guess}}$ on the last run. Clearly for any $R_{\text{guess}} \ge R_\xi$ this certification will succeed, so we at most lose a factor of $2$ in the error guarantee as we will have $R_{\text{final}} \le 2C_\xi R_\xi$. The final runtime follows from adjusting $\delta$ by a factor of $O(\log \frac{R_1}{R_{\text{final}}})$ to account for the multiple runs of the algorithm.
\end{proof} 	
	\newpage
	\bibliographystyle{alpha}	
\newcommand{\etalchar}[1]{$^{#1}$}

	\newpage
	\begin{appendix}

\section{Greedy and non-convex methods fail in the semi-random setting} \label{app:failure}

In this section, we show how a few standard, commonly-used non-convex or greedy methods can fail (potentially quite drastically) in the semi-random adversary setting.  The two algorithms that we examine are Iterative Hard Thresholding and Orthogonal Matching Pursuit \cite{blumensath2009iterative, tropp2007signal}. We believe it is likely that similar counterexamples can be constructed for other, more complex algorithms such as CoSaMP \cite{needell2009cosamp}. For simplicity in this section, we will only discuss the specific semi-random model introduced in Definition~\ref{def:prip}, where $\ma$ is pRIP, i.e.\ it contains an unknown RIP matrix $\mg$ as a subset of its rows.

\subsection{Iterative hard thresholding}

The iterative hard thresholding algorithm \cite{blumensath2009iterative} involves initializing $x_0 = 0$ and taking 
\[
x_{t+1} = H_s \left( x_t - \frac{1}{n} \ma^\top (b - \ma x_t  ) \right) 
\]
where $H_s$ zeroes out all but the $s$ largest entries in magnitude (ties broken lexicographically).  We can break this algorithm in the semi-random setting by simply duplicating one row many times.

\paragraph{Hard semi-random adversary.}  Let $n = Cm$ for some sufficiently large constant $C$.  The first $m$ rows of $\ma$ are drawn independently from $\Nor(0, \id)$.  Now draw $v \sim \Nor(0,\id)$, except set the first entry of $v$ to $1$.   We set the last $(C - 1)m $ rows of $\ma$ all equal to $v$.  We will set the sparsity parameter $s = 1$ and let $\sx = (1 , 0, \ldots , 0)$.  We let $b = \ma \sx$.  
\begin{proposition}
With $\ma,b$ generated as above, with high probability, iterative hard thresholding does not converge.
\end{proposition}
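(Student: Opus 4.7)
The plan is to show that, with high probability over the random $\{g_i\}_{i=1}^m$ and $v$, every IHT iterate $x_t$ (for $t \ge 1$) is supported on a coordinate other than $1$. Since each $x_t$ is $1$-sparse and $\sx = e_1$, this forces $\norm{x_t - \sx}_2 \ge 1$ for every such $t$, which rules out convergence.

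First I would analyze the initial gradient direction. Using $b = \ma \sx = \ma_{:,1}$ together with $v_1 = 1$, every coordinate satisfies $(\ma^\top b)_j = \sum_{i=1}^m g_{i,j} g_{i,1} + (C-1)m\,v_j$. By Bernstein concentration and a union bound over $[d]$, the Gaussian contribution concentrates at $m$ for $j=1$ and at $0$ up to $O(\sqrt{m \log d})$ for $j > 1$, so $(\ma^\top b)_1 \approx Cm$ and $(\ma^\top b)_j \approx (C-1)m\,v_j$ for $j > 1$. For $d$ larger than a small constant, a standard Gaussian maxima bound gives $\max_{j > 1}|v_j| > C/(C-1)$ with high probability, so the maximizer $j^\star \defeq \argmax_j |(\ma^\top b)_j|$ differs from $1$ and $x_1 = H_1(\tfrac{1}{n}\ma^\top b) = \beta_1 e_{j^\star}$ with $\beta_1 \approx \tfrac{C-1}{C}v_{j^\star}$.

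Next I would inductively establish that if $x_t = \beta_t e_{k_t}$ with $k_t \ne 1$, then the IHT support index $k_{t+1} \ne 1$. Using the decomposition $\ma^\top \ma = \ma_G^\top \ma_G + (C-1)m\, v v^\top$ where $\ma_G$ denotes the Gaussian block, together with the entrywise bound $\ma_G^\top \ma_G \approx m \id$ up to $O(\sqrt{m \log d})$ off-diagonal, the post-gradient-step point $y_{t+1} \defeq x_t + \tfrac{1}{n}\ma^\top(b - \ma x_t)$ satisfies to leading order
\begin{align*}
[y_{t+1}]_1 &\approx 1 - \tfrac{C-1}{C}\beta_t v_{k_t}, \\
[y_{t+1}]_j &\approx \tfrac{C-1}{C}\, v_j\, (1 - \beta_t v_{k_t}) \quad \text{for } j \notin \{1, k_t\}, \\
[y_{t+1}]_{k_t} &\approx \tfrac{C-1}{C}\bigl[v_{k_t} + \beta_t (1 - v_{k_t}^2)\bigr].
\end{align*}
A case analysis over $\beta_t \in \R$ then shows that, provided $C$ is a sufficiently large absolute constant and $d$ is chosen so that $1 < \max_j |v_j| < C - 1$ with high probability (which holds for moderate $d$), the quantity $|[y_{t+1}]_1|$ is always strictly dominated by $\max\{|[y_{t+1}]_{j^\star}|, |[y_{t+1}]_{k_t}|\}$: outside the transition regime this follows from $|v_{j^\star}| > 1 = |v_1|$, while near the spurious fixed point $\beta_t v_{k_t} \approx 1$ one has $|[y_{t+1}]_1| \approx 1/C$ but $|[y_{t+1}]_{k_t}| \approx (C-1)/(C|v_{k_t}|) > 1/C$ precisely because $|v_{k_t}| < C - 1$. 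Hence $k_{t+1} \ne 1$, closing the induction.

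The main obstacle will be making the uniform comparison in the inductive step rigorous across all $\beta_t \in \R$ and all possible supports $k_t \in [d] \setminus \{1\}$; one must control the lower-order Gaussian cross-terms $\sum_i g_{i,j} g_{i,k}$ uniformly via a Bernstein union bound over the $O(d^2)$ coordinate pairs (which is benign as long as $m$ is at least polylogarithmic in $d$), and carefully handle the transition regime where the dominating coordinate of $y_{t+1}$ switches from $j^\star$ to $k_t$. The parameter choice of $C$ a large absolute constant with $d$ in the moderate range $e^{\Omega(1)} \le d \le e^{(C-1)^2/2}$ ensures $\max_j |v_j|$ lies strictly between $1$ and $C-1$ with high probability, which is exactly what is needed for the spurious fixed-point comparison to close with a nonzero margin.
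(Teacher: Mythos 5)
Your route is genuinely different from the paper's, and it does not close. The paper's proof is a short blow-up argument that lives in the large-$d$ regime: with high probability $\max_j |v_j| = \Omega(\sqrt{\log d})$, so the very first thresholded iterate sits on a coordinate where $|v_j| = \Omega(\sqrt{\log d})$ with coefficient $\Omega(\sqrt{\log d})$; the residuals on the $(C-1)m$ duplicated rows are then $1 - \beta_t v_{k_t} = \Omega(\log d)$ in magnitude, and iterating, the residuals (and iterates) grow without bound, so IHT diverges and in particular its support is always on a large coordinate of $v$, never coordinate $1$. This works for any fixed constant $C$ once $d$ is large. You instead try a support-tracking induction that must hold uniformly over all $\beta_t \in \R$ and all $k_t \neq 1$, and to make the spurious-fixed-point comparison work you are forced into the opposite regime, $\max_j|v_j| < C-1$, i.e.\ $d \le e^{(C-1)^2/2}$. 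In the regime the construction is actually aimed at ($C$ an absolute constant, $d \to \infty$, so that ``with high probability'' is meaningful), your argument simply does not apply, whereas the paper's does; and even granting your induction, ``support never equals $1$'' only shows IHT never converges to $\sx$, not the non-convergence (residual blow-up) the paper establishes.

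There is also a quantitative error inside the inductive step itself. Near the spurious fixed point $\beta_t v_{k_t} \approx 1$ you drop the term $v_{k_t}(1-\beta_t v_{k_t})$ from $[y_{t+1}]_{k_t} = \tfrac{C-1}{C}\bigl(v_{k_t}(1-\beta_t v_{k_t}) + \tfrac{\beta_t v_{k_t}}{v_{k_t}}\bigr)$, but to suppress coordinate $j^\star$ you only need $|1-\beta_t v_{k_t}| = O(1/C)$, so this dropped term can be as large as $\Theta(M/C)$ with $M \defeq \max_j|v_j|$, while the margin you rely on, $\tfrac{C-1}{C|v_{k_t}|} - \tfrac 1 C \approx \tfrac{C-1-M}{CM}$, is only $\Theta(1/(CM))$-to-$\Theta(1/C)$ scale. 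Once $M \gtrsim \sqrt{C}$ (well inside your stated range $M < C-1$) there are choices of $(\beta_t, v_{k_t})$ with $|v_{k_t}| = M$ for which $|[y_{t+1}]_1| \approx 1/C$ is no longer dominated by $\max\{|[y_{t+1}]_{j^\star}|, |[y_{t+1}]_{k_t}|\}$, so the ``nonzero margin'' claim fails and the induction is not closed; repairing it requires something like $C = \Omega(\log d)$ (plus correspondingly stronger control of the Gaussian cross-terms, which at this scale need $m$ comparable to $C^2\log d$, not just polylogarithmic). The fix is to abandon the uniform fixed-point comparison and argue as the paper does: use the largeness of $\max_j|v_j|$ to show the selected coordinate always has $|v_j| = \Omega(\sqrt{\log d})$ and the duplicated-row residuals grow by a factor $\Omega(\sqrt{\log d})$ each step.
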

\begin{proof}
With high probability, some coordinate of $v$ is $\Omega(\sqrt{\log d})$.  We then have that some entry of $\ma^\top b$ has magnitude at least $\Omega(n \sqrt{\log d}  )$ with high probability. Thus, the next iterate $x_{1}$ must have exactly one nonzero entry that has magnitude at least $\Omega(\sqrt{\log d})$ and furthermore, this entry must correspond to some coordinate of $v$ that has magnitude at least $\Omega(\sqrt{\log d})$.  However, this means that the residuals in all of the rows that are copies of $v$ are at least $\Omega(\log d )$.  In the next step, by the same argument, we get that the residuals blow up even more and clearly this algorithm will never converge.  In fact, $x_t$ will never have the right support because its support will always be on one of the entries where $v$ is large.
\end{proof}

\subsection{Orthogonal matching pursuit}
The orthogonal matching pursuit algorithm \cite{tropp2007signal} involves initializing $x_0 = 0$ and keeping track of a set $S$ (that corresponds to our guess of the support of $\sx$).  Each iteration, we choose a column $c_j$ of $\ma$ that maximizes $\frac{| \langle c_j , r_t \rangle |}{\norm{c_j}_2^2}$ and then add $j$ to $S$ (where $r_t = \ma x_t - b$ is the residual).  We then add $j$ to $S$ and project the residual onto the orthogonal complement of all coordinates in $S$.  We show that we can again very easily break this algorithm in the semi-random setting.

\paragraph{Hard semi-random adversary.} Let $n = 3m$.  First, we draw all rows of $\ma$ independently from $\Nor(0,\id)$.  Next, we modify some of the entries in the last $2m$ rows.  Let $s$ be the sparsity parameter. Let $\sx = (s^{-\half}, \ldots , s^{-\half}, 0, \ldots, 0 )$ be supported on the first $s$ coordinates and set $b = \ma \sx$.  Now we modify the columns of $\ma$ (aside from the first $s$ so $\ma \sx$ is not affected). We set the last $2m$ entries of one of these columns $c_j$ to match those of $b$.
\begin{proposition}
With $\ma, b$ generated as above, with high probability, orthogonal matching pursuit does not recover $\sx$.
\end{proposition}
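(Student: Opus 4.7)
The overall strategy is to analyze the very first greedy step of OMP and show that, with high probability, the normalized correlation score $|\langle c_k, b\rangle|/\|c_k\|_2^2$ is strictly larger for the adversary's planted column $c_j$ (with $j > s$) than for any of the $s$ true columns $c_1, \ldots, c_s$. Once this comparison is established, the conclusion is immediate: OMP's active set $S$ grows monotonically, so picking $j$ at the first step forces $j \in S$, and since $|S|$ terminates at $s$, this means $[s] \not\subseteq S$. The returned vector is supported on $S$, so it cannot agree with $\sx = s^{-1/2}\mathbf{1}_{[s]}$, whose support is precisely $[s]$ with every entry nonzero.

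To carry out the comparison, first I would observe that because the adversary only modifies column $c_j$ with $j > s$, the product $b = \ma \sx$ is unaffected, so every entry $b_i = s^{-1/2}\sum_{k=1}^s a_{ik}$ is an i.i.d.\ standard Gaussian. Standard concentration (chi-squared, Bernstein, Hanson-Wright) then gives, with high probability and assuming $m \gg \log d$, that $\|c_k\|_2^2 = (3+o(1))m$ for every column $k$ (including $c_j$, whose last $2m$ entries equal those of $b$) and $\sum_{i=m+1}^{3m} b_i^2 = (2+o(1))m$. For a true column $c_k$ with $k \leq s$, expanding $b = s^{-1/2}\sum_{k'\leq s} c_{k'}$ splits $\langle c_k, b\rangle$ into a diagonal term $s^{-1/2}\|c_k\|_2^2$ and $s - 1$ off-diagonal Gaussian inner products; the latter contribute a lower-order error, yielding normalized score $\approx s^{-1/2}$. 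For the planted column,
\[
\langle c_j, b\rangle = \sum_{i=1}^{m} [c_j]_i b_i + \sum_{i=m+1}^{3m} b_i^2,
\]
where the first sum is an inner product of two independent length-$m$ Gaussian vectors and is $O(\sqrt{m\log d})$, while the second sum concentrates to $\approx 2m$; dividing by $\|c_j\|_2^2 \approx 3m$ yields score $\approx 2/3$.

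Since $s^{-1/2} \leq 1/\sqrt{3} < 2/3$ for any $s \geq 3$, we conclude that with high probability OMP picks $c_j$ on its first iteration, and the recovery failure follows. The main (mild) technical obstacle is tracking the independence structure carefully when invoking concentration: most importantly, the first $m$ entries of $c_j$ are independent of $b_{1:m}$ because $b_{1:m}$ depends only on the unmodified columns $c_1, \ldots, c_s$, and (conditional on $c_k$) the off-diagonal terms $\langle c_k, c_{k'}\rangle$ for distinct $k, k' \leq s$ are mutually independent, so their sum concentrates around zero. All required tail bounds are standard, and no new probabilistic machinery is needed.
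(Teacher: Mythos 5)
Your proposal is correct and follows essentially the same route as the paper: both arguments analyze the first greedy selection and show that, with high probability, the planted column $c_j$ (whose last $2m$ entries coincide with $b$) attains a strictly larger normalized correlation $\Abs{\inprod{c_j}{b}}/\norm{c_j}_2^2$ than every column in the true support, so OMP commits to an index outside $\textup{supp}(\sx)$ and cannot recover $\sx$. Your version simply makes the comparison quantitative (score $\approx 2/3$ versus $\approx s^{-1/2}$, requiring $s$ bounded below by a small constant and $m \gg \log d$), which matches the paper's informal one-line justification and its assumption $s \ge 10$.
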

\begin{proof}
With high probability (as long as $s \geq 10$), the column $c_j$ is the one that maximizes  $\frac{| \langle c_j , b \rangle |}{\norm{c_j}_2^2}$ because its last $2m$ entries exactly match those of $b$.  However, $j$ is not in the support of $\sx$ so the algorithm has already failed.       
\end{proof}

We further make the following observation.
\begin{remark}
By modifying other columns of $\ma$ as well, the semi-random adversary can actually make the algorithm pick all of the wrong columns in the support.
\end{remark}

\subsection{Convex methods} 

Now we briefly comment on how convex methods are robust, in the sense that they can still be used in the semi-random setting (but may have substantially slower rates than their fast counterparts).  In the noiseless observations case, this is clear because the additional rows of $\ma$ are simply additional constraints that are added to the standard $\ell_1$ minimization convex program.  

In the noisy case, let the target error be $\theta = \norm{\xi^*}_{2, (m)}$.  We then solve the modified problem 
\begin{align*}
&\min \norm{x}_1 \\
&\text{subject to } \norm{\ma x - b}_{2,(m)} \leq \theta .
\end{align*}
Note that the above is a convex program and thus can be solved in polynomial time by e.g.\ cutting plane methods \cite{GLS1988}.  Also, note that $\sx$ is indeed feasible for the second constraint.  Now for the solution $\widehat{x}$ that we obtain, we must have $\norm{\widehat{x}}_1 \leq \norm{\sx}_1$ and 
\[
\norm{\ma(\sx - \widehat{x})}_{2,(m)} \leq 2\theta .
\]
Let $\mg$ be the set of $m$ randomly generated rows of $\ma$ under our semi-random adversarial model.  The previous two conditions imply
\begin{itemize}
    \item $\norm{\widehat{x} - \sx}_1 \leq 2\sqrt{s} \norm{\sx - \widehat{x}}_2$
    \item $\norm{\mg(\sx -  \widehat{x})}_{2} \leq 2\theta $
\end{itemize}
which now by restricted strong convexity of $\mg$ (see \cite{AgarwalNW10}) implies that $\norm{\sx -  \widehat{x}}_2 = O( \frac{\theta}{\sqrt{m}})$.  We can furthermore round $\widehat{x}$ to $s$-sparse to obtain the sparse vector $x'$, and the above bound only worsens by a factor of $2$ for $x'$ (see Lemma~\ref{lem:combinealg} for this argument).

\section{Deferred proofs}\label{app:deferred}

\restatesqmax*
\begin{proof}
Let $\mathcal{P} \subset \R^d$ be the set of $p$ such that $p$ has the same sign as $\gamma$ entrywise and $|p_j| \le |\gamma_j|$ for all $j \in [d]$. By symmetry of the $\sqmax$ and the $\ell_2$ norm under negation, the optimal $p$ lies in $\mathcal{P}$.
	
Next we claim that the function $\smax_\mu(\gamma - p)$ is $2\norm{\gamma}_2$-Lipschitz in the $\ell_2$ norm as a function of $p$, over $\mathcal{P}$. To see this, the gradient is directly computable as 
\[2(p - \gamma) \circ x
\text{ where }
x \in  \Delta^d \text{ with } x_i = \frac{\exp([\gamma_i - p_i]^2 / \mu^2)}{\sum_{j \in [n]} \exp([\gamma_j - p_j]^2 / \mu^2)}
\text{ for all } i \in [n]
\]
where  $\circ$ denotes entrywise multiplication. Thus, the $\ell_2$ norm of the derivative is bounded by $2\norm{\gamma}_2$ over $\mathcal{P}$. In the remainder of the proof, we show how to find $p \in \mathcal{P}$ which has $\ell_2$ error $\frac{\delta}{2\norm{\gamma}_2}$ to the optimal, which implies by Lipschitzness that the function value is within additive $\delta$ of optimal.

Next, since $0 \in \mathcal{P}$, we may assume without loss of generality that
	\begin{equation}\label{eq:param-lowerbound}
	\theta > \frac{\delta}{ 2 \norm{\gamma}_2}.
	\end{equation}
else we may just output $0$, which achieves optimality gap at most $2\norm{\gamma}_2 \theta$.
	
	Now, by monotonicity of $\ln$ it suffices to approximately minimize
	\[
	\sum_{j \in [d]} \exp \left( \frac{[\gamma - p]_j^2 }{\mu^2} \right).
	\]
	The sum above is always at least $d$.   First we check if $\norm{\gamma}_2 \leq \theta + \sqrt{\delta}$.  If this is true then clearly we can set $p$ so that all entries of $\gamma - p$ have magnitude at most $\sqrt{\delta}$.  This gives a solution such that
	\[
	\sqmax_\mu(\gamma - p) \leq \mu^2 \log \left( d \exp\left(\frac{\delta}{\mu^2} \right) \right) = \mu^2 \log d + \delta
	\]
	and since the value of $\sqmax$ is always at least $\mu^2 \log d$, this solution is optimal up to additive error $\delta$.  Thus, we can assume $\norm{\gamma}_2 \geq \theta + \sqrt{\delta} $  in the remainder of the proof. We also assume all entries of $\gamma$ are nonzero since if an entry of $\gamma$ is $0$ then the corresponding entry of $p$ should also be $0$. Finally by symmetry of the problem under negation we will assume all entries of $\gamma$ are positive in the remainder of the proof, such that each entry of $p$ is also positive.
	
	By monotonicity of $\sqmax$ in each coordinate (as long as signs are preserved) and the assumption that $\norm{\gamma}_2 \geq \theta + \sqrt{\delta}$, the optimal solution must have $\norm{p}_2 = \theta$. By using  Lagrange multipliers, for some scalar $\zeta$ and all $j$,
	\begin{equation}\label{eq:pjdef}
	p_j = \exp(\zeta) \cdot [\gamma - p]_j  \exp \left( \frac{[\gamma - p]_j^2}{\mu^2} \right).
	\end{equation}
	For the optimal $\zeta$ by taking $\ell_2$ norms of the quantity above, we have
	\[\theta = \norm{p}_2 = \zeta\norm{\gamma - p}_2 \cdot C \text{ for some } C \in \Brack{0, \exp\Par{\frac{\norm{\gamma}_2^2}{\mu^2}}}.\]
	Hence taking logarithms of both sides and using both the bounds \eqref{eq:param-lowerbound} and $\norm{\gamma - p}_2 \ge \sqrt{\delta}$ at the optimum, which follows from the previous discussion, we obtain
	\[\log \frac{\theta}{\norm{\gamma - p}_2} - \zeta \in \Brack{0, \frac{\norm{\gamma}_2^2}{\mu^2}} \implies \zeta \in \Brack{-\frac{\norm{\gamma}_2^2}{\mu^2} - \log\Par{\frac{2\norm{\gamma}_2^2}{\delta}}, \log\Par{\frac{\norm{\gamma}_2}{\sqrt{\delta}}}}.\]
	We next show how to compute $p$ to high accuracy given a guess on $\zeta$. Observe that if $\gamma_j > 0$, then the right-hand side of \eqref{eq:pjdef} is decreasing in $p_j$ and hence by the intermediate value theorem, there is a unique solution strictly between $0$ and $\gamma_j$ for any $\zeta$. Also, note that the location of this solution increases with $\zeta$. Let $p(\zeta)$ be the solution obtained by exactly solving \eqref{eq:pjdef} for some given $\zeta$. We have shown for all $\zeta$ that $0 \le [p(\zeta)]_j \le \gamma_j$ entrywise and hence $\norm{p(\zeta)}_2 \le \norm{\gamma}_2$ for all $\zeta$. 
	
	For a fixed $\zeta$, we claim we can estimate $p(\zeta)$ to $\ell_2$ error $\beta$ in time $O(d\log \frac{\norm{\gamma}_2}{\beta})$. To see this, fix some $\zeta$, $\mu$, and $\gamma_j$, and consider solving \eqref{eq:pjdef} for the fixed point $p_j$. We can discretize $[0, \gamma_j]$ into intervals of length $\frac{\gamma_j \beta}{\norm{\gamma}_2}$ and perform a binary search. The right-hand side is decreasing in $p_j$ and the left-hand side is increasing so the binary search yields some interval of length $\frac{\gamma_j \beta}{\norm{\gamma}_2}$ containing the fixed point $p_j$ via the intermediate value theorem. The resulting $\ell_2$ error along all coordinates is then $\beta$. 	We also round this approximate $p(\zeta)$ entrywise down in the above search to form a vector $\tp(\zeta, \beta)$ such that $\tp(\zeta, \beta) \le p(\zeta)$ entrywise and $\norm{\tp(\zeta, \beta) - p(\zeta)}_2 \le \beta$. We use this notation and it is well-defined as the search is deterministic. 
	
	In the remainder of the proof we choose the constants
	\[\alpha \defeq \frac{\delta^2}{192\norm{\gamma}_2^4},\; \beta \defeq \min\Par{\frac{\delta^2}{192\norm{\gamma}_2^3}, \frac{\delta}{4\norm{\gamma}_2}}.\]
	We define $\tp(\zeta) \defeq \tp(\zeta, \beta)$ for short as $\beta$ will be fixed. Discretize the range $[-\frac{\norm{\gamma}_2^2}{\mu^2} - \log \frac{2\norm{\gamma}_2^2}{\delta}, \log \frac{\norm{\gamma}_2}{\sqrt \delta}]$ into a grid of uniform intervals of length $\alpha$. Consider the $\zeta$ such that $\zeta \le \zeta^\star < \zeta + \alpha$. Because $p(\zeta^\star)$ is entrywise larger than $p(\zeta)$ and hence the logarithmic term on the right-hand side of \eqref{eq:pjdef} is smaller for $p(\zeta^\star)$ than $p(\zeta)$, we have
	\[ \Brack{p(\zeta)}_j \le \Brack{p\Par{\zeta^\star}}_j \le  \exp\Par{\alpha} \Brack{p\Par{\zeta}}_j.\]
	Moreover the optimal $p(\zeta^\star)$ has $\ell_2$ norm $\theta$, so $|\zeta - \zeta^\star| \le \alpha$ and $\exp(\alpha) - 1 \le 2\alpha$ imply
	\[\norm{p(\zeta) - p(\zeta^\star)}_2 \le 2\alpha \norm{p(\zeta^\star)}_2 \le 2\alpha\norm{\gamma}_2 \le \Delta \defeq \frac{\delta^2}{96\norm{\gamma}_2^3}.\]
	Consider the algorithm which returns the $\zalg$ on the search grid which minimizes $|\|\tp(\zalg)\|_2 - \theta|$ (we will discuss computational issues at the end of the proof). As we have argued above, there is a choice which yields $\norm{p(\zeta)}_2 \in [\theta - \Delta, \theta + \Delta]$ and hence
	\begin{equation}\label{eq:tprange}\norm{\tp(\zalg)}_2 \in \Brack{\theta - \Delta - \beta, \theta + \Delta + \beta}.\end{equation}
	We next claim that 
	\begin{equation}\label{eq:closep}
	\norm{p(\zalg) - p(\zeta^\star)}_2 \le \frac{\delta}{4\norm{\gamma}_2}.
	\end{equation}	
	Suppose \eqref{eq:closep} is false and $\zalg > \zeta^\star$. Then letting $u \defeq p(\zalg)$ and $v \defeq p(\zeta^\star)$, note that $u$, $v$, and $u - v$ are all entrywise nonnegative and hence
	\[\norm{u}_2^2 \ge \norm{v}_2^2 + \sum_{i \in [n]} 2u_i(u_i - v_i) + (u_i - v_i)^2 > \norm{v}_2^2 + \Par{\frac{\delta}{4\norm{\gamma}_2}}^2. \]
	Hence, we have by $\sqrt{x^2 + y^2} \ge x + \frac{y^2}{3x}$ for $0 \le y \le x$, \eqref{eq:param-lowerbound}, and $\theta \le \norm{\gamma}_2$,
	\begin{align*}
	\norm{p(\zalg)}_2 = \norm{u}_2 > \norm{v}_2 + \frac{\Par{\frac{\delta}{4\norm{\gamma}_2}}^2}{3\norm{v}_2} \ge \theta + \frac{\delta^2}{48\norm{\gamma}_2^3} \ge \theta + \Delta + 2\beta.
	\end{align*}
	So, by triangle inequality $\norm{\tp(\zalg)}_2 > \theta + \Delta + \beta$ and hence we reach a contradiction with \eqref{eq:tprange}.
	
	Similarly, suppose \eqref{eq:closep} is false and $\zalg < \zeta^\star$. Then for the same definitions of $u$, $v$, and using the inequality $\sqrt{x^2 - y^2} \le x - \frac{y^2}{3x}$ for $0 \le y \le x$, we conclude
	\[\norm{v}_2^2 > \norm{u}_2^2 + \Par{\frac{\delta}{4\norm{\gamma}_2}}^2 \implies \norm{u}_2 \le \sqrt{\norm{v}_2^2 - \Par{\frac{\delta}{4\norm{\gamma}_2}}^2} < \theta - \Delta - 2\beta. \]
	So we reach a contradiction with \eqref{eq:tprange} in this case as well.
	
	In conclusion, \eqref{eq:closep} is true and we obtain by triangle inequality the desired
	\[\norm{\tp(\zalg) - p(\zeta^\star)}_2 \le \frac{\delta}{4\norm{\gamma}_2} + \beta\sqrt{d} \le \frac{\delta}{2\norm{\gamma}_2}. \]
	The complexity of the algorithm is bottlenecked by the cost of finding $\tp(\zalg)$. For each $\zeta$ on the grid the cost of evaluating $\tp(\zeta)$ induces a multiplicative $d\log(\frac{\norm{\gamma}_2^2}{\delta})$ overhead. The cost of performing the binary search on the $\zeta$ grid is a multiplicative $\log(\frac{\norm{\gamma}_2^2}{\mu\sqrt{\delta}})$ overhead; note that a binary search suffices because $\norm{\tp(\zalg)}_2$ is monotonic by our consistent choice of rounding down, and hence $|\norm{\tp(\zalg)}_2 - \theta|$ is unimodal.
\end{proof} 	\end{appendix}

\end{document}